\pgfplotsset{compat=newest}
\newcommand{\redsquare}{\tikz\fill[red!60!white] (0,0) rectangle (2mm,2mm);}
\newcommand{\bluesquare}{\tikz\fill[blue!60!white] (0,0) rectangle (2mm,2mm);}
\newcommand{\greensquare}{\tikz\fill[green!70!black] (0,0) rectangle (2mm,2mm);}
\newcommand{\pinksquare}{\tikz\fill[pink!80!white] (0,0) rectangle (2mm,2mm);}
\newcounter{theorem}
\newcounter{definition}
\newcounter{lemma}
\newcounter{claim}
\newcounter{problem}
\newcounter{proposition}
\newcounter{corollary}
\newcounter{construction}
\newcounter{example}
\newcounter{xca}
\newcounter{comments}
\newcounter{remark}
\newcounter{assumption}
\newtheorem{theorem}[theorem]{Theorem}
\newtheorem{lemma}[lemma]{Lemma}
\newtheorem{definition}[definition]{Definition}
\newtheorem{remark}[remark]{Remark}
\newtheorem{assumption}[assumption]{Assumption}
\numberwithin{equation}{section}
\DeclareFontFamily{U}{stix2bb}{}
\DeclareFontShape{U}{stix2bb}{m}{n} {<-> stix2-mathbb}{}
\newtcolorbox{resp}[1][]{%
	enhanced jigsaw,%
	colback=gray!5!white,%
	colframe=gray!80!black,%
	size=small,%
	boxrule=1pt,%
	halign title=flush center,%
	coltitle=black,%
	breakable,%
	drop shadow=black!50!white,%
	attach boxed title to top left={xshift=1cm,yshift=-\tcboxedtitleheight/2,yshifttext=-\tcboxedtitleheight/2},%
	minipage boxed title=3cm,%
	boxed title style={%
		colback=white,%
		size=fbox,%
		boxrule=1pt,%
		boxsep=2pt,%
		underlay={%
			\coordinate (dotA) at ($(interior.west) + (-0.5pt,0)$);
			\coordinate (dotB) at ($(interior.east) + (0.5pt,0)$);
			\begin{scope}[gray!80!black]
				\fill (dotA) circle (2pt);
				\fill (dotB) circle (2pt);
			\end{scope}
		}%
	},%
	#1%
}
\DeclareRobustCommand\sampleline[1]{%
	\tikz\draw[#1] (0,0) (0,\the\dimexpr\fontdimen22\textfont2\relax)
	-- (2em,\the\dimexpr\fontdimen22\textfont2\relax);%
}
\newcommand{\R}{{\mathbb{R}}}
\def\BibTeX{{\rm B\kern-.05em{\sc i\kern-.025em b}\kern-.08em
		T\kern-.1667em\lower.7ex\hbox{E}\kern-.125emX}}
\definecolor{blue(ryb)}{rgb}{0.01, 0.28, 1.0}
\definecolor{fashionfuchsia}{rgb}{0.96, 0.0, 0.63}
\let\NAT@parse\undefined
\definecolor{cerise}{rgb}{0.87, 0.19, 0.39}
\def\@opargbegintheorem#1#2#3{\textit{#1\ #2} \textit{(#3):}}
\begin{document}
	
\title{From Formal Methods to Data-Driven Safety Certificates of Unknown Large-Scale Networks}
\author{ \IEEEmembership{}	
	\thanks{}
}

\author{Omid Akbarzadeh, \IEEEmembership{Student Member,~IEEE}, Behrad Samari, \IEEEmembership{Student Member,~IEEE}, \\Amy Nejati, \IEEEmembership{Senior Member,~IEEE}, and Abolfazl Lavaei, \IEEEmembership{Senior Member,~IEEE}
	\thanks{All the authors are with the School of Computing, Newcastle University, United Kingdom. Emails:
 \texttt{o.akbarzadeh2@newcastle.ac.uk},
  \texttt{b.samari2@newcastle.ac.uk},
   \texttt{amy.nejati@newcastle.ac.uk},
 \texttt{abolfazl.lavaei@newcastle.ac.uk}. }
}

\maketitle
\begin{abstract}
In this work, we propose a data-driven scheme within a compositional framework with noisy data  to design robust safety controllers in a fully decentralized fashion for large-scale interconnected networks with unknown mathematical dynamics. Despite the network’s high dimensionality and the inherent complexity of its unknown model, which make it intractable, our approach effectively addresses these challenges by (i) treating the network as a composition of smaller subsystems, and (ii) collecting noisy data from each subsystem's trajectory to design a control sub-barrier certificate (CSBC) and its corresponding local controller. To achieve this, our proposed scheme only requires a noise-corrupted \emph{single input-state trajectory} from each unknown subsystem up to a specified time horizon, satisfying a certain rank condition. Subsequently, under a small-gain compositional reasoning, we compose those CSBC, derived from noisy data, and formulate a control barrier certificate (CBC) for the unknown network, ensuring its safety over an infinite time horizon, while providing \emph{correctness guarantees}. We offer a \emph{data-dependent} sum-of-squares (SOS) optimization program for computing CSBC alongside local controllers of subsystems. We illustrate that while the computational complexity of designing a CBC and its safety controller grows \emph{polynomially} with network dimension using SOS optimization, our compositional data-driven approach significantly reduces it to a \emph{linear scale} concerning the \emph{number of subsystems}. We demonstrate the capability of our data-driven approach on multiple physical networks involving unknown models and a range of interconnection topologies.
\end{abstract}

\begin{IEEEkeywords}
Data-driven safety certificates, unknown large-scale networks, compositional techniques, formal methods
\end{IEEEkeywords}

\section{Introduction}\label{sec:intro}

In recent years, large-scale interconnected networks, such as healthcare systems, power grids, and transportation networks, have garnered significant attention owing to their practical applications in real-world scenarios. Formal verification and controller design for such complex networks, aimed at enforcing high-level logic properties including those expressed as linear temporal logic (LTL) formulae~\cite{baier2008principles}, pose inherent challenges. These difficulties primarily stem from (i) the presence of uncountable state and input sets, (ii) the large dimensionality of the underlying network, (iii) the complexity of logical requirements, and (iv) the \emph{absence} of mathematical models in numerous real-life scenarios. To address the underlying difficulties, existing literature predominantly relies on utilizing \emph{finite abstractions} to approximate original models with simpler ones featuring discrete state sets~\cite{tabuada2009verification}. However, abstraction-based techniques, centered on discretizing state and input sets, may not be applicable to practical applications due to the \emph{exponential} state-explosion problem (see \emph{e.g.,}~\cite{APLS08,Girard-Approximation,julius2009approximations,zamani2014symbolic}).

An alternative and promising solution for formal analysis of dynamical systems, aimed at circumventing the state explosion problem, involves the utilization of control barrier certificates, as a \emph{discretization-free} approach, initially proposed in~\cite{Pranja}. In particular, barrier certificates, akin to Lyapunov-like functions, are designed to satisfy specific conditions on the barrier function itself and its evolution over the flow of the system. By establishing an initial level set of barrier certificates from a set of initial states, an unsafe region can be separated from system trajectories. Consequently, such barrier certificates offer a formal (probabilistic) guarantee over the safety of the system (see \emph{e.g.,}~\cite{ames2019control,wieland2007constructive,xiao2023safe, NEURIPS2020_barrier, NEURIPS2021_Barrier,lavaei2022compositional,wooding2024protect}).

{\bf Primary Impediments.}  While barrier certificates offer formal analysis capabilities over dynamical systems, they also introduce two primary challenges. First, existing methods for searching for barrier certificates, like SOS optimization, do not scale well with system dimensionality, often leading to computational intractability, particularly when dealing with large-scale networks. To address this issue, a promising solution involves the development of a compositional approach, which treats a large-scale network as an interconnection of lower-dimensional subsystems and aims to design control sub-barrier certificates at the level of subsystems~\cite{lavaei2024scalable,zaker2024compositional,ref4}.  

It is worth noting that while compositional techniques can be used for both abstraction-based and discretization-free approaches, their use differs fundamentally. More precisely, since in abstraction-based methods constructing a symbolic model for the entire network is infeasible due to high dimensionality, symbolic models of subsystems are first constructed, by establishing similarity relations between actual and abstract subsystems. Then, compositional frameworks come into play, upon which it is possible to compositionally construct the symbolic model of the network together with its associated similarity relation (see \textit{e.g.,} \cite{swikir2019compositional,nejati2020compositional,lavaei2019automated}). In discretization-free approaches, however, similar to our work, the goal is to adopt a compositional framework that allows us to formally construct a safety certificate for the entire network, provided that each subsystem's conditions and certain compositional conditions are satisfied (see \textit{e.g.,}~\cite{nejati2022dissipativity,lavaei2024scalable}).

As for the second challenge, designing barrier certificates necessitates \emph{precise knowledge} of the system's mathematical dynamics, rendering model-based approaches impractical. Hence, data-driven techniques have gained popularity as they enable the utilization of system measurements for analysis, circumventing the need for explicit models. These methods can be categorized into \emph{indirect} and \emph{direct} approaches, each with distinct methodologies and application areas~\cite{Hou2013model,dorfler2022bridging,nejati2023formal,ROTULO2022110519,nejati2023data,lavaei2023symbolic}.

Specifically, \emph{indirect} data-driven methods involve system identification, leveraging a wide array of powerful tools from model-based control techniques after the identification phase. However, it faces challenges with computational complexity in \emph{two} key phases: first, during model identification, and later in solving the model-based problem.  
In contrast, \emph{direct} data-driven approaches skip the system identification step, applying system measurements directly to analyze unknown systems~\cite{dorfler2022bridging}. Despite the emergence of data-driven approaches for safety certifications of systems with unknown dynamics (\emph{e.g.,}~\cite{bisoffi2020controller,Bartocci-Data-Driven,nejati2022data,akbarzadeh2024learning,samari2024single}), these methods are primarily limited to small-scale systems and cannot handle large-scale networks with potentially over $1000$ dimensions due to the significant sample complexity problem.

We note that under the category of direct data-driven techniques, \emph{scenario-based} approaches \cite{calafiore2006scenario} can address the safety problem by gathering data from the unknown system, typically through intermediate steps involving chance constraints \cite{esfahani2014performance}. While these approaches show significant promise in providing formal guarantees for unknown systems, they rely on the assumption that the data is \emph{independent and identically distributed (i.i.d.)}. Consequently, only one input-output data pair can be extracted from each trajectory \cite{calafiore2006scenario}, necessitating the collection of multiple independent trajectories—typically millions in practical cases—to achieve a specified confidence level, as determined by a known closed-form relationship~\cite{nejati2023formal}.
	
In contrast, our data-driven approach does not necessitate having i.i.d. data, implying that one trajectory per subsystem would suffice to conduct the analysis if a rank condition is satisfied (cf. Remark~\ref{Rank-condition}). Moreover, the guarantee offered within our framework is not probabilistic, unlike the scenario approach with random sampling, which involves some degree of risk and confidence.
Despite these advantages, we should acknowledge that the scenario-based methods do not require the dictionary of monomials $\mathcal{M}_i(x_i)$, which is required in our framework. Additionally, these approaches accommodate general nonlinear systems, whereas our analysis is tailored to nonlinear systems with polynomial dynamics.

It is important to note that data-driven CBC methods developed for monolithic systems using single trajectories are not directly applicable to interconnected networks due to the computational complexity of the SOS program. We refer to related work \cite{nejati2022data}, where the proposed approach was limited to handling $2$-dimensional systems in its case study section, in contrast to our simulation results, which demonstrate scalability to systems with $2000$ dimensions.

{\bf Innovative Findings.} Our work introduces a \emph{compositional data-driven}  approach with noisy data to design fully-decentralized robust controllers that enforce safety properties over large-scale interconnected networks with unknown mathematical models.  Our framework employs control sub-barrier certificates, derived from noisy data, to synthesize a local controller for each subsystem. This is achieved based solely on a  noise-corrupted input-state trajectory collected from subsystems, subject to a specific rank condition associated with (generalization of) the persistency of excitation \cite{persistency}. We then offer a small-gain compositional condition to integrate data-driven CSBC and establish a control barrier certificate for the unknown network, while guaranteeing its safety over an infinite time horizon. We showcase the effectiveness of our data-driven results across a variety of physical benchmarks with unknown models and different interconnection topologies.

\section{System Description}{\label{sec:Continuous}}

\subsection{Notation}
Sets of real, non-negative and positive real numbers are represented by $\mathbb{R}$, $\mathbb{R}^{+}_0$, and $\mathbb{R}^+$, respectively. The set of non-negative integers is denoted as $\mathbb{N} := \{0,1,2,\ldots\}$, while the set of positive integers is signified by $\mathbb{N}^+ = \{1,2,\ldots\}$. For $N$ vectors $x_i$ in $\mathbb{R}^{n_i}$, the notation $x=[x_1;\ldots;x_N]$ represents a column vector formed by these vectors, with a total dimension of $\sum_i n_i$. We use $\begin{bmatrix} x_1 & \ldots & x_N\end{bmatrix}$ and $\begin{bmatrix} A_1 & \ldots & A_N\end{bmatrix}$ to represent the horizontal concatenation of vectors $x_i \in \mathbb{R}^n$ and matrices $A_i \in \mathbb{R}^{n \times m}$ to form $n \times N$ and $n \times m N$ matrices, respectively. The notation $\{a_{ij}\}$ represents a matrix formed by placing the elements $a_{ij}$ in the $i$-th row and $j$-th column. We use $\Vert \cdot \Vert$ to denote the Euclidean norm for a vector $x \in \mathbb{R}^n$ and the induced 2-norm for a matrix $A \in \mathbb{R}^{n \times m}$. For sets $X_i$, where $i \in \{1,\ldots,N\}$, their Cartesian product is denoted as $\prod_{i=1}^{N}X_i$. An identity matrix in $\mathbb{R}^{n \times n}$ is indicated by $\mathds{I}_n$, while $\mathbf{1}_n$ symbolizes a column vector of dimension $\mathbb{R}^{n \times 1}$, with all entries equal to one. A zero matrix of dimension $n\times m$ is denoted by  $\mathbf{0}_{n\times m}$. An empty set is represented by $ \emptyset$.  A (block) diagonal matrix in $\R^{N\times{N}}$ with  diagonal matrix entries $(A_1,\ldots,A_N)$ and scalar entries $(a_1,\ldots,a_N)$ is denoted by $\mathsf{blkdiag}(A_1,\ldots,A_N)$ and $\mathsf{diag}(a_1,\ldots,a_N)$, respectively. A \emph{symmetric} and positive-definite matrix $P \in\mathbb{R}^{n\times n}$ is denoted by $P \succ 0$, while $P \succeq 0$ denotes that $P$ is a \emph{symmetric} positive semi-definite matrix. The transpose of a matrix $P$ is represented by $P^\top$\!, while its left pseudoinverse is represented by $P^{\dagger}$.

\subsection{Individual Subsystems}\label{systems1}

In this work, our focus lies on continuous-time nonlinear polynomial systems, treating them as individual subsystems, as delineated in the subsequent definition.
\begin{definition}
A continuous-time nonlinear polynomial system (ct-NPS) is described by 
\begin{subequations} 
\begin{align}\label{sys2}
\Theta_i\!: \dot x_i=A_i\mathcal M_i(x_i) + B_i\nu_i + D_i w_i,
\end{align}
where $\mathcal M_i(x_i) \in \mathbb R^{N_i}$, with $\mathcal{M}_i(\mathbf{0}_{n_i})=\mathbf{0}_{N_i}$, is a vector of monomials in states $x_i\in X_i$, $A_i \in \mathbb R^{n_i\times N_i}, B_i \in \mathbb R^{n_i\times m_i},$ $D_i \in \R^{n_i \times \sigma_i}$ with $\sigma_i = \sum_{j = 1, j \ne i}^{\mathcal{Q}} n_j$, and $\mathcal{Q} $ is the number of subsystems interconnected with $\Theta_i$. The vector-valued function $\mathcal M_i(x_i)$ satisfies
\begin{equation}\label{Transformation}
\mathcal{M}_i(x_i)= \Upsilon_i(x_i) x_i,
\end{equation}
with a state-dependent transformation matrix $\Upsilon_i(x_i)\in  \mathbb R^{N_i \times n_i}$. Furthermore, $\nu_i \in U_i$ and $w_i \in W_i$ are \emph{external} and \emph{internal} inputs of ct-NPS, with $X_i\subseteq \mathbb R^{n_i}$, $U_i \subseteq \mathbb R^{m_i}$, and $W_i \subseteq \mathbb{R}^{\sigma_i}$ being state, external and internal input sets, respectively. We utilize the tuple $\Theta_i = (A_i, B_i, D_i, X_i,U_i, W_i)$ to represent the ct-NPS.

\noindent Under the transformation~\eqref{Transformation}, the ct-NPS in~\eqref{sys2} can be equivalently  written as
\begin{align}\label{sys3}
	\Theta_i\!: \dot x_i=A_i \Upsilon_i(x_i)x_i + B_i\nu_i + D_i w_i.
\end{align}
\end{subequations} 
\end{definition}
\begin{remark}
Transformation~$\Upsilon_i(x_i)$ in~\eqref{Transformation} ensures that all expressions are ultimately expressed in terms of $x_i$ rather than $\mathcal{M}_i(x_i)$. This consistency is essential since our CSBC is later defined as $\mathds{B}_i(x_i) = x_i^\top P_i x_i$ (cf. Theorem~\ref{Thm:main}), which depends solely on $x_i$. Expressing everything in terms of $x_i$ simplifies the formulation and enables us to propose a set of more scalable conditions in Theorem~\ref{Thm:main}. Moreover, we note that, without loss of generality, for any $\mathcal{M}_i(x_i)$ satisfying $\mathcal{M}_i(\mathbf{0}_{n_i}) = \mathbf{0}_{N_i}$, there always exists a transformation matrix $\Upsilon_i(x_i)$ that satisfies~\eqref{Transformation}.
\end{remark}
In system \eqref{sys2}, we designate matrices $A_i$ and $B_i$ as unknown, defining this setup as the \emph{unknown model}. On the contrary, we assume knowledge of matrix $D_i$, as it represents interconnection weights among subsystems, often predetermined in interconnected networks.

\begin{remark}\label{rem:dict}
		While the exact $\mathcal{M}_i(x_i)$ is unknown, we assume that its dictionary (\emph{i.e.,} a library or family of functions) can be constructed to capture the actual dynamics by being \emph{sufficiently extensive}. Specifically, such a dictionary is structured to be comprehensive, capturing all possible terms in the actual system's dynamics, even if it includes some terms that are ultimately unnecessary. To achieve this, an upper bound on the maximum degree of $\mathcal{M}_i(x_i)$ can be obtained based on the physical insights into the system, allowing $\mathcal{M}_i(x_i)$ to be constructed to include all possible combinations of states up to that bound (cf. all benchmark case studies).
\end{remark}

It is worth highlighting that many natural systems evolve in continuous time with nonlinear dynamics \cite{strogatz2018nonlinear}, which is the focus of this work. We restrict subsystems to nonlinear \emph{polynomial} systems to ensure computational tractability via SOS programming. While our framework could, in principle, extend to other nonlinear systems (\emph{e.g.,} with sinusoidal terms), key conditions such as~\eqref{con11} in Theorem~\ref{Thm:main} may no longer be solvable using SOS tools. The focus on polynomial dynamics is also further motivated by their prevalence in modeling engineering systems, including fluid dynamics and robotics.

\subsection{Interconnected Network}\label{In-Net}
Since our primary objective is to analyze the safety of interconnected networks, composed of individual subsystems represented by \eqref{sys2}, this subsection outlines the network configuration and details the interconnection among subsystems.
Here, we present a formal definition of an interconnected network comprising $\mathcal{Q} \in \mathbb{N}^{+}$ individual subsystems, each designated as $\Theta_i$, with partitioned \emph{internal} inputs and their corresponding matrices as
\begin{subequations} 
\begin{align}\label{in-out}
w_i &=\left[w_{i 1} ; \ldots ; w_{i(i-1)} ; w_{i(i+1)} ; \ldots ; w_{i \mathcal{Q}}\right]\!\!,\\ \label{D-partition}
D_i&=\left[
		D_{i 1}\,\,  \ldots\,\,  D_{i(i-1)}\,\,  D_{i(i+1)} \,\, \ldots \,\, D_{i N} \right]\!\!,
\end{align}
\end{subequations} 
where $D_{i j} \in \mathbb{R}^{n_i \times n_j}$. We assume that the dimension of $w_{ij}$ and $x_{j}$ are matched, which is a well-defined assumption within the framework of small-gain reasoning. Moreover, in cases where there is no connection from subsystem $\Theta_i$ to $\Theta_j$, we presume the corresponding internal input to be zero, \emph{i.e.}, $w_{i j} \equiv 0$, otherwise, $w_{i j}=x_j$.

We now proceed to define the interconnected network as follows.
\begin{definition}\label{network}
Consider $\mathcal{Q} \in \mathbb{N}^{+}$ subsystems $\Theta_i\!=\!(A_i, B_i, D_i, X_i,U_i, W_i)$, $i \in\{1, \ldots, \mathcal{Q}\}$, with the internal input configuration as in~\eqref{in-out}. The interconnection of $\Theta_i$, $i \in\{1, \ldots, \mathcal{Q}\}$, forms the interconnected network $\Theta=\left(A(x), B, X, U \right)$, denoted by $\mathcal{I}\left(\Theta_1, \ldots, \Theta_\mathcal{Q}\right)$, such that $X:=\prod_{i=1}^\mathcal{Q} X_i, U:=\prod_{i=1}^\mathcal{Q} U_i$, subjected to the following interconnection constraint:
\begin{equation}\label{Internal}
\forall i, j \in\{1, \ldots, \mathcal{Q}\}, i \neq j: \quad \omega_{i j}=x_{j}.
\end{equation}
Such an interconnected network can be described by  
\begin{align}\label{sysN1}
\Theta\!: \dot x=A(x)x + B\nu,
\end{align}
where $A(x) \in \mathbb{R}^{n \times n}$ is a block matrix, with $A(x) = \{\mathsf{a}_{ij}\}$, $i,j \in\{1, \ldots, \mathcal{Q}\}$, and $n = \sum_{i=1}^\mathcal{Q} n_i$, consisting of diagonal blocks $(A_1\Upsilon_1(x_1), \ldots, A_\mathcal{Q}\Upsilon_\mathcal{Q}(x_\mathcal{Q}))$ and off-diagonal blocks ${D}_{ij} $. The interconnection topology determines the values of $D_i$~in~\eqref{D-partition}. Moreover, $B = \mathsf{blkdiag}(B_1, \ldots, B_\mathcal{Q}) \in \mathbb{R}^{n \times m}$, $\nu \in \mathbb{R}^{m}$, with $m = \sum_{i=1}^\mathcal{Q} m_i$, and $x = [x_1; \dots; x_\mathcal{Q}] \in \mathbb{R}^{n}$\!.
\end{definition}
We employ $x_{x_{0,\nu}}(t)$ to denote the solution process of the interconnected network at time $t \in \mathbb{R}^{+}_{0}$ under input trajectories $\nu(\cdot)$, originating from an initial state $x_0 \in X$.

\subsection{Control (Sub-)Barrier Certificates}\label{Barrier}
We present in this subsection the concept of \emph{control sub-barrier and barrier certificates} for both individual subsystems and interconnected networks, \emph{with and without} internal signals, respectively.
In the subsequent definition, we initially present the concept of control sub-barrier certificates for ct-NPS with internal inputs.
\begin{definition}\label{csbc}
Consider a ct-NPS
$\Theta_i\!=\!(\!A_i, B_i, D_i, X_i,U_i, W_i)$, with $X_{0_i}, X_{u_i} \subseteq X_i$ being its initial and unsafe sets, respectively. Assuming the existence of constants $\phi_i,\gamma_i,\beta_i,\varepsilon_i\in \mathbb{R}^{+}$ and $\rho_i \in \mathbb{R}^{+}_{0}$, a function $\mathds B_i:X_i\to\R_0^+$ is called a control sub-barrier certificate (CSBC) for $\Theta_i$ if
\begin{subequations}\label{subsys}
\begin{align}\label{subsys1}
 &\mathds{B}_i(x_i) \geq \phi_i\|x_i\|^2, \:\:\:\:\:\quad\quad\quad\quad\quad\quad\!\!\!\!\!\forall x_i \in X_i,\\\label{subsys11}
&\mathds B_i(x_i) \leq \gamma_i,\:\:\:\:\quad\quad\quad\quad\quad\quad\quad\quad\!\forall x_i \in X_{0_i},\\\label{subsys111}
&\mathds B_i(x_i) \geq \beta_i, \:\:\:\:\quad\quad\quad\quad\quad\quad\quad\quad\!\forall x_i \in X_{u_i}, 
\end{align}  
and $\forall x_i\in X_i$, $\exists \nu_i\in U_i$, such that $\forall w_i\in W_i$, one has 
\begin{align}\label{eq:martingale2}
	&\mathcal{L}\mathds B_i(x_i)\mathcal ~\!\leq - \varepsilon_i \mathds B_i(x_i) + \rho_i\Vert w_i\Vert^2,
\end{align}
\end{subequations}
where $\varepsilon_i$ and $\rho_i$ denote the decay rate and the interaction gain, respectively, and   $\mathcal{L} \mathds B_i$ is the Lie derivative of $\mathds B_i$
with respect to ct-NPS in~\eqref{sys3}, which is defined as~\cite{Willmore_1960} 
\begin{align}\label{Lie derivative}
	\mathcal{L}\mathds B_i( x_i)=\partial_{x_i} \mathds B_i(x_i)\big(A_i\Upsilon_i(x_i)x_i+ B_i\nu_i + D_i w_i\big)\!,
\end{align}
with $\partial_{x_i} \mathds B_i(x_i) = \frac{\partial\mathds B_i(x_i)}{\partial_{x_i}}$.
\end{definition}

We now provide the following definition to describe control barrier certificates for \emph{interconnected networks} without internal signals.
\begin{definition} \label{cbc}
	Consider a network $\Theta = \mathcal{I}\left(\Theta_1, \ldots, \Theta_\mathcal{Q}\right)$, as in Definition~\ref{network}, composed of $\mathcal{Q}$ subsystems $\Theta_i$. Assuming  the existence of  constants $\gamma,\beta,\varepsilon\in\mathbb{R}^{+}$, with $\beta > \gamma$, a function $\mathds B:X \rightarrow \mathbb{R}_{0}^{+}$ is referred to as a control barrier certificate (CBC) for $\Theta$, if 
 \begin{subequations}
	\begin{align}\label{CBC1}
		&\mathds B(x) \leq \gamma,\quad\quad\quad\quad\quad\quad\!\!\forall x \in X_{0},\\\label{CBC2}
		&\mathds B(x) \geq \beta, \quad\quad\quad\quad\quad\quad\!\!\forall x \in X_{u},
	\end{align}  
	and $\forall x\in X$, $\exists \nu\in U$,  such that
	\begin{align}\label{cbceq}
		\mathcal{L}\mathds B( x)\mathcal ~\!\leq - \varepsilon \mathds B(x),
	\end{align}
  \end{subequations}
	where $\mathcal{L} \mathds B$ represents the Lie derivative of $\mathds B$
	with respect to interconnected dynamics in \eqref{sysN1}, defined as 
	\begin{align}
	\mathcal{L}\mathds B (x)=\partial_{x} \mathds B (x)(A(x)x + B \nu).
	\end{align}
	Moreover, $X_0, X_{u} \subseteq X$ denote the initial and unsafe sets of the interconnected network, respectively. 
\end{definition}

\begin{remark}
In the CBC framework adopted in this work, the initial set $X_0$ and the unsafe set \( X_u\) are predetermined given the property of interest, and it is required that $X_0 \,\cap\, X_u \;=\;\emptyset$. If this condition is violated, the system is unsafe from the outset, and any safety analysis becomes meaningless. Moreover, for the sake of using the SOS programming, the initial, unsafe, and state sets, should be all semialgebraic,  \emph{e.g.,} describable by a finite collection of polynomial inequalities.
\end{remark}
We now utilize CBC to establish safety guarantees across the interconnected network over an infinite time horizon, as presented in the following theorem~\cite{Pranja}.
\begin{theorem}\label{Theo1}
Consider a network $\Theta = \mathcal{I}\left(\Theta_1, \ldots, \Theta_\mathcal{Q}\right)$, composed of $\mathcal{Q}$ subsystems $\Theta_i$.  Suppose $\mathds{B}$ is a CBC for $\Theta$ as in Definition~\ref{cbc}. Then, the network is safe under a controller $\nu(\cdot)$ within an infinite time horizon, \emph{i.e.,} $x_{x_{0,\nu}}(t) \cap X_{u}=\emptyset$ for any $x_{0} \in X_0$ and any $t \in \mathbb{R}^{+}_{0}$.
\end{theorem}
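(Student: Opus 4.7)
The plan is to argue by contradiction, using the exponential decay of the CBC along closed-loop trajectories together with the separation between its sublevel set on $X_0$ and its superlevel set on $X_u$. This is the classical barrier-certificate argument in the style of Prajna, adapted to the continuous-time polynomial setting considered here.

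First, I would fix an arbitrary $x_0 \in X_0$ and, invoking condition~\eqref{cbceq} together with a standard measurable-selection argument, pick a controller $\nu(\cdot)$ taking values in $U$ such that the closed-loop trajectory $x_{x_0,\nu}(t)$ of \eqref{sysN1} is well-defined for $t \in \mathbb{R}_0^+$ and satisfies, pointwise in $t$,
\begin{equation*}
\frac{\mathrm{d}}{\mathrm{d}t}\mathds{B}\bigl(x_{x_0,\nu}(t)\bigr)
\;=\; \mathcal{L}\mathds{B}\bigl(x_{x_0,\nu}(t)\bigr)
\;\leq\; -\varepsilon\,\mathds{B}\bigl(x_{x_0,\nu}(t)\bigr).
\end{equation*}

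Second, applying a Grönwall-type comparison lemma to this differential inequality yields the exponential bound
\begin{equation*}
\mathds{B}\bigl(x_{x_0,\nu}(t)\bigr) \;\leq\; e^{-\varepsilon t}\,\mathds{B}(x_0), \qquad \forall t \in \mathbb{R}_0^+.
\end{equation*}
Combining this with \eqref{CBC1}, which gives $\mathds{B}(x_0) \leq \gamma$ since $x_0 \in X_0$, and using $\varepsilon > 0$ so that $e^{-\varepsilon t} \leq 1$, I obtain the uniform-in-time bound $\mathds{B}(x_{x_0,\nu}(t)) \leq \gamma$ for every $t \geq 0$.

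Third, suppose for contradiction that $x_{x_0,\nu}(t^*) \in X_u$ for some $t^* \in \mathbb{R}_0^+$. Then \eqref{CBC2} forces $\mathds{B}(x_{x_0,\nu}(t^*)) \geq \beta$. Chaining these two inequalities gives $\beta \leq \mathds{B}(x_{x_0,\nu}(t^*)) \leq \gamma$, which contradicts the hypothesis $\beta > \gamma$ imposed in Definition~\ref{cbc}. Hence no such $t^*$ exists, \emph{i.e.,} $x_{x_0,\nu}(t) \cap X_u = \emptyset$ for all $t \in \mathbb{R}_0^+$, and since $x_0 \in X_0$ was arbitrary, the safety claim follows.

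The only genuinely delicate step is the first one: ensuring that a controller realising the pointwise existential in \eqref{cbceq} can be selected so that the resulting closed-loop vector field is regular enough for a solution to exist globally in forward time. This is routine under the polynomial smoothness of the dynamics and the semialgebraic structure of $X$, $U$; after that, the remaining steps are the standard Grönwall argument and the two-set separation $\gamma < \beta$, both of which are immediate.
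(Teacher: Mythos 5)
Your proof is correct and follows essentially the same route as the paper: a contradiction argument showing $\beta \leq \mathds{B}(x_{x_0,\nu}(t^*)) \leq \mathds{B}(x_0) \leq \gamma$, which violates $\beta > \gamma$. Your Grönwall step and the measurable-selection remark merely make explicit the monotonicity of $\mathds{B}$ along trajectories (via $\mathcal{L}\mathds{B} \leq -\varepsilon\mathds{B} \leq 0$) that the paper's proof uses implicitly.
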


\begin{proof} 
	The proof is demonstrated by contradiction. Assume $x_{x_{0,\nu}}(t)$ of $\Theta$ originates from $x_0 \in X_0$ and enters $X_{u}$ under a controller $\nu(\cdot)$. According to conditions~\eqref{CBC1} and~\eqref{CBC2}, one has $\mathds{B}(x(0)) \leq \gamma$ and $\mathds{B}(x(t)) \geq \beta$ for $t \in \mathbb{R}^{+}_{0}$. Since $\mathds{B}(x(\cdot))$ serves as a CBC and in light of condition~\eqref{cbceq}, it follows that $\beta \leq \mathds{B}(x(t)) \leq \mathds{B}(x(0)) \leq \gamma$. This contradicts the essential condition $\beta > \gamma$, thereby concluding the proof by contradiction. 
\end{proof} 

Generally, designing a CBC and its corresponding controller to uphold safety across a large-scale system is computationally very burdensome, even with a known underlying model. This is why our compositional method focuses on subsystem levels, initially constructing CSBC and then, under specific compositional conditions, forming a CBC from CSBC of subsystems. Nevertheless, it is clear that directly fulfilling condition~\eqref{eq:martingale2} for constructing CSBC is not practical, due to the inclusion of unknown matrices $A_i$ and $B_i$ in $\mathcal {L}\mathds {B}_i(x_i)$. With these main challenges at hand, we now formalize the primary problem we aim to tackle in this work.
\begin{resp}
\textbf{Problem Formulation:}\label{problem}
Consider an interconnected network $\Theta = \mathcal{I}\left(\Theta_1, \ldots, \Theta_\mathcal{Q}\right)$, composed of $\mathcal{Q}$ subsystems $\Theta_i$, each with unknown matrices $A_i$ and $B_i$.  By collecting only a single input-state trajectory from each subsystem, construct CSBCs and their decentralized controllers. Subsequently, devise a compositional methodology, grounded in small-gain reasoning, to integrate these CSBCs and design a CBC and its corresponding controller, while ensuring safety properties across the interconnected network.
\end{resp}

\section{Data-Driven Design of CSBC and Local Controllers}\label{DD-CBCs}
To tackle the stated problem, this section describes our data-driven framework, aimed at constructing a CSBC and its associated decentralized controller from data for each unknown subsystem. In our data-driven scheme, we initially define the structure of our CSBC as quadratic, expressed as $\mathds{B}_i(x_i) = x_i^\top P_i x_i$, with $P_i \succ 0$. Subsequently, we gather input-state data from unknown ct-NPS during the time interval $[t_0, t_0 + (\mathcal{T} - 1)\tau]$, with $\mathcal{T} \in \mathbb N^{+}$ representing the number of collected samples, and $\tau \in \mathbb R^{+}$ denoting the sampling interval:
\begin{subequations}\label{New}
\begin{align}
\mathcal U^{0,\mathcal{T}}_i &\!\!=\! [\nu_i(t_0)~~\nu_i(t_0 + \tau)~\dots~\nu_i(t_0 + (\mathcal{T} - 1)\tau)] \label{eq:U0},\\
\mathcal W^{0,\mathcal{T}}_i &\!\!=\! [w_i(t_0)~w_i(t_0 + \tau)~\dots~w_i(t_0 + (\mathcal{T} - 1)\tau)] \label{eq:W0},\\
\mathcal X^{0,\mathcal{T}}_i &\!\!=\! [x_i(t_0)~~x_i(t_0 + \tau)~\dots~x_i(t_0 + (\mathcal{T} - 1)\tau)] \label{eq:X0},\\
\bar{\mathcal X}^{1,\mathcal{T}}_i &\!\!=\! [\dot x_i(t_0)~~\dot x_i(t_0 + \tau)~\dots~\dot x_i(t_0 + (\mathcal{T} - 1)\tau)] \label{eq:barX1}.
\end{align}
\end{subequations}
Since direct measurement of the state derivatives at the sampling instants, defined in~\eqref{eq:barX1}, is challenging, we consider that these measurements are \emph{corrupted by noise}, denoted as $\Gamma_i:= [\,\varkappa_i(t_0),\,\varkappa_i(t_0 + \tau),\,\dots,\,\varkappa_i(t_0 + (\mathcal{T}-1)\tau)\,] \in \mathbb{R}^{n_i \times\mathcal{T}}$. Consequently, our measured data becomes $\mathcal{X}_i^{1,\mathcal{T}}:= \bar{\mathcal{X}}_i^{1,\mathcal{T}} + \Gamma_i$, where the noise term $\Gamma_i$ is \emph{unknown} but adheres to the subsequent assumption.
\begin{assumption}
The unknown noise matrix $\Gamma_i$ satisfies
\begin{equation}\label{noise-bound}
\Gamma_i \Gamma_i^{\top} \preceq \Xi_i \Xi_i^{\top}\!,
\end{equation}
for a known matrix $\Xi_i \in \mathbb{R}^{n_i \times \mathcal{T}}$, which essentially means that the energy of the noise remains bounded over the finite time of data collection~\cite{van2020noisy}.
\end{assumption}
In practice, this condition is automatically fulfilled if there exists a constant $\bar{\varkappa_i} \in \mathbb{R}^{+}$ such that $\|\varkappa_i(t)\|^2 \leq \bar{\varkappa_i}$ for all $t \in \mathbb{R}_{ 0}^{+}$. In such a case, we have
\begin{equation}\label{noise-bound-2}
\Xi_i \Xi_i^{\top} = \bar{\varkappa}_i \mathcal{T} \mathds{I}_{n_i}.
\end{equation}

\begin{remark}
		The system's safety necessitates collecting data while the system functions within safe operating regions. This requirement is solely due to physical/practical limitations and is not related to our theoretical framework.
\end{remark}

 We consider the set of trajectories in~\eqref{New} as a \emph{single input-state trajectory}. Given the presence of unknown matrices $A_i$ and $B_i$ in $\mathcal {L}\mathds{B}_i(x_i)$, inspired by~\cite{de2019formulas}, we introduce the following lemma to derive the data-driven representation of $A_i\mathcal{M}_i(x_i) + B_i\nu_i$.
\begin{lemma}\label{Lemma1}
	Given subsystems $\Theta_i$ in~\eqref{sys2}, let $\mathcal {S}_i (x_i)$ be a $(\mathcal{T}\times n_i)$ polynomial matrix such that 
 \begin{equation}\label{martrixS}
  \Upsilon_i(x_i) = \mathcal N^{0,\mathcal{T}}_i \mathcal {S}_i (x_i),
\end{equation}
with $ \Upsilon_i(x_i) \in \mathbb{R}^{N_i \times n_i}$ being the state-dependent transformation matrix, as in~\eqref{Transformation}, and 
	\begin{align}\label{trajN}\nonumber
		\mathcal N^{0,\mathcal{T}}_i & = \Big[\overbrace{\Upsilon_i(x_i(t_0))x_i(t_0)}^{\mathcal M_i(x_i(t_0))}~~\overbrace{\Upsilon_i(x_i(t_0 + \tau))x_i(t_0 + \tau)}^{\mathcal M_i (x_i(t_0 + \tau))}~\dots~~\\& ~~~~\underbrace{\Upsilon_i (x_i(t_0 + (\mathcal{T} - 1)\tau))(x_i(t_0 + (\mathcal{T} - 1)\tau)}_{\mathcal M_i (x_i(t_0 + (\mathcal{T} - 1)\tau))}\Big],
	\end{align}
	being a \emph{full row-rank}, $(N_i\times \mathcal{T})$ matrix, constructed from the vector $\mathcal M_i(x_i)=\Upsilon_i(x_i)x_i$~in~\eqref{Transformation} and samples $\mathcal X^{0,\mathcal{T}}_i$\!\!. By designing a decentralized controller $\nu_i = \mathcal{F}_i(x_i)\,x_i = \mathcal{U}^{0,\mathcal{T}}_i \mathcal{S}_i(x_i)\,x_i$, with $\mathcal{F}_i(x_i) = \mathcal{U}^{0,\mathcal{T}}_i \mathcal{S}_i(x_i)$, the expression $A_i \Upsilon_i(x_i) + B_i\mathcal{F}_i(x_i)$ can be equivalently described by the following data-driven representation:
 \begin{equation*}
    A_i \Upsilon_i(x_i) + B_i \mathcal{F}_i(x_i)= (\mathcal X^{1,\mathcal{T}}_i - D_i\mathcal W^{0,\mathcal{T}}_i - \Gamma_i) \mathcal {S}_i (x_i). 
 \end{equation*}
\end{lemma}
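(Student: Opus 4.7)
The plan is to treat the lemma as a direct algebraic consequence of stacking the subsystem's dynamics at each sampling instant, then post-multiplying by $\mathcal{S}_i(x_i)$ to collapse the data matrices into the desired parameterization. The key structural insight I will rely on is that the full row-rank of $\mathcal{N}^{0,\mathcal{T}}_i$ guarantees the existence of a (polynomial) right-inverse, so a matrix $\mathcal{S}_i(x_i)$ satisfying $\Upsilon_i(x_i) = \mathcal{N}^{0,\mathcal{T}}_i\,\mathcal{S}_i(x_i)$ is indeed realizable; this is exactly what makes the parameterization well posed.

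First, I would evaluate the subsystem equation~\eqref{sys2} at each of the $\mathcal{T}$ sampling instants $t_0, t_0+\tau, \dots, t_0+(\mathcal{T}-1)\tau$, obtaining $\dot{x}_i(t_0+k\tau) = A_i\mathcal{M}_i(x_i(t_0+k\tau)) + B_i\nu_i(t_0+k\tau) + D_i w_i(t_0+k\tau)$. Concatenating these column-wise and using the definitions in~\eqref{New} together with~\eqref{trajN}, I would obtain the compact identity
\begin{equation*}
\bar{\mathcal{X}}^{1,\mathcal{T}}_i \;=\; A_i\,\mathcal{N}^{0,\mathcal{T}}_i \,+\, B_i\,\mathcal{U}^{0,\mathcal{T}}_i \,+\, D_i\,\mathcal{W}^{0,\mathcal{T}}_i.
\end{equation*}
Substituting $\bar{\mathcal{X}}^{1,\mathcal{T}}_i = \mathcal{X}^{1,\mathcal{T}}_i - \Gamma_i$ and rearranging then yields
\begin{equation*}
\mathcal{X}^{1,\mathcal{T}}_i - D_i\,\mathcal{W}^{0,\mathcal{T}}_i - \Gamma_i \;=\; A_i\,\mathcal{N}^{0,\mathcal{T}}_i + B_i\,\mathcal{U}^{0,\mathcal{T}}_i.
\end{equation*}

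Next, I would post-multiply both sides by the polynomial matrix $\mathcal{S}_i(x_i)$ from the right. On the left this gives precisely $(\mathcal{X}^{1,\mathcal{T}}_i - D_i\mathcal{W}^{0,\mathcal{T}}_i - \Gamma_i)\,\mathcal{S}_i(x_i)$, while on the right I would invoke~\eqref{martrixS} to replace $\mathcal{N}^{0,\mathcal{T}}_i\,\mathcal{S}_i(x_i)$ by $\Upsilon_i(x_i)$, and invoke the definition of the decentralized controller $\mathcal{F}_i(x_i) = \mathcal{U}^{0,\mathcal{T}}_i\,\mathcal{S}_i(x_i)$ to replace $\mathcal{U}^{0,\mathcal{T}}_i\,\mathcal{S}_i(x_i)$ by $\mathcal{F}_i(x_i)$. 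This produces the claimed identity
\begin{equation*}
A_i\,\Upsilon_i(x_i) + B_i\,\mathcal{F}_i(x_i) \;=\; (\mathcal{X}^{1,\mathcal{T}}_i - D_i\,\mathcal{W}^{0,\mathcal{T}}_i - \Gamma_i)\,\mathcal{S}_i(x_i).
\end{equation*}

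Since the chain of equalities is purely algebraic once the data matrices are correctly assembled, the only genuinely substantive point to address is the existence and admissibility of $\mathcal{S}_i(x_i)$: I would briefly note that because $\mathcal{N}^{0,\mathcal{T}}_i$ has full row rank by hypothesis, the linear equation $\Upsilon_i(x_i) = \mathcal{N}^{0,\mathcal{T}}_i\,\mathcal{S}_i(x_i)$ is solvable pointwise in $x_i$, and a polynomial solution can be obtained (e.g., via a right pseudoinverse $(\mathcal{N}^{0,\mathcal{T}}_i)^{\dagger}$), so that the ingredients of the parameterization are well-defined. This is the only step that requires the rank assumption explicitly; everything else is bookkeeping of the stacked dynamics and noise decomposition.
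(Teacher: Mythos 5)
Your proposal is correct and follows essentially the same route as the paper: stack the dynamics at the sampling instants to obtain $\bar{\mathcal X}^{1,\mathcal{T}}_i = A_i\mathcal N^{0,\mathcal{T}}_i + B_i\mathcal U^{0,\mathcal{T}}_i + D_i\mathcal W^{0,\mathcal{T}}_i$, substitute the noise decomposition, and post-multiply by $\mathcal S_i(x_i)$ using~\eqref{martrixS} and the definition of $\mathcal F_i(x_i)$; the paper merely phrases the last step via the block matrix $[B_i\;\,A_i]$ acting on the stacked pair $(\mathcal U^{0,\mathcal{T}}_i,\mathcal N^{0,\mathcal{T}}_i)$ and an extra right-multiplication by $x_i$. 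Your added remark that full row rank of $\mathcal N^{0,\mathcal{T}}_i$ guarantees solvability of~\eqref{martrixS} (e.g., via a right pseudoinverse) is a sensible, harmless addition not spelled out in the paper's proof.
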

\begin{proof} 
	Using trajectories~\eqref{New}, one can obtain the data-based representation of $\dot x_i=A_i\mathcal M_i(x_i) + B_i\nu_i + D_i w_i$ as the following:
\begin{align*}
	\bar{\mathcal X}^{1,\mathcal{T}}_i &= A_i \mathcal N^{0,\mathcal{T}}_i  + B_i\mathcal U^{0,\mathcal{T}}_i  + D_i\mathcal W^{0,\mathcal{T}}_i = [B_i\quad A_i] \begin{bmatrix}
		\mathcal U^{0,\mathcal{T}}_i \\
		\mathcal N^{0,\mathcal{T}}_i 
	\end{bmatrix}\\ &~~~ + D_i\mathcal W^{0,\mathcal{T}}_i\!\!.
\end{align*}
Accordingly, as $\mathcal{X}_i^{1,\mathcal{T}}= \bar{\mathcal{X}}_{i}^{1,\mathcal{T}} + \Gamma_i$, one has
\begin{align} \label{data3}
[B_i\quad A_i] \begin{bmatrix}
		\mathcal U^{0,\mathcal{T}}_i \\
		\mathcal N^{0,\mathcal{T}}_i 
	\end{bmatrix} = {\mathcal X}^{1,\mathcal{T}_i} - D_i\mathcal W^{0,\mathcal{T}}_i - \Gamma_i.
\end{align}
By designing the controller matrix $\mathcal{F}_i(x_i) = \mathcal{U}^{0,\mathcal{T}}_i \mathcal{S}_i(x_i)$, and utilizing conditions~\eqref{martrixS} and~\eqref{Transformation}, we have 
\begin{align*}
	A_i &\mathcal{M}_i\left(x_i\right)+B_i \nu_i 
	=A_i \Upsilon_i(x_i) x_i+B_i \nu_i\\ 
	&=\left(A_i \Upsilon_i(x_i) +B_i \mathcal{F}_i(x_i) \right) x_i= [B_i\quad A_i] \begin{bmatrix}
		\mathcal{F}_i(x_i)\\
		\Upsilon_i(x_i)
	\end{bmatrix} x_i\\ & = [B_i~ A_i] \! \! \begin{bmatrix}
		\mathcal U^{0,\mathcal{T}}_i \\
		\mathcal N^{0,\mathcal{T}}_i 
	\end{bmatrix}\!\! \mathcal {S}_i (x_i) x_i \!\overset{\eqref{data3}}{=} \! \! ({\mathcal X}^{1,\mathcal{T}}_i \! \!\!  -\!\!  D_i\mathcal W^{0,\mathcal{T}}_i \! \!\!  -\!  \Gamma_i)\mathcal {S}_i (x_i)x_i.
\end{align*}
Therefore, $(\mathcal X^{1,\mathcal{T}}_i -D_i\mathcal W^{0, \mathcal{T}}_i - \Gamma_i)\mathcal {S}_i (x_i)$ is the data-based representation of $A_i \Upsilon_i(x_i) + B_i\mathcal{F}_i(x_i)$, which completes the proof. 
\end{proof} 

\begin{remark}\label{Rank-condition}
To potentially guarantee that the matrix $\mathcal N^{0,\mathcal{T}}_i$ maintains full row rank, the number of samples, denoted as $\mathcal{T}$, must be at least equal to $N_i +1$. This condition is readily verifiable since  $\mathcal N^{0,\mathcal{T}}_i$ is constructed from sampled data.
\end{remark}
\begin{remark}
		If the data is noise-free, similar to \cite{nejati2022data}, identifying subsystems' matrices utilizing the least squares would be possible. However, our data-driven framework accommodates \emph{unknown-but-bounded} noise in the state derivative data, which makes it no longer straightforward to exactly identify the matrices of subsystems without any error. As these matrices are not identified in our framework, there is no need to handle the associated non-zero error.
\end{remark}

Leveraging the data-based representation outlined in Lemma~\ref{Lemma1}, we offer the following theorem, as one of the main contributions of our work, to design a CSBC and its decentralized controller for each subsystem solely based on data.
\begin{theorem}\label{Thm:main}
	Consider an unknown ct-NPS $\Theta_i$, with its data-based representation $A_i\Upsilon_i(x_i) + B_i\mathcal{F}_i(x_i) = (\mathcal X_i^{1,\mathcal{T}} - D_i\mathcal W_i^{0,\mathcal{T}} - \Gamma_i)\mathcal {S}_i (x_i)$, according to Lemma \ref{Lemma1}. Suppose there exist a  polynomial matrix
	$\mathcal H_i(x_i) \in \mathbb R^{\mathcal{T} \times n_i}$ such that
	\begin{align}\label{matrixP}
		& \mathcal N_i^{0,\mathcal{T}}\mathcal H_i(x_i) =\Upsilon_i(x_i) P_i^{-1},\:\text{with} ~ P_i \succ 0.
	\end{align}
	Let there exist constants $\phi_i,  \gamma_i,\beta_i,\varepsilon_i,\pi_i, \mu_i\in\mathbb{R}^{+}$ so that the following conditions are satisfied
 \begin{subequations}\label{New1}
	\begin{align}\label{con}
		&x_i^\top \big[\Upsilon_i^\dagger(x_i)\mathcal N_i^{0,\mathcal{T}}\mathcal H_i(x_i)\big]^{-1}  x_i \geq \phi_i\|x_i\|^2, ~\forall x \in X_{i},\\\label{con1}
		&x_i^\top \big[\Upsilon_i^\dagger(x_i)\mathcal N_i^{0,\mathcal{T}}\mathcal H_i(x_i)\big]^{-1} x_i \leq \gamma_i,\quad\quad~\: \forall x_i \in X_{0_i},\\\label{con2}
		&x_i^\top \big[\Upsilon_i^\dagger(x_i)\mathcal N_i^{0,\mathcal{T}}\mathcal H_i(x_i)\big]^{-1} x_i\geq \beta_i, \quad\quad~\: \forall x_i \in X_{u_i},\\\label{con11}
		&\quad\:\: \begin{bmatrix}
		 \mathcal{G}_i & \mathcal{H}_i^\top(x)\\
		 \mathcal{H}_i(x) & \mu_i \mathds{I}_{\mathcal{T}}
		\end{bmatrix} \succeq 0, \quad\quad\quad\quad\quad~\:\:\, \forall x_i \in X_i,
	\end{align}
  \end{subequations}
with 
\begin{align}\notag
 \mathcal{G}_i &=-\varepsilon_i P_i^{-1}-(\mathcal X_i^{1,\mathcal{T}} - D_i\mathcal W_i^{0,\mathcal{T}})\mathcal H_i(x_i)\\ \label{G} &~~~- \mathcal H_i(x_i)^\top(\mathcal X_i^{1,\mathcal{T}} - D_i\mathcal W_i^{0,\mathcal{T}})^\top -\mu_i \Xi_i \Xi_i^\top - \pi_i \mathds{I}_{n_i}.
 \end{align}
 Then $\mathds B_i(x_i) = x_i^\top \big[\Upsilon_i^\dagger(x_i)\mathcal N_i^{0,\mathcal{T}}\mathcal H_i(x_i)\big]^{-1}  x_i$ is a CSBC with $\rho_i = \frac{1}{\pi_i}\Vert D_i \Vert^2$, and $\nu_i =  \mathcal{U}^{0,\mathcal{T}}_i \mathcal H_i(x_i)\big[\Upsilon_i^\dagger(x_i)\mathcal N_i^{0,\mathcal{T}}\mathcal H_i(x_i)\big]^{-1} x_i$ is its decentralized controller.
\end{theorem}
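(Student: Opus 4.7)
The plan is to establish that $\mathds{B}_i(x_i) = x_i^\top P_i x_i$ satisfies Definition~\ref{csbc} with $P_i$ being the constant symmetric positive-definite matrix implicitly defined by condition~\eqref{matrixP}. First I would left-multiply~\eqref{matrixP} by the left pseudoinverse $\Upsilon_i^\dagger(x_i)$ and use $\Upsilon_i^\dagger(x_i)\Upsilon_i(x_i) = \mathds{I}_{n_i}$ to obtain $\Upsilon_i^\dagger(x_i)\mathcal{N}_i^{0,\mathcal{T}}\mathcal{H}_i(x_i) = P_i^{-1}$, and hence $[\Upsilon_i^\dagger(x_i)\mathcal{N}_i^{0,\mathcal{T}}\mathcal{H}_i(x_i)]^{-1} = P_i$. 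With this identification, conditions~\eqref{con}, \eqref{con1}, \eqref{con2} coincide verbatim with the CSBC conditions~\eqref{subsys1}, \eqref{subsys11}, \eqref{subsys111}, so the first three requirements of a CSBC are immediate.

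The substantive part is verifying the decay inequality~\eqref{eq:martingale2}. Comparing~\eqref{martrixS} with~\eqref{matrixP}, I would set $\mathcal{S}_i(x_i) = \mathcal{H}_i(x_i)P_i$, which turns the controller of Lemma~\ref{Lemma1}, $\nu_i = \mathcal{U}_i^{0,\mathcal{T}}\mathcal{S}_i(x_i)x_i$, into the proposed decentralized law $\nu_i = \mathcal{U}_i^{0,\mathcal{T}}\mathcal{H}_i(x_i)P_i x_i$. Substituting the data-based representation of Lemma~\ref{Lemma1} into~\eqref{Lie derivative} produces
\begin{equation*}
\mathcal{L}\mathds{B}_i(x_i) = 2 x_i^\top P_i (\mathcal{X}_i^{1,\mathcal{T}} - D_i\mathcal{W}_i^{0,\mathcal{T}} - \Gamma_i)\mathcal{H}_i(x_i) P_i x_i + 2 x_i^\top P_i D_i w_i.
\end{equation*}
Adding $\varepsilon_i\mathds{B}_i(x_i) = \varepsilon_i x_i^\top P_i P_i^{-1} P_i x_i$ and symmetrizing, the symmetric matrix sandwiched between $P_i x_i$ and $x_i^\top P_i$ is precisely $-\mathcal{G}_i - \pi_i\mathds{I}_{n_i}$ from~\eqref{G}, except for the additional cross terms $-\Gamma_i\mathcal{H}_i(x_i) - \mathcal{H}_i^\top(x_i)\Gamma_i^\top$ involving the unknown noise.

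The main obstacle is eliminating $\Gamma_i$ while keeping the bound tight, and this is where condition~\eqref{con11} enters. I would apply a matrix Young's inequality $-\Gamma_i\mathcal{H}_i(x_i) - \mathcal{H}_i^\top(x_i)\Gamma_i^\top \preceq \mu_i \Gamma_i\Gamma_i^\top + \tfrac{1}{\mu_i}\mathcal{H}_i^\top(x_i)\mathcal{H}_i(x_i)$ and then invoke the noise bound~\eqref{noise-bound} to replace $\mu_i\Gamma_i\Gamma_i^\top$ by $\mu_i\Xi_i\Xi_i^\top$. The only surviving term that is not already absorbed into $\mathcal{G}_i$ is $\tfrac{1}{\mu_i}\mathcal{H}_i^\top(x_i)\mathcal{H}_i(x_i)$, and the Schur complement applied to~\eqref{con11} (with pivot block $\mu_i\mathds{I}_{\mathcal{T}}\succ 0$) is exactly equivalent to $\mathcal{G}_i \succeq \tfrac{1}{\mu_i}\mathcal{H}_i^\top(x_i)\mathcal{H}_i(x_i)$. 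Consequently, the inner matrix is upper bounded by $-\pi_i\mathds{I}_{n_i}$, which gives
\begin{equation*}
\mathcal{L}\mathds{B}_i(x_i) + \varepsilon_i \mathds{B}_i(x_i) \leq -\pi_i\|P_i x_i\|^2 + 2(P_i x_i)^\top D_i w_i.
\end{equation*}

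Finally, I would handle the remaining interconnection cross term by a scalar Young's inequality with weight $\pi_i$, namely $2(P_i x_i)^\top D_i w_i \leq \pi_i\|P_i x_i\|^2 + \tfrac{1}{\pi_i}\|D_i\|^2\|w_i\|^2$. The two $\pi_i\|P_i x_i\|^2$ contributions cancel, leaving $\mathcal{L}\mathds{B}_i(x_i) \leq -\varepsilon_i \mathds{B}_i(x_i) + \rho_i\|w_i\|^2$ with interaction gain $\rho_i = \tfrac{1}{\pi_i}\|D_i\|^2$, which completes condition~\eqref{eq:martingale2} and hence the proof. The technical heart of the argument is the coupled use of the matrix Young's inequality, the noise-bound assumption, and the Schur complement to convert the noise-dependent decay condition into the SOS-friendly LMI~\eqref{con11}.
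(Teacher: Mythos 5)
Your proposal is correct and follows essentially the same route as the paper's proof: identifying $P_i=[\Upsilon_i^\dagger(x_i)\mathcal N_i^{0,\mathcal{T}}\mathcal H_i(x_i)]^{-1}$ via \eqref{matrixP}, setting $\mathcal S_i(x_i)=\mathcal H_i(x_i)P_i$ to recover the data-based representation of Lemma~\ref{Lemma1}, and then combining the scalar Young's inequality (weight $\pi_i$) on the interconnection term, the matrix Young's inequality (weight $\mu_i$) with the noise bound \eqref{noise-bound}, and the Schur complement of \eqref{con11}. The only differences are cosmetic—you defer the scalar Young's step on $2x_i^\top P_iD_iw_i$ to the end rather than applying it first, and your intermediate description of the sandwiched matrix as ``$-\mathcal G_i-\pi_i\mathds I_{n_i}$ plus cross terms'' momentarily omits the $-\mu_i\Xi_i\Xi_i^\top$ accounting, which your subsequent absorption argument correctly restores.
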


\begin{proof} 
	Given condition~\eqref{matrixP}, it is clear that satisfaction of conditions~\eqref{con}-\eqref{con2} implies conditions~\eqref{subsys1}-\eqref{subsys111} with  $P_i = [{\Upsilon_i^\dagger(x_i)}\mathcal N_i^{0,\mathcal{T}}\mathcal H_i(x_i)]^{-1}$.  We now proceed with showing condition~\eqref{eq:martingale2}, as well. Since ${\Upsilon_i(x_i)}P_i^{-1} =  \mathcal N_i^{0,\mathcal{T}}\mathcal H_i(x_i)$ according to~\eqref{matrixP}, and $P_i^{-1}P_i = \mathds{I}_{n_i}$, then ${\Upsilon_i(x_i)} = \mathcal N_i^{0,\mathcal{T}}\mathcal H_i(x_i)P_i$. Since ${\Upsilon_i(x_i)} = \mathcal N_i^{0,\mathcal{T}}\mathcal {S}_i (x_i)$ as per~\eqref{martrixS}, then one could set $\mathcal {S}_i (x_i) = \mathcal H_i(x_i)P_i$ and, accordingly, $\mathcal {S}_i (x_i) P_i^{-1} = \mathcal H_i(x_i)$. Given that $A_i{\Upsilon_i(x_i)} + B_i\mathcal{F}_i(x_i) = (\mathcal X_i^{1,\mathcal{T}} - D_i\mathcal W_i^{0,\mathcal{T}}-{\Gamma_i})\mathcal {S}_i (x_i)$ according to Lemma~\ref{Lemma1}, we have
\begin{align}\nonumber
	(A_i&\Upsilon_i(x_i) + B_i\mathcal{F}_i(x_i))P_i^{-1}\\\nonumber &= (\mathcal X_i^{1,\mathcal{T}} - D_i\mathcal W_i^{0,\mathcal{T}}-\Gamma_i)\underbrace{\mathcal {S} (x_i)P_i^{-1}}_{\mathcal H_i(x_i)}\\\label{matrix_H} &= (\mathcal X_i^{1,\mathcal{T}} - D_i\mathcal W_i^{0,\mathcal{T}}-\Gamma_i)\mathcal H_i(x_i).
\end{align}
With the definition of Lie derivative in~\eqref{Lie derivative}, we have
\begin{align*}
	\mathcal{L}\mathds B_i(x_i) &=\partial_{x_i} \mathds B_i(x_i)(A_i\mathcal M_i(x_i) + B_i\nu_i + D_iw_i)
	\\ &=2 x_i^\top P_i \big(\!(A_i \Upsilon_i(x_i) \!+\!B_i\mathcal{F}_i(x_i)\!) x_i\!+\! D_i w_i\big) 
		\\ &=2 x_i^\top P_i (A_i \Upsilon_i(x_i) \!+\!B_i\mathcal{F}_i(x_i)) x_i  + 2 \underbrace{x_i^\top P_i}_{a_i} \underbrace{D_i w_i}_{c_i}.
\end{align*}
According to the Cauchy-Schwarz inequality \cite{bhatia1995cauchy}, \emph{i.e.,}  $a_i c_i \leq \Vert a_i \Vert \Vert c_i \Vert,$ for any $a^\top_i, c_i \in \R^{n_i}$, followed by
employing Young's inequality \cite{young1912classes}, \emph{i.e.,} $\Vert a_ i \Vert \Vert c_i \Vert \leq \frac{\pi_i}{2} \Vert a_i \Vert^2 + \frac{1}{2\pi_i} \Vert c_i \Vert^2$, for any $\pi_i \in \mathbb R^+$, one has
\begin{align*}
	\mathcal{L}\mathds B_i(x_i) &\leq  2 x_i^\top P_i (A_i \Upsilon_i(x_i) \!+\!B_i\mathcal{F}_i(x_i)) x_i \\
	&~~~\!+\! \pi_i x_i^\top P_i P_ix_i  + \frac{1}{\pi_i} \Vert D_i \Vert^2 \Vert w_i\Vert^2.
\end{align*}
By expanding the expression given above and factorizing the term $ x_i^\top P_i$ from left and $P_i x_i$ from right, we have
\selectfont\begin{align*}
	\mathcal{L}\mathcal B_i(x_i)&\leq x_i^\top P_i\Big[(A_i\Upsilon_i(x_i) +B_i \mathcal F_i(x_i))P_i^{-1}\\ &~~~+ P_i^{-1} (A_i \Upsilon_i(x_i)+B_i \mathcal F_i(x_i))^\top + \pi_i \mathds{I}_{n_i} \Big] P_i x_i\\&~~~  + \frac{1}{\pi_i}\Vert D_i\Vert^2\Vert w_i \Vert^2.
\end{align*}

\begin{figure*}[t!]
	\rule{\textwidth}{0.1pt}
	\begin{align}
		\nonumber
		\mathcal{L}\mathds{B}(x) & \overset{{\eqref{Compose-result}}}{=} \mathcal{L}\sum_{i=1}^\mathcal{Q} \mathds{B}_i (x_i) = \sum_{i=1}^\mathcal{Q}\mathcal{L}\mathds{B}_i(x_i)  \overset{{\eqref{eq:martingale2}}}{\leq} \sum_{i=1}^\mathcal{Q}(-\varepsilon_i \mathds{B}_i\big(x_i\big)+\rho_i\|w_i\|^2) = \sum_{i=1}^\mathcal{Q}\big(-\varepsilon_i \mathds{B}_i\left(x_i\right)+\rho_i \sum_{j=1, i \neq j}^\mathcal{Q}\|w_{ij}\|^2\big)\\\nonumber
		&\,  \overset{{\eqref{Internal}}}{=} \sum_{i=1}^\mathcal{Q}\big(-\varepsilon_i \mathds{B}_i\left(x_i\right)+\rho_i \sum_{j=1, i \neq j}^\mathcal{Q}{\|x_j\|^2}\big)
		\overset{{\eqref{subsys1}}}{\leq} \sum_{i=1}^\mathcal{Q}\big(-\varepsilon_i \mathds{B}_i(x_i)+\sum_{j=1, i \neq j}^\mathcal{Q} \frac{\rho_i}{\phi_j} \mathds{B}_j\left(x_{j}\right)\big)\\	\label{chain_inequality}
		& \overset{{\eqref{gain}}}{=} \sum_{i=1}^\mathcal{Q}\big(-\varepsilon_i \mathds{B}_i(x_i)+\sum_{j=1, i \neq j}^\mathcal{Q} \hat{\delta}_{ij} \mathds{B}_j\left(x_{j}\right)\big)
		\overset{{\eqref{gain}}}{=}\mathbf{1}_\mathcal{Q}^{\top}(-\hat{\varepsilon}+\Delta)\underbrace{\left[\mathds{B}_1\left( x_1\right) ; \ldots ; \mathds{B}_\mathcal{Q}\left( x_\mathcal{Q}\right)\right]}_{\bar{\mathds{B}}(x)} \overset{{\eqref{Comp-proof}}}{\leq} -\varepsilon  \mathds B (x).
	\end{align}
	\rule{\textwidth}{0.1pt}
\end{figure*}

\noindent Then, by replacing $(A_i \Upsilon_i(x_i) + B_i\mathcal{F}_i(x_i))P_i^{-1}$ with its data-based representation according to~\eqref{matrix_H}, one can obtain
\begin{align}\notag
	\mathcal{L}\mathds B_i(x_i) &\leq  x_i^\top \!\!P_i\Big[(\mathcal X_i^{1,\mathcal{T}} -D_i \mathcal W_i^{0,\mathcal{T}} -{\Gamma_i})\mathcal H_i(x_i) \!\\\notag &~~~+ \mathcal H_i(x_i)^\top\!(\mathcal X_i^{1,\mathcal{T}} \!\!\!-\!\! D_i \mathcal W_i^{0,\mathcal{T}}-{\Gamma_i})^\top
	+ {\pi_i \mathds{I}_{n_i}} \Big]P_i x_i\\\label{Youngs-ex} & ~~~+ \frac{1}{\pi_i} \Vert D_i \Vert^2\Vert w_i\Vert^2.
\end{align}
According to Young's inequality for matrices $L_i(x_i)$ and $M_i(x_i)$ \cite{Ando1995}, which states that 
$L_i(x_i) M_i(x_i)^\top + M_i(x_i) L_i(x_i)^\top \succeq -\mu_i L_i(x_i) L_i(x_i)^\top - \mu^{-1}_i M_i(x_i) M_i(x_i)^\top$ 
for any scalar $\mu_i > 0$, and by applying this to 
$ - \Gamma_i \mathcal{H}_i(x_i)  - \mathcal{H}_i(x_i)^\top \Gamma_i^\top$ in \eqref{Youngs-ex}, we obtain
\begin{align*}
	\mathcal{L}\mathds B_i(x_i) &\leq  x_i^\top \!\!P_i\Big[(\mathcal X_i^{1,\mathcal{T}} -D_i \mathcal W_i^{0,\mathcal{T}} \!)\mathcal H_i(x_i) +{\mu_i \Gamma_i \Gamma_i^\top}\!\\ &~~~+ {\frac{1}{\mu_i}\mathcal{H}_i(x_i)^\top \mathcal{H}_i(x_i)}\!+\!\mathcal H_i(x_i)^\top\!(\mathcal X_i^{1,\mathcal{T}} \!\!\!-\!\! D_i \mathcal W_i^{0,\mathcal{T}})^\top
	\\ &~~~+ \pi_i \mathds{I}_{n_i} \Big]P_i x_i + \frac{1}{\pi_i} \Vert D_i \Vert^2\Vert w_i\Vert^2.
\end{align*}
Then, according to the inequality in~\eqref{noise-bound}, one has
\begin{align*}
	\mathcal{L}\mathds B_i(x_i) &\leq  x_i^\top \!\!P_i\Big[(\mathcal X_i^{1,\mathcal{T}} -D_i \mathcal W_i^{0,\mathcal{T}} \!)\mathcal H_i(x_i) +{\mu_i \Xi_i \Xi_i^\top}\!\\ &~~~+ {\frac{1}{\mu_i}\mathcal{H}_i(x_i)^\top \mathcal{H}_i(x_i)} \!\!+\!\mathcal H_i(x_i)^\top\!(\mathcal X_i^{1,\mathcal{T}} \!\!\!-\!\! D_i \mathcal W_i^{0,\mathcal{T}})^\top
	\\ &~~~+ \pi_i \mathds{I}_{n_i} \Big]P_i x_i + \frac{1}{\pi_i} \Vert D_i \Vert^2\Vert w_i\Vert^2.
\end{align*}
According to the Schur complement \cite{zhang2006schur}, we can conclude that the satisfaction of condition \eqref{con11} results in
\begin{align*}
	\mathcal{L}\mathds B_i(x_i)&\leq - \varepsilon_i x_i^\top P_i\overbrace{\underbrace{[{\Upsilon_i^\dagger(x_i)}\mathcal N_i^{0,\mathcal T}\mathcal H_i(x_i)]}_{P_i^{-1}} P_i}^{\mathds I_{n_i}} x_i\\ & ~~~ +  \frac{1}{\pi_i} \Vert D_i \Vert^2\Vert w_i\Vert^2 = - \varepsilon_i \mathds B_i(x_i) + \rho_i\Vert w_i\Vert^2,
\end{align*}
fulfilling condition~\eqref{eq:martingale2} with $\rho_i = \frac{1}{\pi_i}\Vert D_i \Vert^2$.  Hence, $\mathds B_i(x_i) = x_i^\top \big[{\Upsilon_i^\dagger(x_i)}\mathcal N_i^{0,\mathcal{T}}\mathcal H_i(x_i)\big]^{-1}  x_i$ is a CSBC and $\nu_i =  \mathcal{U}^{0,\mathcal{T}}_i \mathcal H_i(x_i)\big[{\Upsilon_i^\dagger(x_i)}\mathcal N_i^{0,\mathcal{T}}\mathcal H_i(x_i)\big]^{-1} x_i$ is its decentralized controller, thereby concluding the proof.
\end{proof} 

\begin{remark}
Our framework can also accommodate exogenous disturbances and model mismatches in the same manner as noisy state derivative data. To do so, one should assume these uncertainties are unknown but bounded, or in the case of model mismatches, bounded with a known bound, consistent with prior work. Nevertheless, incorporating multiple sources of uncertainty increases conservativeness, as all corresponding bounds appear in the proposed condition~\eqref{con11}.
\end{remark}
	
\subsection{Computation of CSBC and Local Controllers}
We reformulate conditions \eqref{con}-\eqref{con11} as a sum-of-squares (SOS) optimization program, aimed at designing a CSBC $\mathds{B}_i$, along with its associated control polynomial matrix $\mathcal{F}_i(x_i)$ for each subsystem $\Theta_i$, as presented in the following lemma.

\begin{lemma}\label{SOS}
Consider sets $X_i$, $X_{0_i}$, and $X_{u_i}$, each defined by vectors of polynomial inequalities in the form of $X_i = \{x_i \in \mathbb{R}^{n_i} |~ b_i(x_i) \geq 0\}$, $X_{0_i} = \{x_i \in \mathbb{R}^{n_i} |~ b_{0_i}(x_i) \geq 0\}$, and $X_{u_i} = \{x_i \in \mathbb{R}^{n_i} |~ b_{u_i}(x_i) \geq 0\}$, respectively. If there exist a polynomial matrix $\mathcal{H}_i(x_i)$,  constants $\phi_i,\gamma_i,\beta_i, \varepsilon_i, \pi_i, {\mu_i}\in \mathbb{R}^{+}$, and vectors of SOS polynomials $\lambda_{0_i}(x_i)$, $\lambda_{u_i}(x_i)$, $\lambda_i(x_i)$, such that 
	\begin{subequations}\label{SOS12}
		\begin{align}\label{SOS0}
			& x_i^\top \big[{\Upsilon_i^\dagger(x_i)}\mathcal N_i^{0,\mathcal{T}}\mathcal H_i(x_i)\big]^{-1}\!\! x_i \! - \! \lambda_i^\top \! \! \left(x_i\right) b_i \! \left(x_i\right)-\phi_i\left\|x_i\right\|^2 ~{\geq}~ 0,\\\label{SOS1}
			 &-x_i^\top \big[{\Upsilon_i^\dagger(x_i)}\mathcal N_i^{0,\mathcal{T}}\mathcal H_i(x_i)\big]^{-1} \! x_i\!-\lambda_{0_i}^\top\!\left(x_i\right) b_{0_i}\left(x_i\right)\!+\gamma_i ~{\geq} ~ 0
		\end{align}
				\begin{align}\label{SOS2}
			& \quad\:x_i^\top \big[{\Upsilon_i^\dagger(x_i)}\mathcal N_i^{0,\mathcal{T}}\mathcal H_i(x_i)\big]^{-1} \! x_i\!-\lambda_{u_i}^\top\left(x_i\right) b_{u_i}\!\left(x_i\right)\!-\beta_i ~ {\geq} ~ 0,\\\label{SOS3}
			& \quad\quad\: {\begin{bmatrix}
					\mathcal{G}_i & \mathcal{H}_i^\top(x)\\
					\mathcal{H}_i(x) & \mu_i \mathds{I}_{\mathcal{T}}
				\end{bmatrix}-(\lambda_i^\top\left(x_i\right) b_i\left(x_i\right))\mathds{I}_{(n_i + \mathcal{T})} ~{\succeq} ~0},
		\end{align}
	\end{subequations}
	with $\mathcal{G}_i$ as in~\eqref{G}, are SOS polynomials while condition~\eqref{matrixP} is also satisfied, then $\mathds B_i(x_i) =  x_i^\top \big[{\Upsilon_i^\dagger(x_i)}\mathcal N_i^{0,\mathcal{T}}\mathcal H_i(x_i)\big]^{-1} x_i$ is a CSBC and $\nu_i =  \mathcal{U}^{0,\mathcal{T}}_i \mathcal H_i(x_i)\big[{\Upsilon_i^\dagger(x_i)}\mathcal N_i^{0,\mathcal{T}}\mathcal H_i(x_i)\big]^{-1} x_i$ is its decentralized controller.
\end{lemma}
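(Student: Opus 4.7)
The plan is to verify that conditions \eqref{SOS0}--\eqref{SOS3} are nothing more than a Positivstellensatz-style (S-procedure) encoding of conditions \eqref{con}--\eqref{con11} of Theorem~\ref{Thm:main}, so that once all four SOS relations hold, the hypotheses of Theorem~\ref{Thm:main} are in force and the conclusion follows directly.

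First I would treat the three scalar SOS relations \eqref{SOS0}--\eqref{SOS2} one by one. For \eqref{SOS0}, fix any $x_i \in X_i$. By the definition of $X_i$, each component of $b_i(x_i)$ is nonnegative, and since every entry of $\lambda_i(x_i)$ is SOS it is also nonnegative on all of $\mathbb{R}^{n_i}$; consequently the product $\lambda_i^\top(x_i)\,b_i(x_i) \ge 0$ on $X_i$. The left-hand side of \eqref{SOS0} being SOS (hence globally nonnegative) therefore yields
\[
x_i^\top \bigl[\Upsilon_i^{\dagger}(x_i)\,\mathcal{N}_i^{0,\mathcal{T}}\,\mathcal{H}_i(x_i)\bigr]^{-1} x_i \;\ge\; \phi_i \|x_i\|^2 + \lambda_i^\top(x_i)\,b_i(x_i) \;\ge\; \phi_i \|x_i\|^2,
\]
which is exactly condition~\eqref{con}. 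Identical reasoning applied to \eqref{SOS1} on $X_{0_i}$ recovers \eqref{con1}, and to \eqref{SOS2} on $X_{u_i}$ recovers \eqref{con2}, after relabelling the multipliers and constraints.

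Next I would handle the matrix inequality \eqref{SOS3}, which is the step I expect to be the main obstacle since one must reason about matrix positive semi-definiteness rather than a single scalar inequality. The argument is the natural matrix analogue of the scalar case: the matrix polynomial on the left of \eqref{SOS3} is SOS (as a matrix polynomial), hence globally PSD as a function of $x_i$. For any $x_i \in X_i$, the subtracted term $\bigl(\lambda_i^\top(x_i)\,b_i(x_i)\bigr)\mathds{I}_{(n_i+\mathcal{T})}$ is a nonnegative multiple of the identity, so it is PSD. Adding this PSD term back to both sides of \eqref{SOS3} preserves the PSD ordering, yielding
\[
\begin{bmatrix}
\mathcal{G}_i & \mathcal{H}_i^\top(x_i)\\[2pt]
\mathcal{H}_i(x_i) & \mu_i \mathds{I}_{\mathcal{T}}
\end{bmatrix} \;\succeq\; \bigl(\lambda_i^\top(x_i)\,b_i(x_i)\bigr)\mathds{I}_{(n_i+\mathcal{T})} \;\succeq\; 0
\]
for every $x_i \in X_i$, which is precisely condition~\eqref{con11}.

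Finally, with the additional assumption that \eqref{matrixP} holds, all four hypotheses \eqref{con}--\eqref{con11} of Theorem~\ref{Thm:main} are satisfied. Invoking Theorem~\ref{Thm:main} directly gives that $\mathds{B}_i(x_i) = x_i^\top \bigl[\Upsilon_i^{\dagger}(x_i)\,\mathcal{N}_i^{0,\mathcal{T}}\,\mathcal{H}_i(x_i)\bigr]^{-1} x_i$ is a CSBC with decay rate $\varepsilon_i$ and interaction gain $\rho_i = \tfrac{1}{\pi_i}\|D_i\|^2$, and that the associated decentralized controller is $\nu_i = \mathcal{U}^{0,\mathcal{T}}_i\,\mathcal{H}_i(x_i)\bigl[\Upsilon_i^{\dagger}(x_i)\,\mathcal{N}_i^{0,\mathcal{T}}\,\mathcal{H}_i(x_i)\bigr]^{-1} x_i$, completing the proof.
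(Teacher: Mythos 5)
Your proposal is correct and follows exactly the argument the paper implicitly relies on (the paper states Lemma~\ref{SOS} without proof, treating the generalized S-procedure reduction as standard): the SOS multipliers $\lambda_i, \lambda_{0_i}, \lambda_{u_i}$ are nonnegative everywhere and the constraint polynomials $b_i, b_{0_i}, b_{u_i}$ are nonnegative on their respective sets, so global nonnegativity (resp.\ positive semi-definiteness) of the expressions in~\eqref{SOS0}--\eqref{SOS3} implies conditions~\eqref{con}--\eqref{con11} on $X_i$, $X_{0_i}$, $X_{u_i}$, and Theorem~\ref{Thm:main} then yields the conclusion. No gaps; the only unstated (but harmless) point is that condition~\eqref{matrixP} is what makes $\big[\Upsilon_i^\dagger(x_i)\mathcal N_i^{0,\mathcal{T}}\mathcal H_i(x_i)\big]^{-1} = P_i$ a constant matrix, so the expressions in~\eqref{SOS0}--\eqref{SOS2} are genuinely polynomial.
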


\begin{remark}\label{asset}
	One can employ existing software tools in the literature, like \textsf{SOSTOOLS} \cite{SOSTOOLS} alongside semi-definite programming (SDP) solvers such as \textsf{Mosek} \cite{mosek}, for the effective enforcement of conditions~\eqref{SOS0}-\eqref{SOS3}.
\end{remark}

\begin{remark}
The term $\varepsilon_i P_i^{-1}$ in condition~\eqref{SOS3} introduces bilinearity, as both $\varepsilon_i$ and $P_i$ are decision variables. To avoid this, we fix the scalar $\varepsilon_i$ a priori in our implementations. In fact, as $\varepsilon_i$ is scalar, the bilinear problem can be handled via a bisection (line search) over $\varepsilon_i$ to find the smallest feasible value. It should be noted that the coefficients $\pi_i$ and $\mu_i$ are decision variables and can be computed by the solver.
\end{remark}

\section{Compositional Construction of CBC for Interconnected Network}\label{compositional}
In  this section, we offer our compositional framework for designing a CBC and its associated safety controller across interconnected networks, via CSBC of subsystems, constructed from data.
In the next theorem, we analyze network $\Theta =\mathcal{I}$ $\left(\Theta_1, \ldots, \Theta_\mathcal{Q}\right)$ by driving \emph{small-gain} compositional reasoning to construct a CBC for an interconnected network based on CSBC of its individual subsystems. Before presenting the main compositionality result of the work, we define
\begin{equation}\label{gain}
\hat\varepsilon:=\operatorname{diag}\left({\varepsilon}_1, \ldots,{\varepsilon}_\mathcal{Q}\right), \; \Delta:=\{\hat{\delta}_{i j}\} ~\text{with}~ \hat{\delta}_{i j}=\frac{\rho_i}{\phi_j},
\end{equation}
where $\hat{\delta}_{i i}=0$, for any $i \in\{1, \ldots, \mathcal{Q}\}$.
\begin{theorem}\label{Small}
Consider a network $\Theta=\mathcal{I}$ $\left(\Theta_1, \ldots, \Theta_\mathcal{Q}\right)$, induced by $\mathcal{Q} \in \mathbb{N}^{+}$ subsystems $\Theta_i$. Suppose that each $\Theta_i$ admits a CSBC $\mathds{B}_i$, derived from data, according to  Theorem~\ref{Thm:main}. If
\begin{subequations} 
\begin{align} \label{compose1}
\quad\quad\quad\quad\quad\quad\quad&\sum_{i=1}^\mathcal{Q} \beta_i>\sum_{i=1}^\mathcal{Q} \gamma_i,\\\nonumber
\!\!\!\!\!\!\mathbf{1}_\mathcal{Q}^{\top}(-\hat{\varepsilon}+\Delta) &=:\left[\varpi_1 ; \ldots ; \varpi_\mathcal{Q}\right]^{\top}<0, ~\text {equivalently, }~\\\label{compose2}  \varpi_i & <0, \forall i \in\{1, \ldots, \mathcal{Q}\},
\end{align}
\end{subequations} 
then
\begin{align}\nonumber
\mathds{B}(x)&:=\sum_{i=1}^\mathcal{Q} \mathds{B}_i\left( x_i\right)\\\label{Compose-result} & = \sum_{i=1}^\mathcal{Q}  x_i^\top \big[{\Upsilon_i^\dagger(x_i)}\mathcal N_i^{0,\mathcal{T}}\mathcal H_i(x_i)\big]^{-1} x_i,
\end{align}
is a CBC for the network $\Theta$ with $\gamma:=\sum_{i=1}^\mathcal{Q} \gamma_i,  \beta:=\sum_{i=1}^\mathcal{Q} \beta_i,$ $\varepsilon:=-\varpi $, with $\max _{1 \leq i \leq \mathcal{Q}} \varpi_i<\varpi<0$. In addition, $ \nu = [\nu_1;\dots; \nu_\mathcal{Q}]$ with $\nu_i =  \mathcal{U}^{0,\mathcal{T}}_i \mathcal H_i(x_i)\big[{\Upsilon_i^\dagger(x_i)}\mathcal N_i^{0,\mathcal{T}}\mathcal H_i(x_i)\big]^{-1} x_i, i\in\{1,\dots, \mathcal{Q}\}$, is the safety controller of the network. 
\end{theorem}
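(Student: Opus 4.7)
The plan is to verify, for the candidate $\mathds{B}(x) := \sum_{i=1}^\mathcal{Q} \mathds{B}_i(x_i)$, each of the three defining conditions of a CBC listed in Definition~\ref{cbc}, leveraging the CSBC properties \eqref{subsys1}--\eqref{eq:martingale2} that each $\mathds{B}_i$ inherits from Theorem~\ref{Thm:main}. The initial and unsafe set conditions \eqref{CBC1}--\eqref{CBC2} will be handled first, since they follow immediately by summation: for $x \in X_0 = \prod_i X_{0_i}$, condition~\eqref{subsys11} gives $\mathds{B}(x) \leq \sum_i \gamma_i = \gamma$, and analogously \eqref{subsys111} yields $\mathds{B}(x) \geq \beta$ on $X_u$; the strict separation $\beta > \gamma$ is then precisely the compositional hypothesis \eqref{compose1}.

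The main work is the infinitesimal condition \eqref{cbceq}, for which I would reproduce the chain of inequalities displayed in \eqref{chain_inequality}. The key steps are: (i) use linearity of the Lie derivative along the interconnected dynamics \eqref{sysN1} to write $\mathcal{L}\mathds{B}(x) = \sum_i \mathcal{L}\mathds{B}_i(x_i)$; (ii) substitute the per-subsystem bound \eqref{eq:martingale2} to obtain $\mathcal{L}\mathds{B}(x) \leq \sum_i (-\varepsilon_i \mathds{B}_i(x_i) + \rho_i \|w_i\|^2)$; (iii) invoke the interconnection constraint $w_{ij} = x_j$ from \eqref{Internal} together with the partitioning \eqref{in-out} to rewrite $\|w_i\|^2 = \sum_{j\neq i}\|x_j\|^2$; (iv) apply the quadratic lower bound \eqref{subsys1}, $\|x_j\|^2 \leq \mathds{B}_j(x_j)/\phi_j$, to transfer everything onto the CSBCs themselves; and (v) recognize the resulting expression as the compact matrix form $\mathbf{1}_\mathcal{Q}^\top(-\hat{\varepsilon}+\Delta)\bar{\mathds{B}}(x)$, using the definitions of $\hat{\varepsilon}$ and $\Delta$ in~\eqref{gain}.

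The hard step, and the one that truly uses the small-gain hypothesis, is closing the argument from $\mathbf{1}_\mathcal{Q}^\top(-\hat{\varepsilon}+\Delta)\bar{\mathds{B}}(x) = \sum_i \varpi_i \mathds{B}_i(x_i)$ to an estimate of the form $-\varepsilon\,\mathds{B}(x)$. Since $\mathds{B}_i \geq 0$ and condition~\eqref{compose2} guarantees $\varpi_i < 0$ for every $i$, the set $(\max_i \varpi_i, 0)$ is nonempty, and any choice of $\varpi$ in this interval satisfies $\varpi_i - \varpi < 0$ for all $i$, so $\sum_i \varpi_i \mathds{B}_i(x_i) \leq \varpi \sum_i \mathds{B}_i(x_i) = \varpi\,\mathds{B}(x)$. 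Setting $\varepsilon := -\varpi > 0$ then yields $\mathcal{L}\mathds{B}(x) \leq -\varepsilon\,\mathds{B}(x)$, which is exactly \eqref{cbceq}. The decentralized controller of the network is assembled as $\nu = [\nu_1;\ldots;\nu_\mathcal{Q}]$ with $\nu_i$ given by Theorem~\ref{Thm:main}, since this is precisely the controller under which each $\mathcal{L}\mathds{B}_i$ satisfies \eqref{eq:martingale2}; safety over an infinite horizon is then a direct consequence of Theorem~\ref{Theo1}.
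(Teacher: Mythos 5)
Your proposal is correct and follows essentially the same route as the paper's proof: summation of the CSBC level-set conditions for \eqref{CBC1}--\eqref{CBC2}, the chain of inequalities \eqref{chain_inequality} for the decay condition, and the observation that $\varpi_i<\varpi<0$ together with $\mathds{B}_i\geq 0$ yields $\sum_i\varpi_i\mathds{B}_i(x_i)\leq\varpi\,\mathds{B}(x)=-\varepsilon\,\mathds{B}(x)$. Your closing step is in fact a slightly more direct phrasing of what the paper obtains via the max-based definition in \eqref{Comp-proof}, so no gap remains.
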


\begin{proof} 
We first show that conditions~\eqref{CBC1}, \eqref{CBC2} hold. For any ${x:=\left[x_1 ; \ldots ; x_\mathcal{Q}\right]} \in X_0=\prod_{i=1}^\mathcal{Q} X_{0_i}$, by employing condition \eqref{subsys11}, we have
\begin{equation*}
	\mathds{B}(x)=\sum_{i=1}^\mathcal{Q} \mathds{B}_i\left( x_i\right) \leq \sum_{i=1}^\mathcal{Q} \gamma_i=\gamma,
\end{equation*}
and similarly for any ${x:=\left[x_1 ; \ldots ; x_\mathcal{Q}\right]} \in X_{u}=\prod_{i=1}^\mathcal{Q} X_{u_i}$, utilizing condition \eqref{subsys111}, one has
\begin{equation*}
	\mathds{B}(x)=\sum_{i=1}^\mathcal{Q} \mathds{B}_i\left(x_i\right) \geq \sum_{i=1}^\mathcal{Q} \beta_i=\beta,  
\end{equation*}
satisfying conditions~\eqref{CBC1}, \eqref{CBC2} with $\gamma=\sum_{i=1}^\mathcal{Q} \gamma_i$ and $\beta=\sum_{i=1}^\mathcal{Q} \beta_i$. Note that $\beta>\gamma$ according to condition~\eqref{compose1}.

Now, we show that condition~\eqref{cbceq} holds, as well. By employing conditions~\eqref{subsys1}, \eqref{eq:martingale2}, compositionality condition $\mathbf{1}_\mathcal{Q}^{\top}(-\hat{\varepsilon}+$ $\Delta)<0$, and by defining
\begin{equation}\label{Comp-proof}
	\begin{aligned}
		-\varepsilon s & :=\max \left\{\mathbf{1}_\mathcal{Q}^{\top}(-\hat{\varepsilon}+\Delta) \bar{\mathds{B}}(x) \mid \mathbf{1}_\mathcal{Q}^{\top} \bar{\mathds{B}}(x)=s\right\}\!,
	\end{aligned}
\end{equation}
where $\bar{\mathds{B}}(x)=\left[\mathds{B}_1\left( x_1\right) ; \ldots ; \mathds{B}_\mathcal{Q}\left(x_\mathcal{Q}\right)\right]$, \emph{i.e.,} $s=\mathds{B}(x)$, one can obtain the chain of inequalities in~\eqref{chain_inequality}.  

\noindent As the last stage of the proof, we now show that $\varepsilon > 0$ with $\varepsilon=-\varpi $. Since $\mathbf{1}_\mathcal{Q}^{\top}(-\hat\varepsilon+\Delta):=\left[\varpi_1 ; \ldots ; \varpi_\mathcal{Q}\right]^{\top}<0$, and $\max _{1 \leq i \leq \mathcal{Q}} \varpi_i<\varpi <0$ according to \eqref{compose2}, one has
\begin{equation*}
	\begin{aligned}
		-\varepsilon s&\! \overset{{\eqref{Comp-proof}}}{=}\mathbf{1}_{\mathcal{Q}}^{\top}(-\hat\varepsilon+\Delta) \bar{\mathds{B}}(x)\\ &\! \! \overset{{\eqref{compose2}}}{=}\left[\varpi_1 ; \ldots ; \varpi_\mathcal{Q}\right]^{\top}\left[\mathds{B}_1\left(x_1\right)\right.
		\left.\ldots ; \mathds{B}_\mathcal{Q}\left(x_\mathcal{Q}\right)\right]\\ & =\varpi_1 \mathds{B}_1\left(x_1\right)+\ldots+\varpi_\mathcal{Q}\mathds{B}_\mathcal{Q}\left(x_\mathcal{Q}\right)
		\\ &\leq \varpi \left(\mathds{B}_1\left(x_1\right)+\ldots+\mathds{B}_\mathcal{Q}\left(x_\mathcal{Q}\right)\right)\overset{\eqref{Comp-proof}}{=}\varpi  s.
	\end{aligned}
\end{equation*}
Then, $-\varepsilon s \leq \varpi  s$, $\forall s \in \mathbb{R}^+$, and accordingly, $-\varepsilon \leq \varpi $ as $s$ is positive. Since $\max _{1 \leq i \leq \mathcal{Q}} \varpi_i<\varpi <0$, then $\varepsilon=-\varpi >0$, implying that condition~~\eqref{cbceq} is also satisfied, which completes the proof. 
\end{proof} 

We provide Algorithm~\ref{Alg:1} to outline the required procedures for designing a CBC and its associated safety controller.

\begin{remark}
Note that the compositional conditions~\eqref{compose1} and~\eqref{compose2} should be verified a posteriori. For condition~\eqref{compose1}, in homogeneous networks with identical subsystems, satisfying $\beta_i > \gamma_i$ for a single subsystem suffices to ensure the condition holds for the entire network. Condition~\eqref{compose2} is a \emph{sufficient} condition and should also be checked after solving conditions~\eqref{SOS0}–\eqref{SOS3} using data.
\end{remark}

\begin{remark}
While the compositional framework in this work is inspired by model-based approaches, it differs fundamentally as it relies purely on data. Particularly, the interaction gains $\rho_i$ and $\phi_j$ are derived from Theorem~\ref{Thm:main} using only noisy data from subsystems, rather than known models. These gains play a critical role in the compositional condition~\eqref{compose2} of Theorem~\ref{Small} via their appearance in $\Delta$. As a result, our compositional condition is inherently data-driven, making the problem more challenging than in model-based settings where full system models are available.
\end{remark}

\begin{algorithm}[t!]
	\caption{Data-driven design of CBC and its safety controller across unknown network}\label{Alg:1}
\begin{algorithmic}[1]
		\REQUIRE Initial and unsafe sets $X_{0_i}, X_{u_i}$, a dictionary of $\mathcal M_i(x_i)$ up to a known degree,  matrix $\Xi_i$
		\FOR{$i = 1, \cdots, \mathcal{Q}$}
		\STATE Collect input-state trajectories $\mathcal{U}_i^{0,\mathcal{T}}$, $\mathcal{W}_i^{0,\mathcal{T}}$, $\mathcal{X}_i^{0,\mathcal{T}}$, $\mathcal{X}_i^{1,\mathcal{T}}$ as in~\eqref{New}, where $\mathcal{X}_i^{1,\mathcal{T}}:= \bar{\mathcal{X}}_i^{1,\mathcal{T}} + \Gamma_i$
		\STATE Construct $\mathcal{N}^{0,\mathcal{T}}_i$ in~\eqref{trajN} based on $\mathcal M_i(x_i) = \Upsilon_i(x_i)x_i$ in~\eqref{Transformation}
	     \STATE Employ \textsf{SOSTOOLS} to satisfy condition $\Upsilon_i^\dagger(x_i)\mathcal{N}_i^{0,\mathcal{T}}\mathcal{H}_i(x_i) = \mathcal C_i= P_i^{-1}$ in~\eqref{matrixP}, where $\mathcal C_i \succ 0$, and condition~\eqref{SOS3} 
		\STATE Under the constructed $\mathds B_i(x_i) = x_i^\top \big[\Upsilon_i^\dagger(x_i)\mathcal N_i^{0,\mathcal{T}}\mathcal H_i(x_i)\big]^{-1}  x_i$, employ \textsf{SOSTOOLS} and design $\phi_i$, $\gamma_i$ and $\beta_i$ as per conditions~\eqref{SOS0}-\eqref{SOS2}
		\STATE Compute interaction gain $\rho_i = \frac{1}{\pi_i}\Vert D_i \Vert^2$
		\ENDFOR
		\IF{compositionality conditions in~\eqref{compose1}, \eqref{compose2} are satisfied}
		\STATE $\mathds{B}(x) := \sum_{i=1}^\mathcal{Q} \mathds{B}_i(x_i)$, with $\mathds B_i(x_i) = x_i^\top \big[{\Upsilon_i^\dagger(x_i)}\mathcal N_i^{0,\mathcal{T}}\mathcal H_i(x_i)\big]^{-1}  x_i$, is a CBC for network $\Theta$ with its safety controller $ \nu = [\nu_1;\dots; \nu_\mathcal{Q}]$, where $\nu_i =  \mathcal{U}^{0,\mathcal{T}}_i \mathcal H_i(x_i)\big[{\Upsilon_i^\dagger(x_i)}\mathcal N_i^{0,\mathcal{T}}\mathcal H_i(x_i)\big]^{-1} x_i$
		\ELSE 
		\STATE Return to Step~2 and carry out the required procedures again with an expanded dataset containing a greater number of collected samples
		\ENDIF
		\ENSURE CBC $\mathds{B}(x)$ and its safety controller $\nu=[\nu_1; \cdots; \nu_{\mathcal{Q}}]$
	\end{algorithmic}
\end{algorithm}

\begin{table*}[t!]
	\centering
	\caption{Overview of our data-driven results for a set of \emph{homogeneous} interconnected networks. Here, $\mathcal{Q}$ is the number of subsystems, $\mathcal{T}$ the number of samples, and $n_i$ the number of state variables per subsystem. The parameter $\bar{\varkappa}_i$ denotes the noise bound coefficient from~\eqref{noise-bound-2}, while $\gamma_i$ and $\beta_i$ define the initial and unsafe level sets, respectively. The values $\rho_i$ and $\varepsilon_i$ represent the interaction gain and decay rate. We report the running time ($\mathsf{RT}$, in seconds) and memory usage ($\mathsf{MU}$, in megabits) to solve the SOS problem for each subsystem via Algorithm~\ref{Alg:1}. Notably, in strongly connected topologies (\emph{e.g.,} fully interconnected networks), fewer subsystems are considered, as condition~\eqref{compose2} becomes more restrictive due to increased interdependence.}
	\label{tab:system-configurations}
	\resizebox{\linewidth}{!}{\begin{tabular}{@{}llccccccccccc@{}}
			\toprule
			Type of System & Topology & $\mathcal{Q}$ & $\mathcal{T}$&  $n_i$& $\bar{\varkappa}_i$ & $\gamma_i$ & $\beta_i$ & $\rho_i$ & $\varepsilon_i$ & \(\mathsf{RT}\) & $\mathsf{MU}$ \\ 
			\midrule
			\midrule
			\multirow{2}{*}{Lorenz} &Fully connected & $1000$& $15$ & $3$ & $0.03$ &  $ 478.71$ &   $ 490.16$ &  $   4.71 \times 10^{-7}$  &  $0.99$ & $10.49$ & $9.07$ \\
			& Ring & $2000$& $13$  &$3$ & $0.12$ &  $ 501.44$ &    $514.51$  &  $ 1.25  \times 10^{-4}$ & $0.99$  &  $7.32$ & $2.78$ \\
			\hline
			Spacecraft     & Line              & $2000$ & $14$  & $3$ & $0.75$ &   $73.38$  &  $75.10$ &  $0.06$  & $0.99$  & $7.86$  & $2.83$ \\
			\hline
			Lu                     & Star               & $2000$ & $11$  & $3$& $0.04$ &   $729.92$ &   $ 749.85$  &    $1.64 \times 10^{-6}$ &  $ 0.99$& $4.3$  & $2.25$  \\
			\hline
			\multirow{2}{*}{Duffing oscillator}             & Ring              & $2000$ & $20$  & $2$& $0.18$ & $    281.33 $ &  $       291.32 $ &  $   0.01 $  & $0.99$  &  $3.9$  &  $1.67$ \\ 
			& Binary                & $1023$ & $20$ & $2$& $0.08$ &  $ 280.96 $ &  $   290.76$ &     $0.003$ & $0.99$  &  $3.84$  &  $1.67$  \\ 
			\hline
			Chen                   & Fully connected & $1000$ &  $14$  & $3$ &  $0.27$ &   $514.51$ &    $527.59$ &  $1.39  \times 10^{-5}$  &  $0.99$ & $9.98$ &  $8.28$\\  
			\bottomrule
	\end{tabular}}
\end{table*}

\section{Case Study: A Set of Benchmarks}\label{sec:Case_study}

We demonstrate the effectiveness of our proposed approach on a set of benchmark systems, including interconnected networks of identical Lorenz, Chen, Lu~\cite{lopez2019synchronization}, spacecraft~\cite{khalil2002control}, and Duffing oscillator~\cite{WANG2014162} systems (\emph{i.e.}, homogeneous networks), implemented under various interconnection topologies, as well as a \emph{heterogeneous} network interconnected via a line topology (cf. Figure~\ref{Het-fig}). A concise overview of the homogeneous networks is presented in Table~\ref{tab:system-configurations}, while Table~\ref{tab:system-configurations-a} summarizes the key details of the heterogeneous network.
It is worth noting that Lorenz-type systems (\emph{i.e.,} Lorenz, Chen, Lu) are chaotic systems whose networks are widely used in various real-world applications due to their ability to model complex, chaotic behaviors.  This includes applications in secure communications for encryption using chaotic signals\cite{Wang2009}, weather prediction models to simulate atmospheric dynamics \cite{DeterministicNonperiodicFlow}, robotics and autonomous systems to adapt to unpredictable environments \cite{sprott2010elegant}, and neuroscience to model chaotic brain activity and understand neural dynamics and disorders such as epilepsy \cite{strogatz2018nonlinear}.

The main objective in all benchmarks is to design a CBC and its robust safety controller for the interconnected network with an unknown mathematical model, ensuring that the network's states remain within a safe region for an infinite time period. To achieve this, under Algorithm~\ref{Alg:1}, we collect input-state trajectories as in~\eqref{New} and fulfill conditions~\eqref{SOS12} by constructing CSBC and local controllers for all subsystems. Under the compositionality findings of Theorem~\ref{Small}, we then compositionally construct a CBC and its  robust safety controller for interconnected networks. By leveraging Theorem~\ref{Theo1}, we ensure that, under the designed controller from data, all trajectories of unknown interconnected networks starting from $X_0$ remain within the desired safe sets in the infinite time horizon. 

The following subsections provide detailed descriptions of the benchmarks outlined in Tables~\ref{tab:system-configurations} and~\ref{tab:system-configurations-a}. The matrices \(A_i\) and \(B_i\) are treated as unknown in all case studies, while the dictionary of monomials is assumed to be constructed based on the known upper bound on the maximum degree of the monomials appearing in the true subsystem dynamics (cf. Remark~\ref{rem:dict}).
In addition, the computed CSBC and their corresponding local controllers \(\nu_i\) for the subsystems, the CBC and their safety controllers \(\nu\) for the interconnected networks, and network parameters (\emph{i.e.}, \(\gamma\), \(\beta\), and \(\varepsilon\)) are all provided in the Appendix. All simulations were conducted on a \textsf{MacBook M2 chip} with 32~\textsf{GB} of memory. For broader reproducibility, the implementation codes for all case studies are available online.\footnote{\href{https://github.com/OmidAkbarzadeh1991/Data-Driven-CBC-Network}{https://github.com/OmidAkbarzadeh1991/Data-Driven-CBC-Network}}
\subsection{Lorenz Network with Fully-Interconnected  Topology}
 We consider a Lorenz network with a fully-interconnected topology of $1000$ subsystems, with the dynamics of each subsystem described as
\begin{align}\label{lorenz-fully}
		\begin{split}
			& \dot{x}_{i_1}=10 \,x_{i_2}-10\, x_{i_1} - 2 \times 10^{-5}\!\!\!\! \sum_{\substack{j=1, j \neq i}}^{1000} x_{j_1} + \nu_{i_1},\\
			& \dot {x}_{i_2}=28\, x_{i_1}-x_{i_2}-x_{i_1} x_{i_3}- 2 \times 10^{-5}\!\!\!\! \sum_{\substack{j=1, j \neq i}}^{1000} x_{j_2}+\nu_{i_2}, \\
			& \dot{x}_{i_3}=x_{i_1} x_{i_2}-\frac{8}{3}\, x_{i_3}- 2 \times 10^{-5}\!\!\!\! \sum_{\substack{j=1, j \neq i}}^{1000} x_{j_3}+\nu_{i_3}.
		\end{split}
\end{align}
One can rewrite the dynamics in~\eqref{lorenz-fully} in the form of \eqref{sys2} with actual $\mathcal{M}_i(x_i) = [x_{i_1}; x_{i_2}; x_{i_3}; x_{i_1}x_{i_3}; x_{i_1}x_{i_2}]$, and
\begin{align*}
	A_i \! = \!\!\!  \begin{bmatrix} -10& 10 & 0 & 0 & 0  \\28 & -1 & 0 & -1 & 0  \\ 0 & 0 & -\frac{8}{3} & 0 & 1 \end{bmatrix}\!\!, 
	B_i \! = \! \mathds{I}_3, D_{ij} \! = \! -2 \! \times \! 10^{-5} \mathds{I}_{3}.
\end{align*}
It is worth reiterating that the matrices \(A_i\) and \(B_i\) are assumed to be unknown, and the exact form of \(\mathcal{M}_i(x_i)\) is also unavailable. However, a dictionary of monomials up to the known degree of $2$ is given as \begin{align}\label{Dic}
\mathcal{M}_i(x_i)=[x_{i_1}; x_{i_2}; x_{i_3}; x_{i_1}x_{i_3}; x_{i_1}x_{i_2}; x_{i_2}x_{i_3}; x_{i_1}^2; x_{i_2}^2; x_{i_3}^2].
\end{align}
Given the fully-interconnected topology, the matrix \(A(x)\) of the network can be constructed as in~\eqref{A(x)-lorenz-fully}, while the control matrix is \(B = \mathsf{blkdiag}(B_1, \ldots, B_{1000})\). The regions of interest are given by $X_i = [-20,20]^3$, $X_{0_i} = [-3,3]^3$, and $X_{u_i} = [-20,-4]\times[-20,-15]\times[4,20] \cup [8,20]\times[11,20]\times[4,20] \cup [8,20]\times[11,20]\times[-20,-5],$ for all $i \in \{1,\ldots,1000\}$.

\begin{figure*}[t!]
	\centering
	\begin{subfigure}[t]{0.26\textwidth}
		\includegraphics[width=\linewidth]{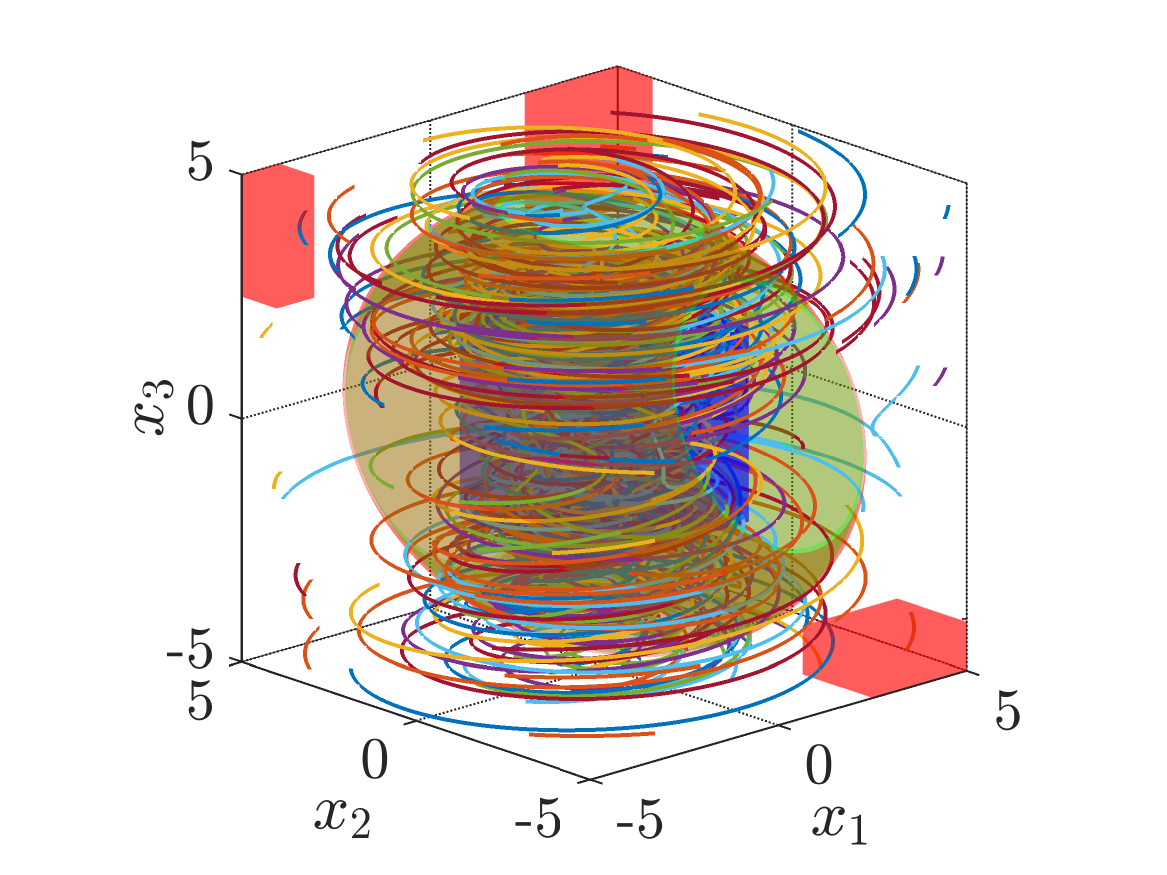}
		\caption{Open-loop trajectories}
	\end{subfigure}
	\hspace{-0.5cm}
	\begin{subfigure}[t]{0.26\textwidth}
		\includegraphics[width=\linewidth]{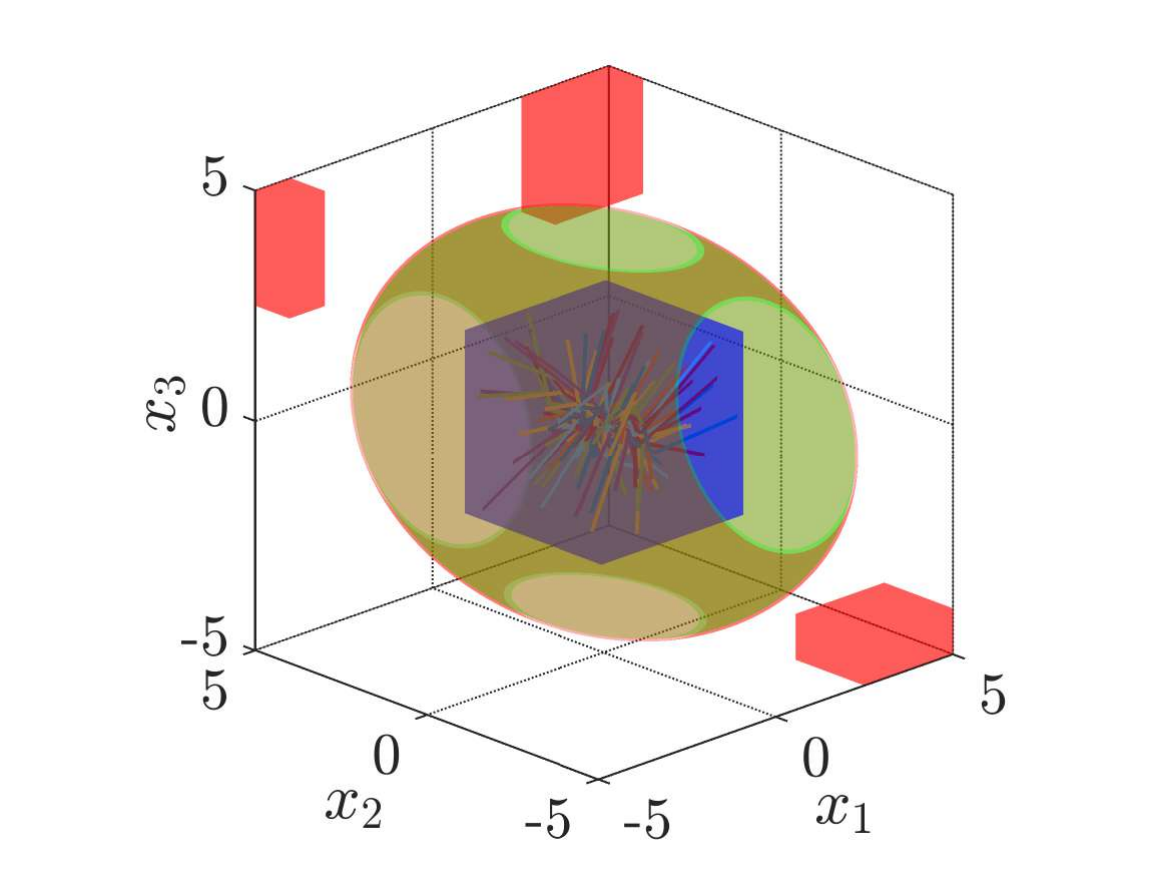}
		\caption{Closed-loop trajectories}
	\end{subfigure}
	\hspace{-0.5cm}	
	\begin{subfigure}[t]{0.26\textwidth}
		\includegraphics[width=\linewidth]{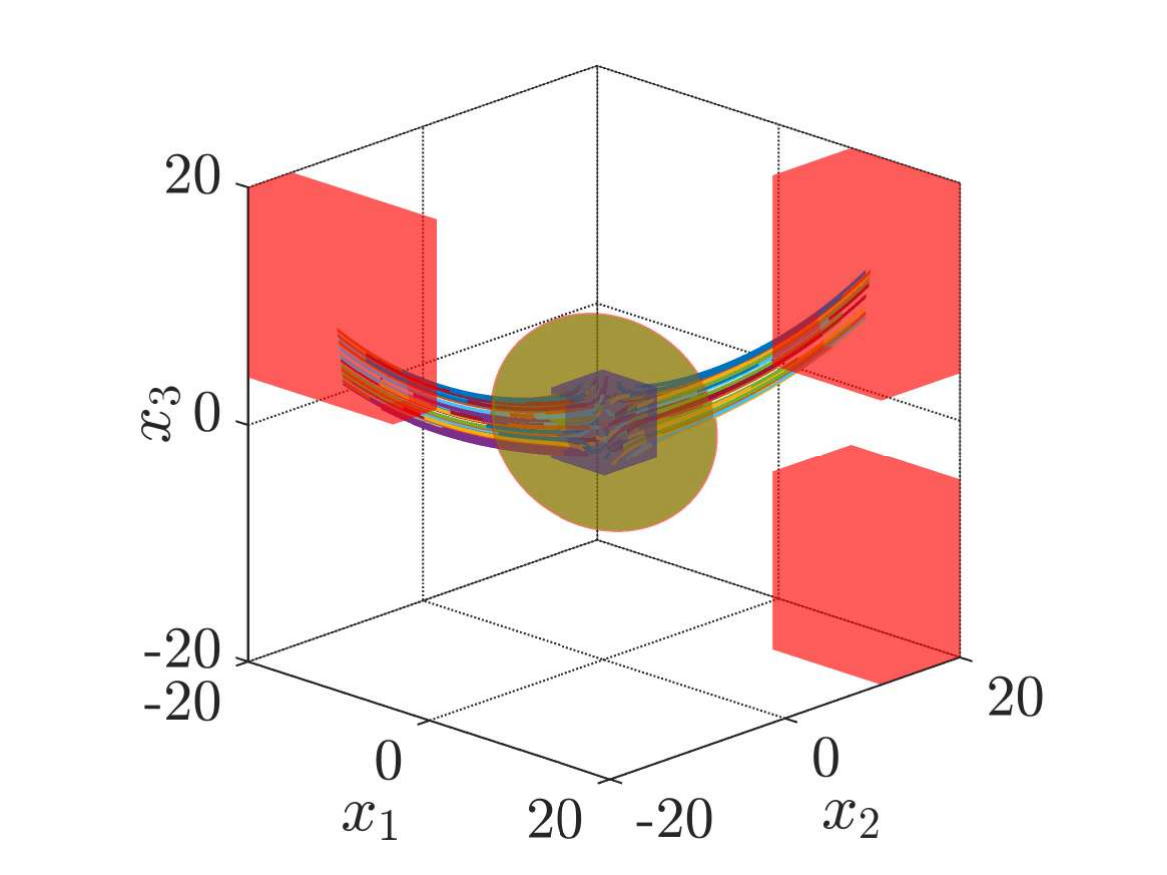}
		\caption{Open-loop trajectories}
	\end{subfigure}
	\hspace{-0.5cm}
	\begin{subfigure}[t]{0.26\textwidth}
		\includegraphics[width=\linewidth]{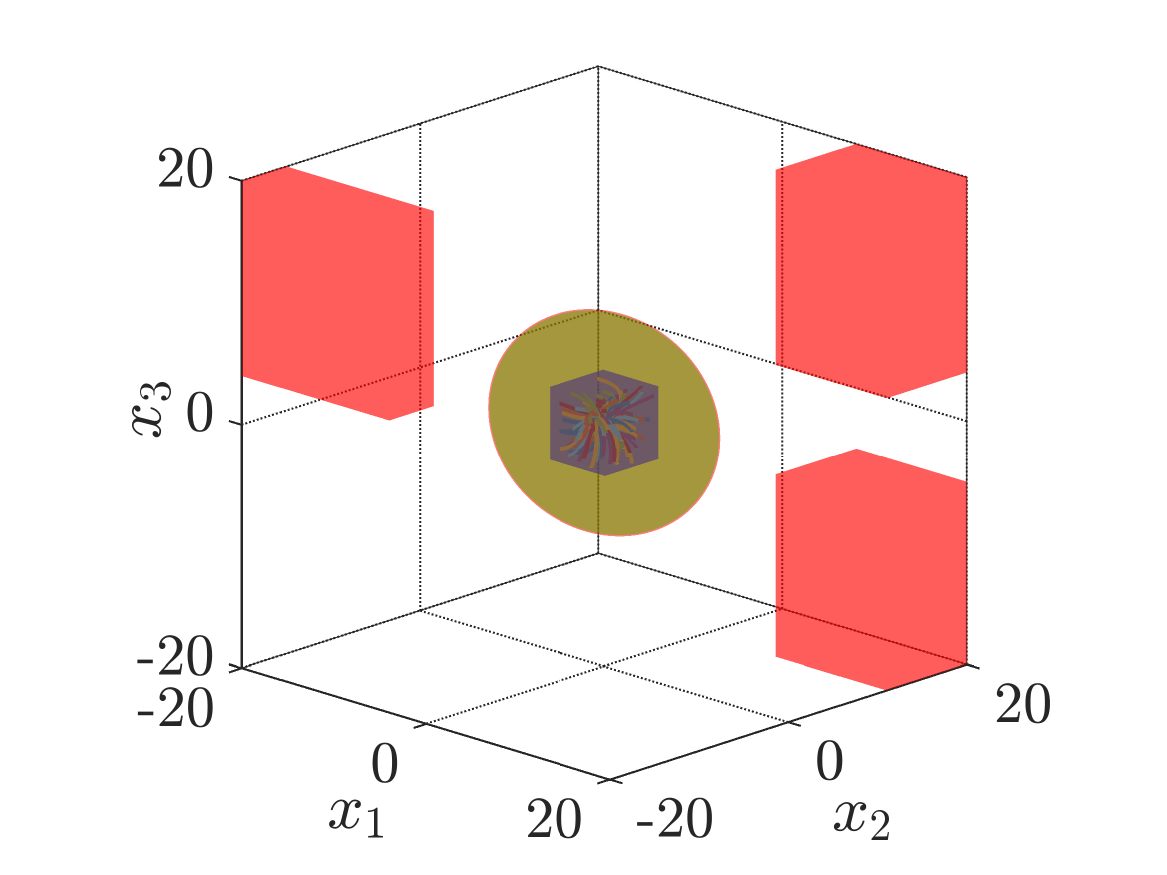}
		\caption{Closed-loop trajectories}
	\end{subfigure}
	\caption{3D trajectories of 120 representative spacecraft subsystems illustrated in subfigures \textbf{(a), (b)}, and Lorenz subsystems in subfigures  \textbf{(c), (d)}, within networks of 2000 and 1000 components, respectively. Subsystems in \textbf{(a)} and \textbf{(b)} are embedded in line interconnection topologies, while those in \textbf{(c)} and \textbf{(d)} are part of fully-interconnected topologies. Green~\protect\greensquare\ and pink~\protect\pinksquare\ surfaces depict the initial and unsafe level sets, respectively. Due to their proximity, the pink unsafe level set may not be clearly visible. Blue~\protect\bluesquare\ and red~\protect\redsquare\ boxes indicate the corresponding initial and unsafe regions. Plots \textbf{(a)} and \textbf{(c)} show open-loop trajectories, illustrating how the subsystems' inherent dynamics quickly lead to violations of safety specifications in the absence of safety controllers.}
	\label{fig:2}
\end{figure*}

\subsection{ Lorenz Network with Ring Interconnection Topology}
We analyze a Lorenz network with a ring topology, with the dynamics of each subsystem, \(i \in \{2, \ldots, 2000\}\), described by
\begin{align}
		\begin{split}\label{lorenz-ring}
			& {\dot{x}_{i_1}=10\, x_{i_2}-10\, x_{i_1} -0.01\,{x}_{(i-1)_1}+\nu_{i_1},}\\
			& {\dot {x}_{i_2}=28\, x_{i_1}-x_{i_2}-x_{i_1} x_{i_3}- 0.01\,{x}_{(i-1)_2} +\nu_{i_2},} \\
			& {\dot{x}_{i_3}=x_{i_1} x_{i_2}-\frac{8}{3}\, x_{i_3}- 0.01\,{x}_{(i-1)_3}+\nu_{i_3},}
		\end{split}
\end{align}
with the first subsystem being affected by the last one. One can rewrite the dynamics in~\eqref{lorenz-ring} in the form of \eqref{sys2} with the actual $\mathcal{M}_i(x_i) = [x_{i_1}; x_{i_2}; x_{i_3}; x_{i_1}x_{i_3}; x_{i_1}x_{i_2}]$ and
\begin{align*}
		A_i = \begin{bmatrix} -10& 10 & 0 & 0 & 0  \\28 & -1 & 0 & -1& 0 \\ 0 & 0 & -\frac{8}{3} & 0 & 1 \end{bmatrix}\!\!, 
		B_i = \mathds{I}_3, D_{ij} &= -0.01 \mathds{I}_{3},
	\end{align*}
where both matrices \(A_i\) and \(B_i\) are treated as unknown. While the exact form of \(\mathcal{M}_i(x_i)\) is unknown, we construct a dictionary of monomials up to the known degree $2$ as in~\eqref{Dic}. The matrices of the interconnected network can be written as 
\begin{align*}
	A(x)~\text{as in ~\eqref{A(x)-lorenz-ring}}, ~~B= \mathsf{blkdiag}(B_1,\ldots,B_{2000}).
\end{align*}
We consider the following regions of interest for all \(i \in \{1, \ldots, 2000\}\): $X_i = [-20, 20]^3$, $X_{0_i} = [-3,3]^3$, and $X_{u_i} = [-20,-10] \times[-20,-5] \times [5, 20] \cup [3.5,20] \times[15,20] \times [5, 20] \cup [3.5,20] \times[15, 20] \times [-20, -5]$.

\subsection{Spacecraft Network with Line Interconnection Topology}
We examine a spacecraft network with a line topology, with each spacecraft dynamics as,  \(i \in \{2, \ldots, 2000\}\),
\begin{align}
	\begin{split}\label{space-line} 
		{\dot{x}_{i_1}} & {= \frac{J_{i_2} - J_{i_3}}{J_{i_1}}\, x_{i_2}\, x_{i_3} 
			+ \frac{1}{J_{i_1}}\, \nu_{i_1} 
			+ \frac{4}{J_{i_1}}\, x_{(i-1)_1},} \\
		{\dot{x}_{i_2}} &{= \frac{J_{i_3} - J_{i_1}}{J_{i_2}}\, x_{i_1}\, x_{i_3} 
			+ \frac{1}{J_{i_2}}\, \nu_{i_2} 
			+ \frac{4}{J_{i_2}}\, x_{(i-1)_2},} \\
		{\dot{x}_{i_3}} &{= \frac{J_{i_1} - J_{i_2}}{J_{i_3}}\, x_{i_1}\, x_{i_2} 
			+ \frac{1}{J_{i_3}}\, \nu_{i_3} 
			+ \frac{4}{J_{i_3}}\, x_{(i-1)_3},}
	\end{split}
\end{align}
while the first subsystem, \emph{i.e.,} $i = 1,$ evolves as
\begin{align*}
	\dot{x}_{i_1} &= \frac{J_{i_2} - J_{i_3}}{J_{i_1}}\, x_{i_2}\, x_{i_3} 
	+ \frac{1}{J_{i_1}}\, \nu_{i_1}, \\
	\dot{x}_{i_2} &= \frac{J_{i_3} - J_{i_1}}{J_{i_2}}\, x_{i_1}\, x_{i_3} 
	+ \frac{1}{J_{i_2}}\, \nu_{i_2}, \\
	\dot{x}_{i_3} &= \frac{J_{i_1} - J_{i_2}}{J_{i_3}}\, x_{i_1}\, x_{i_2} 
	+ \frac{1}{J_{i_3}}\, \nu_{i_3},
\end{align*}
where $u_i=\left[u_{i_1} ; u_{i_2} ; u_{i_3}\right]$ is the torque input, and $J_{i_1}$ to $J_{i_3}$ are the principal moments of inertia. One can rewrite the dynamics given in~\eqref{space-line} in the form of \eqref{sys2} with the actual $\mathcal{M}_i(x_i) = [x_{i_1},x_{i_2},x_{i_3},   x_{i_2}x_{i_3}; x_{i_3}x_{i_1}; x_{i_1}x_{i_2} ]$ and
\begin{align*}
		A_i \!&=\! \begin{bmatrix}  0& 0 & 0 & \frac{J_{i_2} - J_{i_3}}{J_{i_1}}& 0& 0  \\0 & 0& 0 & 0 & \frac{J_{i_3} - J_{i_1}}{J_{i_2}} & 0 \\0& 0 & 0 & 0 & 0 & \frac{J_{i_1} - J_{i_2}}{J_{i_3}} \end{bmatrix}\!\!, \\
		B_i \!&= \! \begin{bmatrix} \frac{1}{J_{i_1}}  & 0  & 0\\ 0 &  \frac{1}{J_{i_2}} &0 \\ 0 &0 &\frac{1}{J_{i_3}}    \end{bmatrix}\!\!,
		D_{ij} =  \begin{bmatrix} \frac{4}{J_{i_1}}  & 0  & 0\\ 0 &  \frac{4}{J_{i_2}} &0 \\ 0 &0 &\frac{4}{J_{i_3}}    \end{bmatrix}\!\!.
			\end{align*}
We construct the dictionary of monomials up to the known degree of $2$ as in \eqref{Dic}. The matrices of the interconnected network can be presented as
\begin{align*}
	A(x)~\text{as in ~\eqref{A(x)-space}}, ~~B= \mathsf{blkdiag}(B_1,\ldots,B_{2000}).
\end{align*}

\begin{figure*}[t!]
		\centering
		\begin{subfigure}[t]{0.26\textwidth}
			\includegraphics[width=\linewidth]{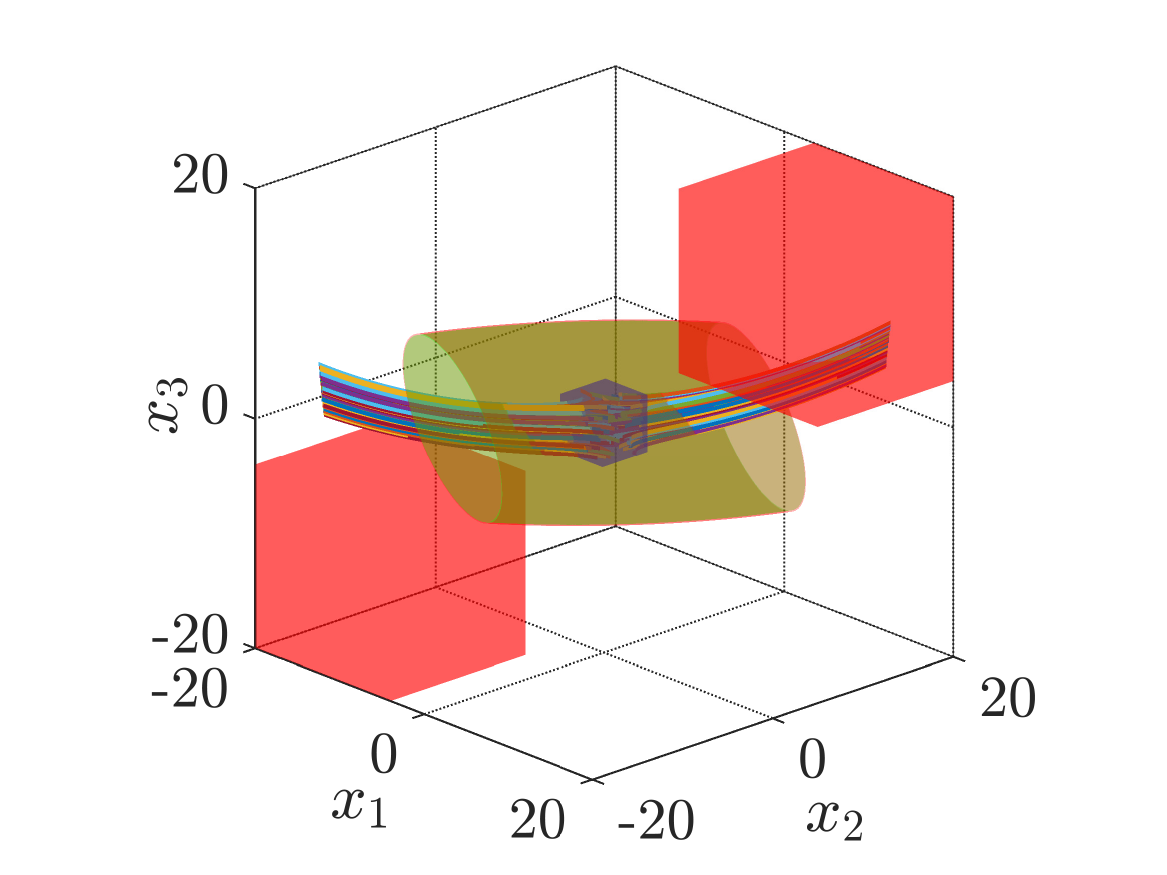}
			\caption{Open-loop trajectories}
		\end{subfigure}
		\hspace{-0.5cm}
		\begin{subfigure}[t]{0.26\textwidth}
			\includegraphics[width=\linewidth]{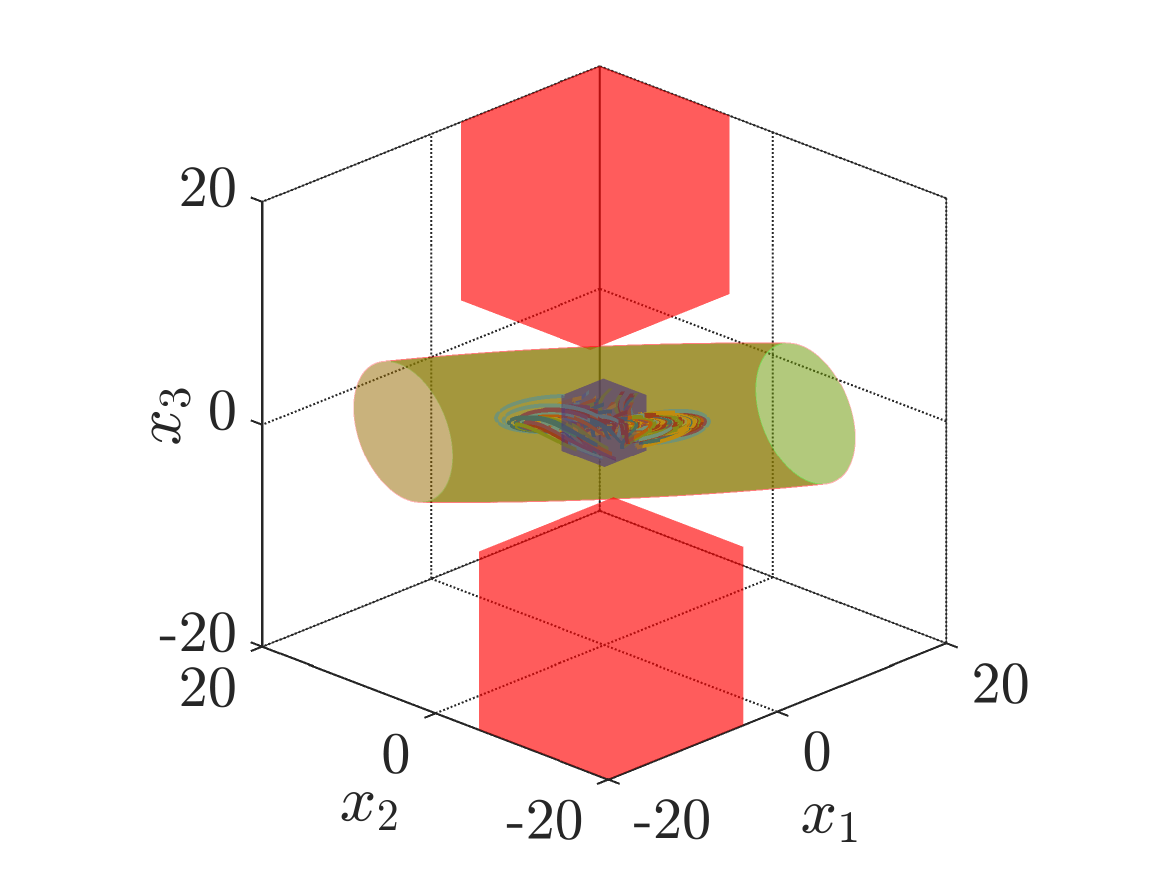}
			\caption{Closed-loop trajectories}
		\end{subfigure}
		\hspace{-0.5cm}
		\begin{subfigure}[t]{0.26\textwidth}
			\includegraphics[width=\linewidth]{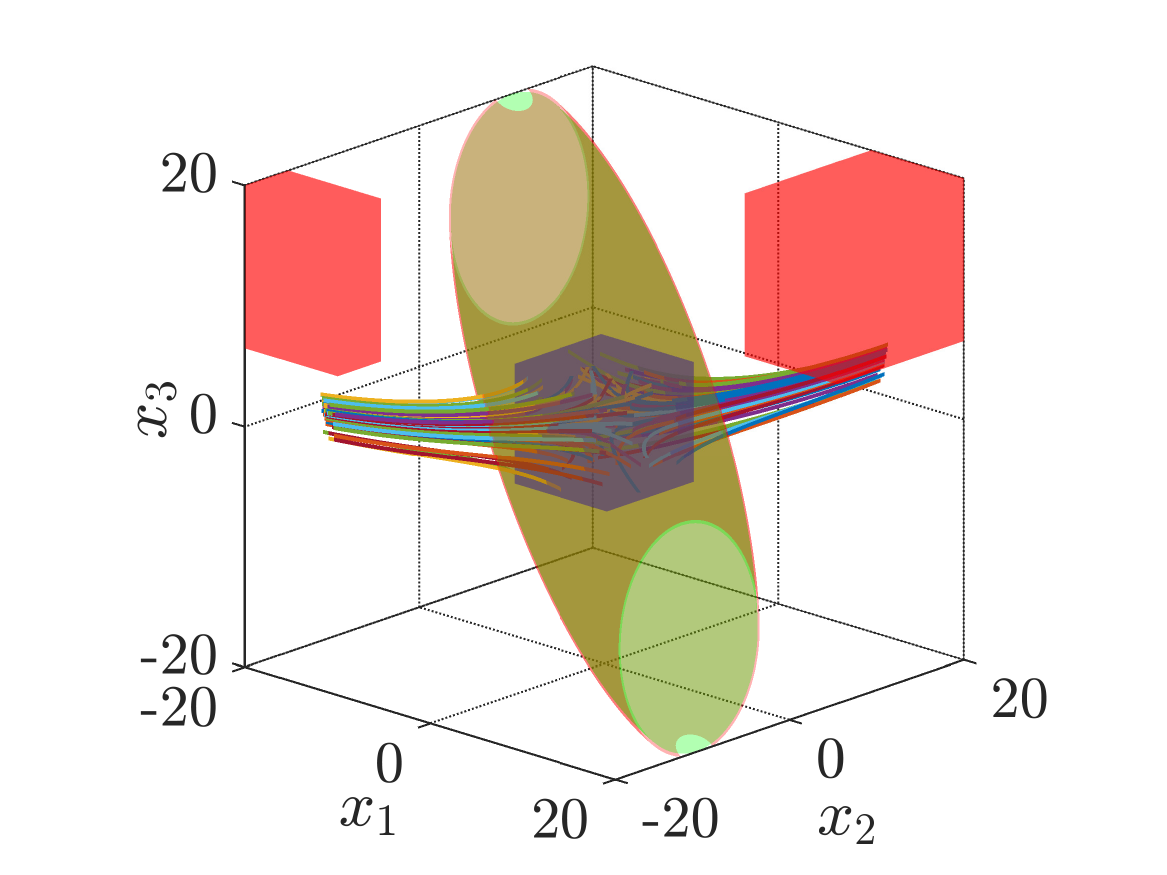}
			\caption{Open-loop trajectories}
		\end{subfigure}
		\hspace{-0.5cm}
		\begin{subfigure}[t]{0.26\textwidth}
			\includegraphics[width=\linewidth]{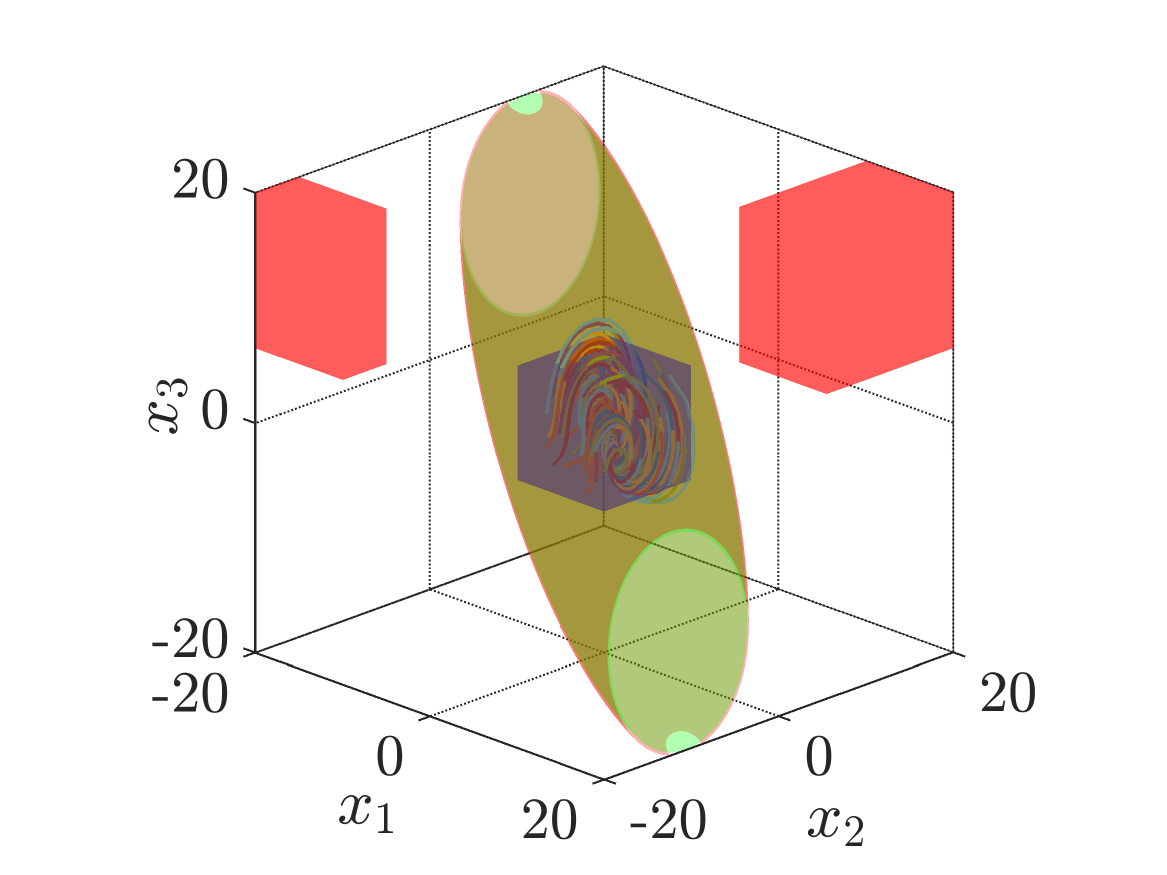}
			\caption{Closed-loop trajectories}
		\end{subfigure}
		\caption{3D trajectories of 120 representative Chen subsystems illustrated in subfigures \textbf{(a), (b)}, and Lu subsystems in subfigures \textbf{(c), (d)}, within networks of $1000$ and 2000 components, embedded in fully connected and star interconnection topologies, respectively. Green~\protect\greensquare\ and pink~\protect\pinksquare\ surfaces represent the initial and unsafe level sets, respectively. Due to their close proximity, the pink unsafe level set may be difficult to distinguish. The corresponding initial and unsafe regions are marked by blue~\protect\bluesquare\ and red~\protect\redsquare\ boxes.}
			\label{fig:4}
\end{figure*}
\begin{figure*}[t!] 
	{\begin{align}\label{A(x)-lorenz-fully}
			A(x) &= \begin{bmatrix}
				A_1 \Upsilon_1(x_1) \!\! & \!\!D_{1\,2} \!\! & \!\!D_{1\,3} \!\! & \!\!\ldots \!\! & \!\!D_{1\,\,1000} \\
				D_{2\,1} \!\! & \!\!A_2 \Upsilon_2(x_2) \!\! & \!\!D_{2\,3} \!\! & \!\!\ldots \!\! & \!\!D_{2\,\,1000} \\
				\vdots \!\! & \!\!\quad \!\! & \!\!\ddots \!\! & \!\!\quad \!\! & \!\!\vdots \\
				D_{999\,\,1} \!\! & \!\!\ldots \!\! & \!\!D_{999\,\,998} \!\! & \!\!A_{999}\Upsilon_{999}(x_{999}) \!\! & \!\!D_{999\,\,1000} \\
				D_{1000\,\,1} \!\! & \!\!\ldots \!\! & \!\!D_{1000\,\,998} \!\! & \!\!D_{1000\,\,999} \!\! & \!\!A_{1000}\Upsilon_{1000}(x_{1000})
			\end{bmatrix}
	\end{align}}
\end{figure*}
\begin{figure*}[t!] 
	{\begin{align}\label{A(x)-lorenz-ring}
			A(x) &= \begin{bmatrix}
				A_1\Upsilon_1(x_1) \!\! \!\! & \!\!\!\!\mathbf{0}_{3 \times 3} \!\! & \!\!\mathbf{0}_{3 \times 3} \!\! & \!\!\cdots \!\! & \!\!D_{1\,\,2000} \\
				D_{2\,1} \!\! & \!\!A_2\Upsilon_2(x_2) \!\! & \!\!\mathbf{0}_{3 \times 3} \!\! & \!\!\cdots \!\! & \!\!\mathbf{0}_{3 \times 3} \\
				\vdots \!\! & \!\! \!\! & \!\!\ddots \!\! & \!\! \!\! & \!\!\vdots \\
				\mathbf{0}_{3 \times 3} \!\! & \!\!\cdots \!\! & \!\!D_{1999\,\,1998} \!\! & \!\!A_{1999}\Upsilon_{1999}(x_{1999}) \!\! & \!\!\mathbf{0}_{3 \times 3} \\
				\mathbf{0}_{3 \times 3} \!\! & \!\!\cdots \!\! & \!\!\mathbf{0}_{3 \times 3} \!\! & \!\!D_{2000\,\,1999} \!\! & \!\!A_{2000}\Upsilon_{2000}(x_{2000})
			\end{bmatrix}
	\end{align}}
\end{figure*}
\begin{figure*}[t!] 
	{\begin{align}\label{A(x)-space}
			A(x) &=  \begin{bmatrix}
				A_1\Upsilon_1(x_1)                 \!\! & \!\!\mathbf{0}_{3 \times 3} \!\! & \!\!\mathbf{0}_{3 \times 3} \!\! & \!\!\cdots \!\! & \!\!\mathbf{0}_{3 \times 3}  \\
				D_{2\,1}     \!\! & \!\!A_2  \Upsilon_2(x_2)                   \!\! & \!\!\mathbf{0}_{3 \times 3} \!\! & \!\!\cdots \!\! & \!\!\mathbf{0}_{3 \times 3} \\
				\!\! & \!\!                       \!\! & \!\!                       \!\! & \!\!       \!\! & \!\!                   \\
				\vdots             \!\! & \!\!                       \!\! & \!\!\ddots                 \!\! & \!\!       \!\! & \!\!\vdots           \\
				\!\! & \!\!                       \!\! & \!\!                       \!\! & \!\!       \!\! & \!\!                   \\
				\mathbf{0}_{3 \times 3} \!\! & \!\!\cdots       \!\! & \!\!D_{1999\,\,1998}     \!\! & \!\!A_{1999}\Upsilon_{1999}(x_{1999})  \!\! & \!\!\mathbf{0}_{3 \times 3} \\
				\mathbf{0}_{3 \times 3} \!\! & \!\!\cdots       \!\! & \!\!\mathbf{0}_{3 \times 3}  \!\! & \!\!D_{2000\,\,1999} \!\! & \!\!A_{2000}\Upsilon_{2000}(x_{2000}) 
			\end{bmatrix}
	\end{align}}
\end{figure*}
The regions of interest are given as follows for all \(i \in \{1, \ldots, 2000\}\): \(X_i = [-5, 5]^3\), \(X_{0_i} = [-2, 2]^3\), and 
	\(
	X_{u_i} = [2.5, 5] \times [-5, -3] \times [-5, -4] \;\cup\; [2.5, 5] \times [4, 5] \times [2.5, 5] \;\cup\; [-5, -4] \times [4, 5] \times [2.5, 5].
	\) The simulation results for representative spacecraft and Lorenz subsystems under line and fully-interconnected topologies are illustrated in Figure~\ref{fig:2}.

\subsection{Lu Network with Star Interconnection Topology}
We analyze a Lu network with a star topology, with subsystem dynamics represented as, \(i \in \{2, \ldots, 2000\}\),
{\begin{align}\label{lu-star}
	\begin{split}
		& \dot{x}_{i_1}=-36\, x_{i_1}+36\, x_{i_2}-10^{-3}\, x_{1_1}, \\
		& \dot{x}_{i_2}=28\, x_{i_2}-x_{i_1} x_{i_3} -10^{-3}\, x_{1_2} +\nu_{i_1}, \\
		& \dot{x}_{i_3}=-20\, x_{i_3}+x_{i_1} x_{i_2}-10^{-3}\, x_{1_3} +\nu_{i_2},
	\end{split}
\end{align}}
while for the first subsystem ($i=1$) the dynamics are
{\begin{align*}
	& \dot{x}_{i_1}=-36\, x_{i_1}+36\, x_{i_2}, \\
	& \dot{x}_{i_2}=28\, x_{i_2}-x_{i_1} x_{i_3}+\nu_{i_1}, \\
	& \dot{x}_{i_3}=-20\, x_{i_3}+x_{i_1} x_{i_2} +\nu_{i_2}.
\end{align*}
One can rewrite the dynamics in~\eqref{lu-star} in the form of \eqref{sys2} with the actual $\mathcal{M}_i(x_i) = [x_{i_1}; x_{i_2}; x_{i_3}; x_{i_1}x_{i_3}; x_{i_1}x_{i_2}]$, and}
{\begin{align*}
		A_i \!\! =\!\!\! \begin{bmatrix} -36& 36 & 0 & 0 & 0 \\0 & 28 & 0 & -1 & 0\\0 & 0 & -20 & 0 & 1  \end{bmatrix}\!\!, B_i \!\! = \!\!\! \begin{bmatrix} 0  & 0 \\ 1 & 0\\ 0 &1  \end{bmatrix}\!\!, D_{ij} \! = \! -10^{-3} \mathds{I}_3.
\end{align*}
Moreover, the dictionary of monomials up to the known degree of $2$ is defined as in \eqref{Dic}. The matrices of the interconnected network can be written as
\begin{align*}
A(x)&=\left\{\mathsf{a}_{ij}\right\}= \begin{cases}A_i \Upsilon_i(x_i),& i=j, \\ D_{i j}, & j=1, i \neq j, \\ \mathbf{0}_{3 \times 3}, & \text {Otherwise},\end{cases}\\
B&=\mathsf{blkdiag}\left(B_1, \ldots, B_{2000}\right)\!.
\end{align*}}
The regions of interest are provided as follows for all \(i \in \{1, \ldots, 2000\}\): the state set is \(X_i = [-20, 20]^3\), the initial set is \(X_{0_i} = [-5, 5]^3\), and the unsafe set is
	\(
	X_{u_i} = [-20, -10] \times [-20, -15] \times [6.5, 20] \;\cup\; [10, 20] \times [5.5, 20] \times [6.5, 20].
	\)

\subsection{Duffing Oscillator Network with Ring Interconnection Topology}
We study a Duffing oscillator network with a ring topology, where each subsystem dynamic is characterized as, for all \(i \in \{2, \ldots, 2000\}\),
\begin{align}\label{duffing-ring}
		\begin{split}
			& \dot{x}_{i_1}= x_{i_2} + \nu_{i_1},\\
			& \dot {x}_{i_2}= 2\,x_{i_1}-0.5\, x_{i_2}-0.01\,x^3_{i_1} + 0.1\,{x}_{(i-1)_1} +\nu_{i_2},
		\end{split}
\end{align}
with the first subsystem being affected by the last one. One can rewrite the dynamics in~\eqref{duffing-ring} in the form of \eqref{sys2} with the actual $\mathcal{M}_i(x_i) = [x_{i_1}; x_{i_2};  x_{i_1}^3]$ and
\begin{align*}
		A_i &= \begin{bmatrix} 0 & 1 & 0  \\ 2 & -0.5 & -0.01  \end{bmatrix}\!\!, \:
		B_i = \mathds{I}_2,\:
		D_{ij} =  \begin{bmatrix} 0 & 0  \\ 0.1 &  0  \end{bmatrix}\!\!.
			\end{align*}
The dictionary of monomials up to the known degree of $3$ is constructed as 
\begin{align}\label{Dic1}
	\mathcal{M}_i(x_i)=[x_{i_1}; x_{i_2}; x_{i_1}x_{i_2}; x^2_{i_1}; x^2_{i_2}; x^2_{i_1}x_{i_2}; x_{i_1}x^2_{i_2}; x^3_{i_1}; x^3_{i_2}].
\end{align}
The matrices of the interconnected network can be arranged as
\begin{align*}
	A(x)~\text{as in ~\eqref{A(x)-duffing-ring}}, ~~B= \mathsf{blkdiag}(B_1,\ldots,B_{2000}).
	\end{align*}
In this benchmark, the regions of interest are given as follows for all \(i \in \{1, \ldots, 2000\}\): the state set is \(X_i = [-10, 10]^2\), the initial set is \(X_{0_i} = [-4, 4]^2\), and the unsafe set is
	\(
	X_{u_i} = [-10, -6] \times [-10, -5] \;\cup\; [6, 10] \times [5, 10].
	\) Simulation results for this case study are illustrated in Figure~\ref{fig:3}.

\begin{figure}
	\centering
	\begin{subfigure}[t]{0.24\textwidth}
		\includegraphics[width=\linewidth]{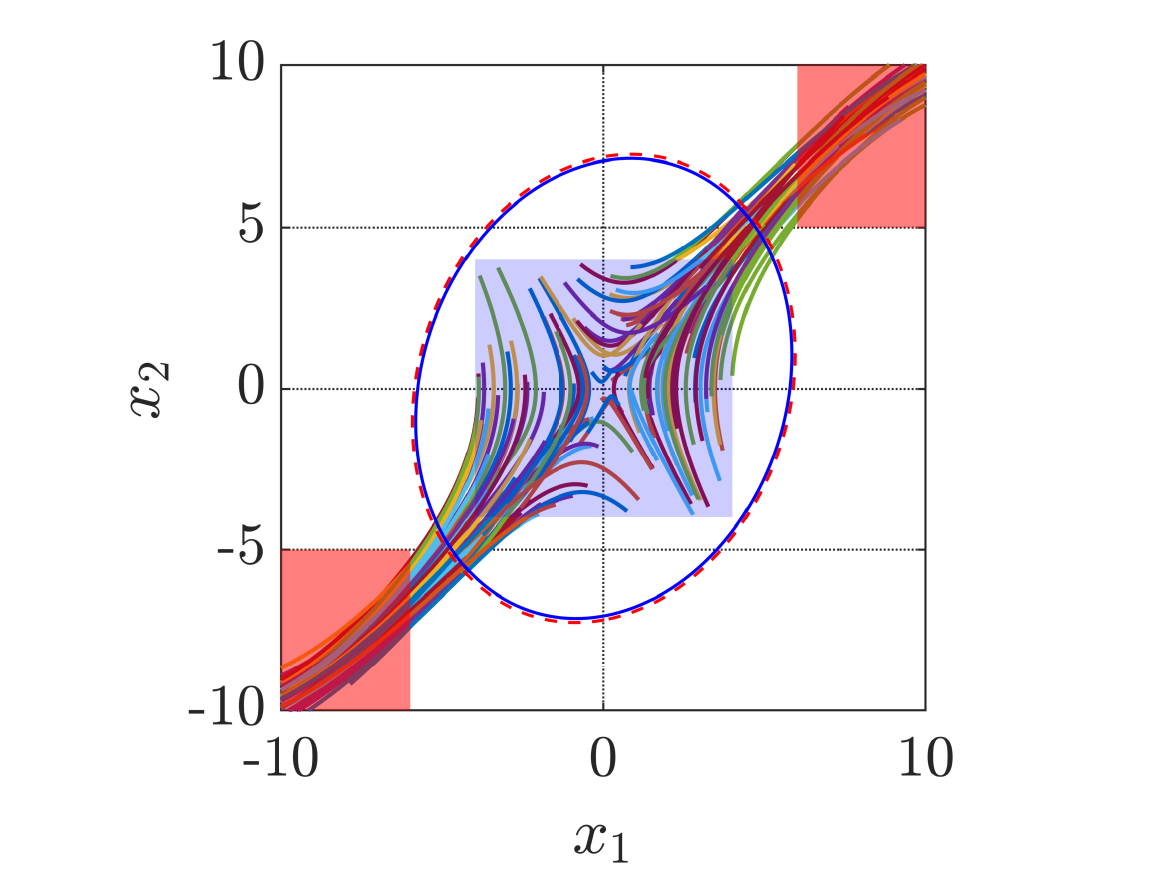}
		\caption{Open-loop trajectories}
	\end{subfigure}
	\hspace{-0.4cm}
	\begin{subfigure}[t]{0.24\textwidth}
		\includegraphics[width=\linewidth]{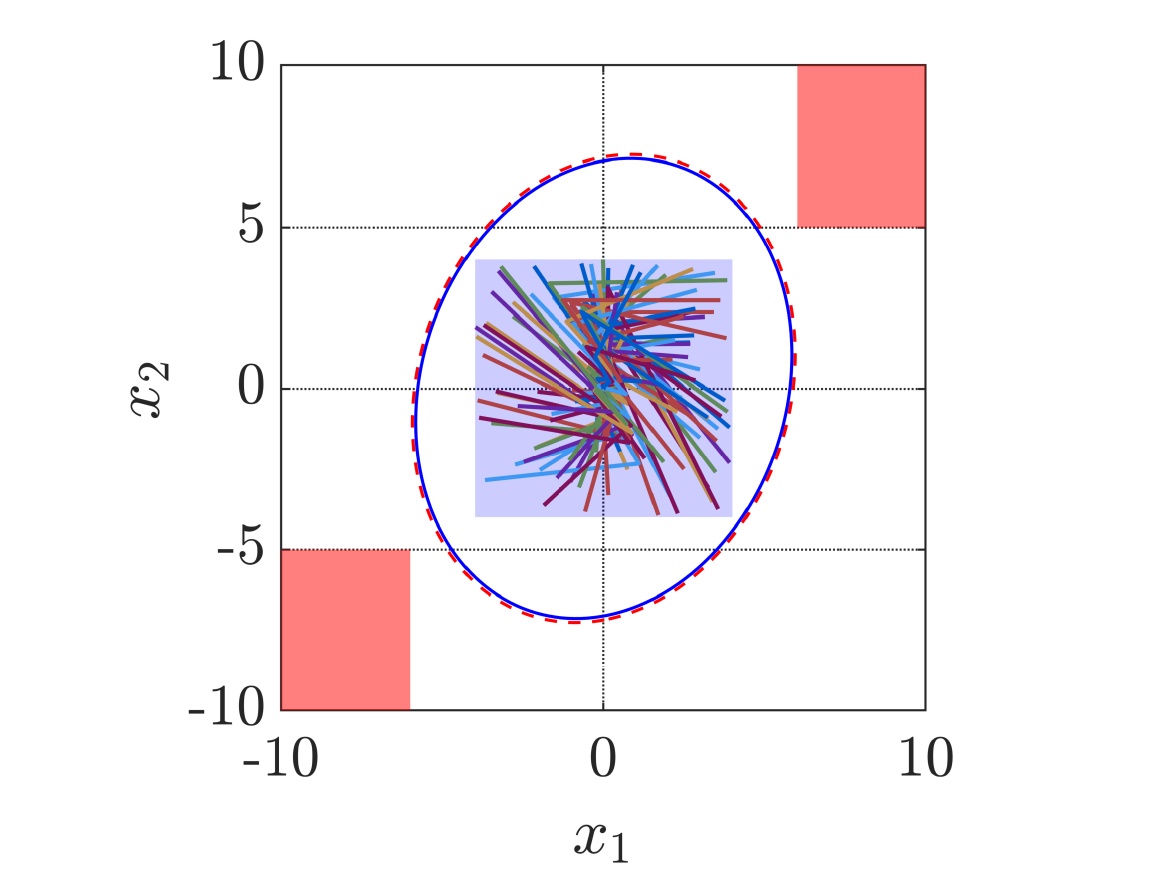}
		\caption{Closed-loop trajectories}
	\end{subfigure}
	\hspace{-0.4cm}
	\begin{subfigure}[t]{0.3\textwidth}
		\includegraphics[width=\linewidth]{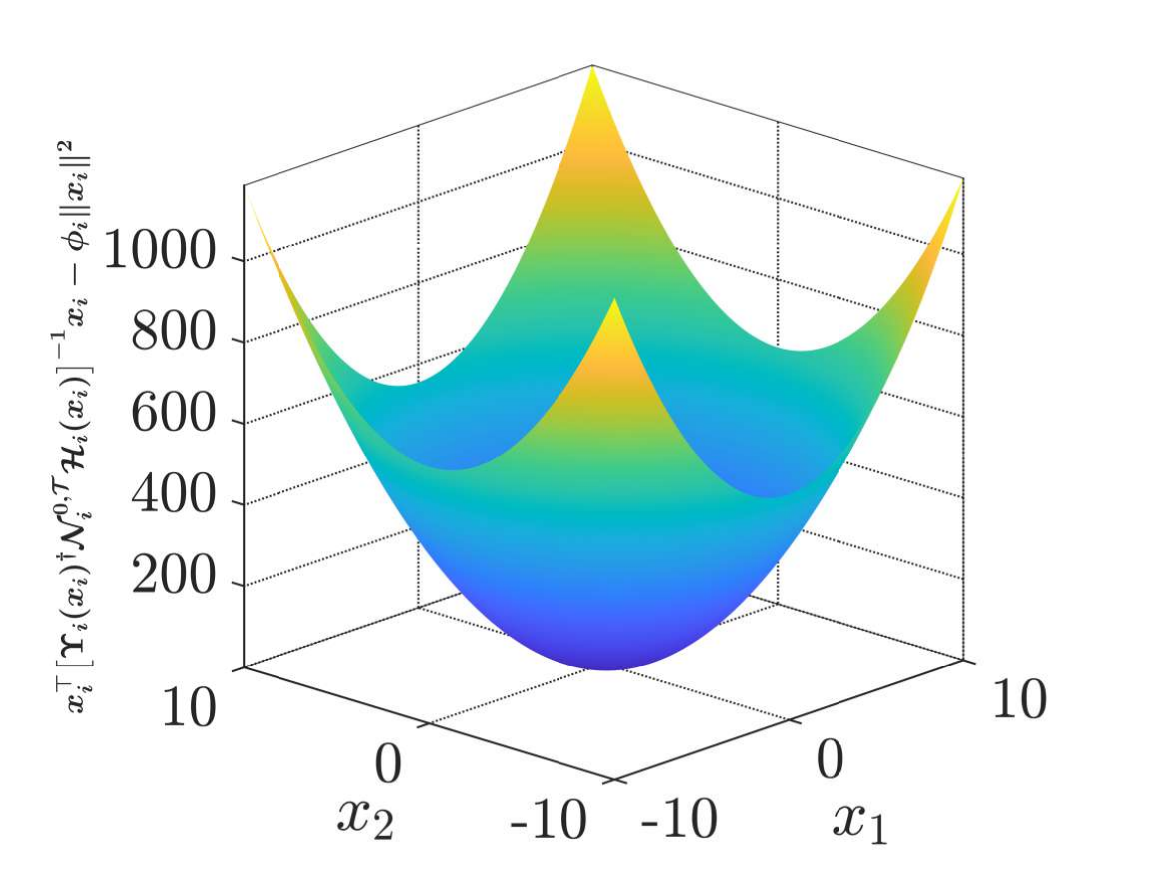}
		\caption{Verification of~\eqref{con}}
	\end{subfigure}
	
	\caption{Trajectories of 120 representative Duffing oscillators in a network of 2000 components with a ring interconnection topology. Plots \textbf{(a)} and \textbf{(b)} show the trajectories without and with the safety controller, respectively. The blue~\protect\bluesquare\ and red~\protect\redsquare\ boxes indicate the initial and unsafe regions, respectively, while the level sets \(\mathds{B}_i(x_i) = \gamma_i\) and \(\mathds{B}_i(x_i) = \beta_i\) are shown by~\sampleline{blue, thick} and~\sampleline{dashed, red, thick}, respectively. Plot \textbf{(c)} illustrates the satisfaction of condition~\eqref{con}, where the expression
			\(
			x_i^\top \left[ \Upsilon_i(x_i)^{\dagger} \mathcal{N}_i^{0, \mathcal{T}} \mathcal{H}_i(x_i) \right]^{-1} x_i - \phi_i \|x_i\|^2
			\)
			remains positive throughout the state space \([-10, 10]^2\).}
	\label{fig:3}
\end{figure}

\subsection{Duffing Oscillator Network with Binary Interconnection Topology}
We consider a Duffing oscillator network with a \emph{binary} interconnection topology, with each subsystem dynamic described as
\begin{align}\label{duffing-binary}
		\begin{split}
			\dot{x}_{i_1}&= x_{i_2} + \nu_{i_1},\\
			\dot {x}_{i_2}\!&=\! 2\,x_{i_1}\!-0.5\, x_{i_2}\!-0.01\,x^3_{i_1} \\ &\quad+ \!0.05\, \Omega\,(i=2 j \vee i=2 j+1) x_{j_1} +\! \nu_{i_2}, 
		\end{split}
\end{align}
where \(\vee\) denotes the logical \(\mathsf{OR}\) operation, and \(\Omega\) is an indicator defined as
\begin{equation*}
	\Omega\,(i=2 j \vee i=2 j+1)= \begin{cases}1, & i=2 j \vee i=2 j+1, \\ 0, & \text { Otherwise}.\end{cases}
\end{equation*}
One can rewrite the dynamics in~\eqref{duffing-binary} in the form of \eqref{sys2} with actual $\mathcal{M}_i(x_i) = [x_{i_1}; x_{i_2};  x_{i_1}^3]$ and
\begin{align*}
		A_i &= \begin{bmatrix} 0 & 1 & 0  \\ 2 & -0.5 & -0.01  \end{bmatrix}\!\!, \:
		B_i = \mathds{I}_2,\:
		D_{ij} =  \begin{bmatrix} 0 & 0  \\ 0.05 &  0  \end{bmatrix}\!\!.
\end{align*}
Furthermore, the dictionary of monomials up to the known degree of $3$ is constructed as in \eqref{Dic1}. The matrices of the interconnected network can be described as
\begin{align*}
		A(x)&=\left\{\mathsf{a}_{ij}\right\} = \begin{cases} A_i\Upsilon_{i}(x_{i}), & i=j, \\ D_{i j}, & i=2 j \vee i=2 j+1, \\ \mathbf{0}_{2 \times 2}, & \text{Otherwise}, \end{cases}\\
		B &= \mathsf{blkdiag}(B_1,\ldots,B_{1023}).
\end{align*}
The regions of interest are defined as follows for all \(i \in \{1, \ldots, 1023\}\): \(X_i = [-10, 10]^2\), \(X_{0_i} = [-4, 4]^2\), and 
\(
X_{u_i} = [-10, -6] \times [-10, -5] \;\cup\; [6, 10] \times [5, 10].
\)

\subsection{Chen Network with Fully-Interconnected Topology}
We analyze a Chen network with a fully-interconnected topology, where each subsystem dynamic is described as, for all \(i \in \{1, \ldots, 1000\}\),
\begin{align}\notag
	& \dot{x}_{i_1}=35\, x_{i_2}-35\, x_{i_1} - 5 \times 10^{-5}\!\!\!\! \sum_{\substack{j=1, j \neq i}}^{1000} x_{j_1},\\\notag
	& \dot {x}_{i_2}=-7\, x_{i_1}+28\,x_{i_2}-x_{i_1} x_{i_3} -  5 \times 10^{-5}\!\!\!\! \sum_{\substack{j=1, j \neq i}}^{1000} x_{j_2}+\nu_{i_1}, \\\label{chen-fully}
	& \dot{x}_{i_3}=x_{i_1} x_{i_2}-3\, x_{i_3} - 5 \times 10^{-5}\!\!\!\! \sum_{\substack{j=1, j \neq i}}^{1000} x_{j_3}+\nu_{i_2}.
	\end{align}
	One can rewrite the dynamics in~\eqref{chen-fully} in the form of \eqref{sys2} with the actual $\mathcal{M}_i(x_i) = [x_{i_1}; x_{i_2}; x_{i_3}; x_{i_1}x_{i_3}; x_{i_1}x_{i_2}]$, and
	\begin{align*}
	A_i \!\! = \!\!\! \begin{bmatrix} -35& 35 & 0 & 0 & 0  \\-7 & 28 & 0 & -1 & 0  \\ 0 & 0 & -3 & 0 & 1 \end{bmatrix}\!\!,  B_i \!\! = \!\!\! \begin{bmatrix} 0  & 0 \\ 1 & 0\\ 0 &1  \end{bmatrix}\!\!, D_{ij} \!\! = \!\! -5 \!\! \times \!\! 10^{-5} \mathds{I}_{3}.
\end{align*}
Additionally, the dictionary of monomials up to the known degree of $2$ is given by \eqref{Dic}. The matrices of the interconnected network can be presented as
\begin{align*}
	A(x)~\text{as in ~\eqref{A(x)-chen}}, ~~B= \mathsf{blkdiag}(B_1,\ldots,B_{2000}).
\end{align*}
The regions of interest are provided as follows for all \(i \in \{1, \ldots, 1000\}\): \(X_i = [-20, 20]^3\), \(X_{0_i} = [-2.5, 2.5]^3\), and 
	\(
	X_{u_i} = [-20, -4] \times [-20, -5] \times [-20, -4] \;\cup\; [3.5, 20] \times [5, 20] \times [4, 20].
	\) Figure~\ref{fig:4} depicts the trajectories of representative Chen and Lu subsystems with fully connected and star interconnection topologies, respectively.
 
 \begin{table*}[t!]
 	\centering
 	\caption{Data-driven results for the \emph{heterogeneous} network. Parameters include: number of subsystems $\mathcal{Q}$; number of samples $\mathcal{T}$; number of state variables per subsystem $n_i$; noise bound coefficient $\bar{\varkappa}_i$ as in~\eqref{noise-bound-2}; initial and unsafe level sets $\gamma_i$ and $\beta_i$; interaction gain $\rho_i$; decay rate $\varepsilon_i$; runtime in seconds ($\mathsf{RT}$); and memory usage in megabits ($\mathsf{MU}$). The reported runtime and memory usage correspond to solving the SOS problem in Algorithm~\ref{Alg:1}.}
 	\label{tab:system-configurations-a}
 	\resizebox{\linewidth}{!}{\begin{tabular}{@{}llc|cccccccccc@{}}
 				\toprule
 				Type of network & Topology & $\mathcal{Q}$ & $\Theta_i$ & $\mathcal{T}$&  $n_i$ & $\bar{\varkappa}_i$ & $\gamma_i$ & $\beta_i$ & $\rho_i$ & $\varepsilon_i$ & \(\mathsf{RT}\) & \(\mathsf{MU}\) \\ 
 				\midrule
 				\midrule
 				\multirow{5}{*}{Heterogeneous network} && &  $ \Theta_1 -\Theta_{300}$& $20$&$2$  & $0.18$   &    $121.13$  & $ 123.37$  &   $1.15 \times 10^{-6}$ & $0.99$&$3.06$
 				& $1.75$ \\
 				\cline{4-13}
 				&& &  $\Theta_{301}$& $20$&$ 2$ & $0.18$  &   $125.17$  & $127.34$   & $0.001$  & $0.99$ & $3.08$ &$1.75$ \\
 				\cline{4-13}
 				&\quad Line& $900$& $\Theta_{302} - \Theta_{600}$ & $20$ & $2$ & $0.18$   &   $123.19$  &   $125.36$ &   $4.62  \times 10^{-6}$  & $0.99$& $3.08$
 				& $1.75$  \\
 				\cline{4-13}
 				&& &  $\Theta_{601}$& $20$& $2$ &  $0.18$ &  $  127.88$  &   $ 129.92$  &  $ 0.001$  & $0.99$& $3.05$ &  $1.75$  \\
 				\cline{4-13}
 				&& &  $\Theta_{602} - \Theta_{900}$ & $20$ & $2$  &   $0.18$ & $125.69$ &   $127.74$ &    $2.89 \times 10^{-5}$&   $0.99$&  $3.09$&$1.75$
 				\\
 				\bottomrule
 	\end{tabular}}
 \end{table*}
 
  \begin{figure}[t!]
 	\centering
 	\resizebox{0.8\linewidth}{!}{%
 		\begin{tikzpicture}[
 			node distance=2cm,
 			vertex/.style={
 				circle,
 				draw=black!50,
 				thick,
 				minimum size=4em,     
 				inner sep=1pt,          
 				ball color=#1,          
 				shading=ball,           
 				font=\large       
 			},
 			edge/.style={->,ultra thick}
 			]
 			\node[vertex={yellow!30}] (n1) {$\Theta_{1}$};
 			
 			\foreach \i/\lab/\col in {
 				2/2/yellow!30,   3/3/yellow!30,   4/300/yellow!30,
 				5/301/green!30,  6/302/green!30,  7/303/green!30,  8/600/green!30,
 				9/601/red!30, 10/602/red!30, 11/603/red!30, 12/900/red!30
 			}{
 				\node[vertex={\col}] (n\i)
 				[right of=n\the\numexpr\i-1\relax] {$\Theta_{\lab}$};
 				\pgfmathtruncatemacro{\r}{mod(\i,4)}
 				\ifnum\r=0
 				\draw[edge,dash dot] (n\the\numexpr\i-1\relax) -- (n\i);
 				\else
 				\draw[edge]               (n\the\numexpr\i-1\relax) -- (n\i);
 				\fi
 			}
 		\end{tikzpicture}%
 	}
 	\caption{A heterogeneous network of $900$ subsystems arranged in a line topology, partitioned into three homogeneous subnetworks shown in yellow, green, and red. Each subnetwork consists of $300$ subsystems interconnected via a line topology.	}
 	\label{Het-fig}
 \end{figure}
 
 \begin{figure*}[!t] 
 \begin{align}\label{A(x)-duffing-ring}
 			A(x)  & = \begin{bmatrix}
 				A_1\Upsilon_1(x_1)  \!\! & \!\! \mathbf{0}_{2 \times 2} \!\! & \!\! \mathbf{0}_{2 \times 2} \!\! & \!\! \cdots \!\! & \!\! D_{1\,\,2000} \\
 				D_{2\,1} \!\! & \!\! A_2 \Upsilon_2(x_2) \!\! & \!\! \mathbf{0}_{2 \times 2} \!\! & \!\! \cdots \!\! & \!\! \mathbf{0}_{2 \times 2} \\
 				\vdots \!\! & \!\!  \!\! & \!\! \ddots \!\! & \!\!  \!\! & \!\! \vdots \\
 				\mathbf{0}_{2 \times 2} \!\! & \!\! \cdots \!\! & \!\! D_{1999\,\,1998} \!\! & \!\! A_{1999}\Upsilon_{1999}(x_{1999}) \!\! & \!\! \mathbf{0}_{2 \times 2} \\
 				\mathbf{0}_{2 \times 2} \!\! & \!\! \cdots \!\! & \!\! \mathbf{0}_{2 \times 2} \!\! & \!\! D_{2000\,\,1999} \!\! & \!\! A_{2000}\Upsilon_{2000}(x_{2000})
 			\end{bmatrix}
 	\end{align}
 \end{figure*}
 \begin{figure*}[!t] 
 \begin{align}
 			A(x)  & = \begin{bmatrix}\label{A(x)-chen}
 				A_1\Upsilon_1(x_1)  \!\! & \!\! D_{1\,2} \!\! & \!\! D_{1\,3} \!\! & \!\! \ldots \!\! & \!\! D_{1\,\,1000} \\
 				D_{2\,1} \!\! & \!\! A_2\Upsilon_2(x_2)  \!\! & \!\! D_{2\,3} \!\! & \!\! \ldots \!\! & \!\! D_{2\,\,1000} \\
 				\vdots \!\! & \!\! \quad \!\! & \!\! \ddots \!\! & \!\! \quad \!\! & \!\! \vdots \\
 				D_{999\,\,1} \!\! & \!\! \ldots \!\! & \!\! D_{999\,\,998} \!\! & \!\! A_{999}\Upsilon_{999}(x_{999})  \!\! & \!\! D_{999\,\,1000} \\
 				D_{1000\,\,1} \!\! & \!\! \ldots \!\! & \!\! D_{1000\,\,998} \!\! & \!\! D_{1000\,\,999} \!\! & \!\! A_{1000}\Upsilon_{1000}(x_{1000}) 
 			\end{bmatrix}
 	\end{align}
 \end{figure*}
 \begin{figure*}[!t] 
 \begin{align}\label{Het-net}
 			A(x) &=  \begin{bmatrix}
 				A_1\Upsilon_1(x_1)                 \!\! & \!\!\mathbf{0}_{2 \times 2} \!\! & \!\!\mathbf{0}_{2 \times 2} \!\! & \!\!\cdots \!\! & \!\!\mathbf{0}_{2 \times 2}  \\
 				D_{2\,1}     \!\! & \!\!A_2  \Upsilon_2(x_2)                   \!\! & \!\!\mathbf{0}_{2 \times 2} \!\! & \!\!\cdots \!\! & \!\!\mathbf{0}_{2 \times 2} \\
 				\!\! & \!\!                       \!\! & \!\!                       \!\! & \!\!       \!\! & \!\!                   \\
 				\vdots             \!\! & \!\!                       \!\! & \!\!\ddots                 \!\! & \!\!       \!\! & \!\!\vdots           \\
 				\!\! & \!\!                       \!\! & \!\!                       \!\! & \!\!       \!\! & \!\!                   \\
 				\mathbf{0}_{2 \times 2} \!\! & \!\!\cdots       \!\! & \!\!D_{899\,\,898}     \!\! & \!\!A_{899}\Upsilon_{899}(x_{899})  \!\! & \!\!\mathbf{0}_{2 \times 2} \\
 				\mathbf{0}_{2 \times 2} \!\! & \!\!\cdots       \!\! & \!\!\mathbf{0}_{2 \times 2}  \!\! & \!\!D_{900\,\,899} \!\! & \!\!A_{900}\Upsilon_{900}(x_{900}) 
 			\end{bmatrix}
 	\end{align}
 \end{figure*}
 
 \begin{figure}[t!]
 	\centering
 	\begin{subfigure}[t]{0.24\textwidth}
 		\includegraphics[width=\linewidth]{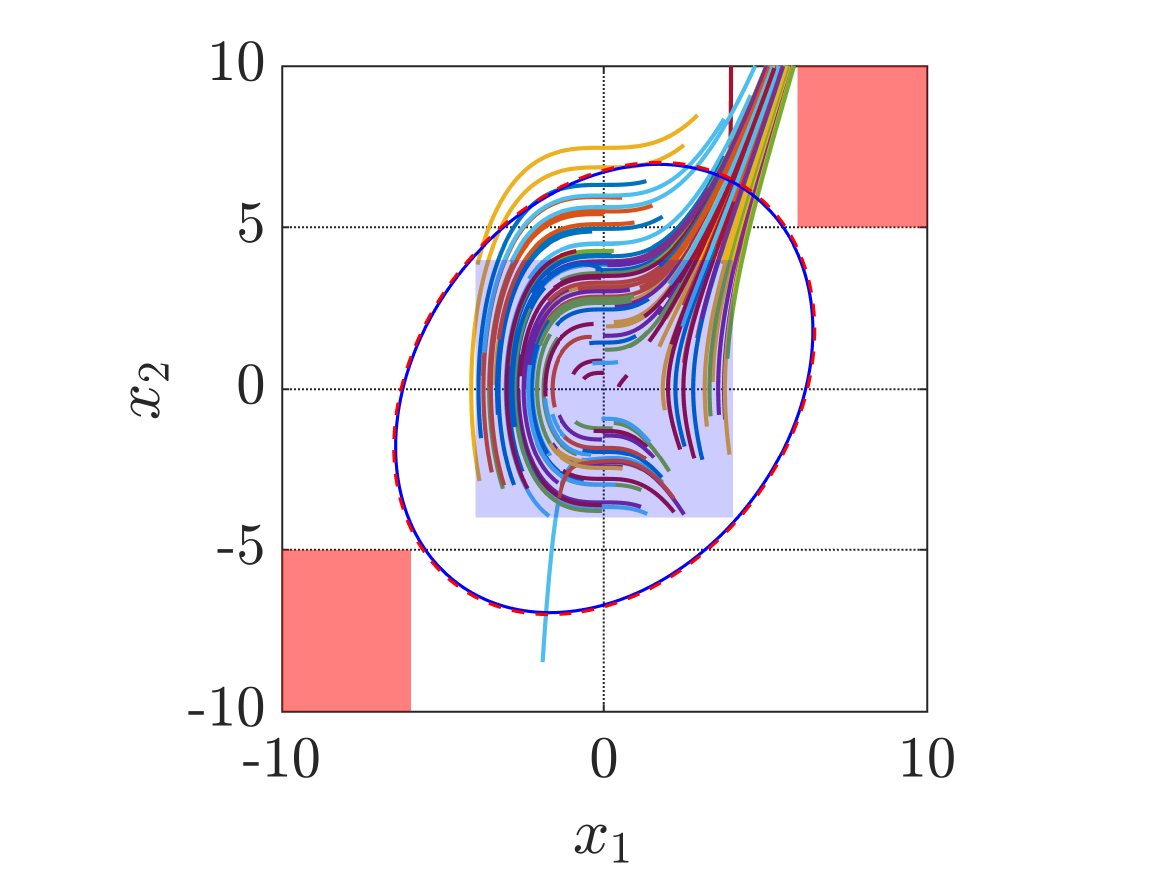}
 		\caption{Open-loop trajectories}
 	\end{subfigure}
 	\hspace{-0.4cm}
 	\begin{subfigure}[t]{0.24\textwidth}
 		\includegraphics[width=\linewidth]{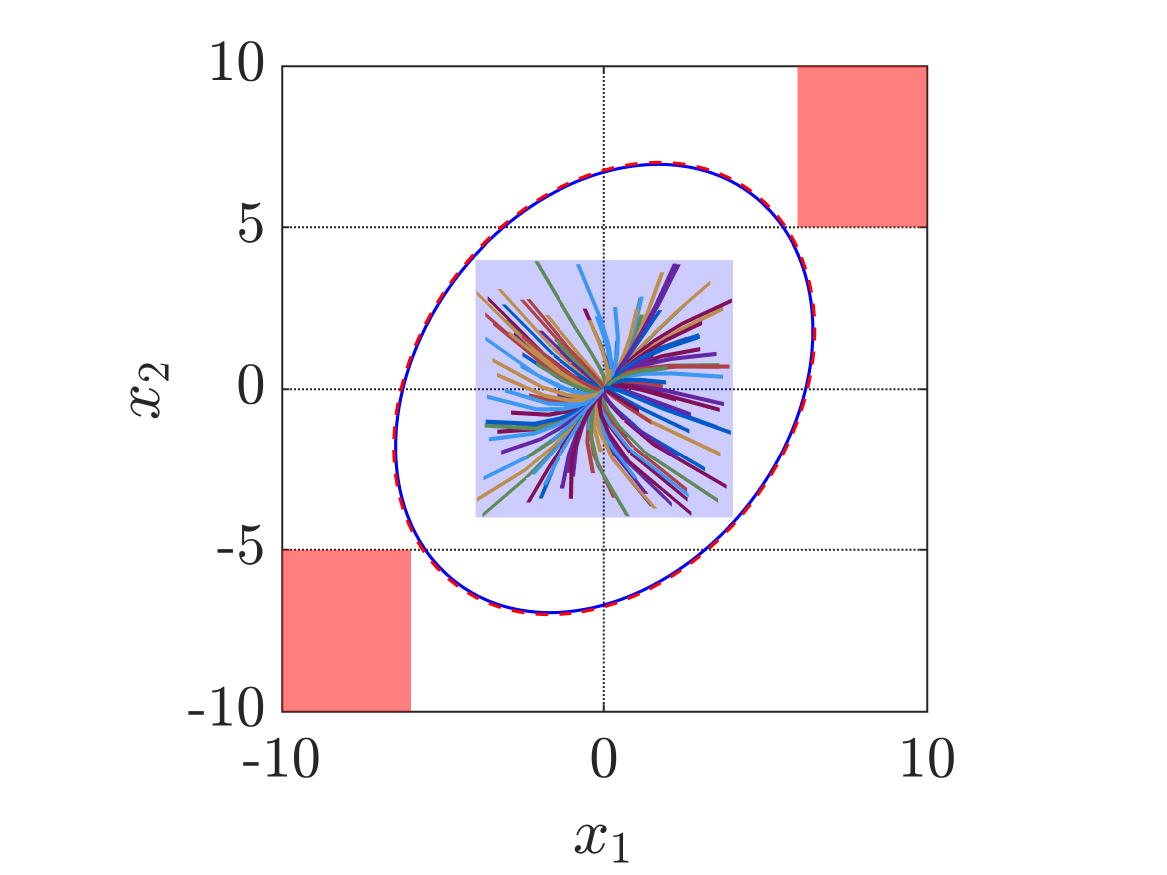}
 		\caption{Closed-loop trajectories}
 	\end{subfigure}
 	\hspace{-0.4cm}
 	\begin{subfigure}[t]{0.3\textwidth}
 		\includegraphics[width=\linewidth]{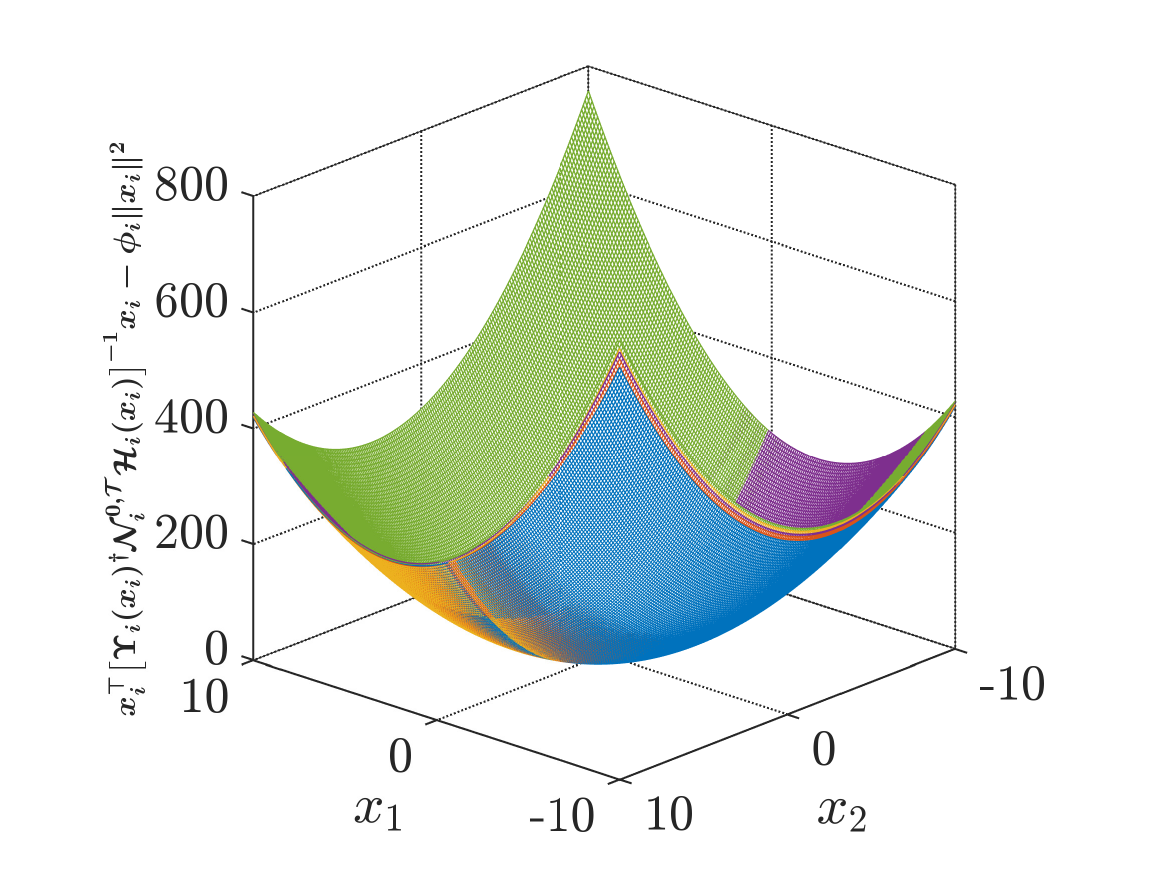}
 		\caption{Verification of~\eqref{con}}
 	\end{subfigure}
 	\caption{Trajectories of 120 representative subsystems in a heterogeneous network comprising 900 subsystems with a line interconnection topology. Plots \textbf{(a)} and \textbf{(b)} show the trajectories without and with the safety controller, respectively. The blue~\protect\bluesquare\ and red~\protect\redsquare\ boxes indicate the corresponding initial and unsafe regions, while the level sets \(\mathds{B_i}(x_i) = \gamma_i\) and \(\mathds{B_i}(x_i) = \beta_i\) are shown by~\sampleline{blue, thick} and~\sampleline{dashed, red, thick}, respectively. Plot \textbf{(c)} illustrates the satisfaction of condition~\eqref{con}, where the expression
 			\(
 			x_i^\top \left[ \Upsilon_i(x_i)^{\dagger} \mathcal{N}_i^{0,\mathcal{T}} \mathcal{H}_i(x_i) \right]^{-1} x_i - \phi_i \|x_i\|^2
 			\)
 			remains positive throughout the state space \([-10, 10]^2\) for five different subsystems, including subsystems 301 and 601 (cf. Figure~\ref{Het-fig}).}
 	\label{fig:5}
 \end{figure}
 
 \subsection{Heterogeneous Network with Line Interconnection Topology}
 {We consider a \emph{heterogeneous} network obtained by serially connecting three homogeneous subnetworks through a line topology (cf. Figure~\ref{Het-fig}). Each subsystem comprises 300 subsystems, yielding 900 subsystems in total. For all \(i \in \{2, \ldots, 300\}\), the subsystem dynamics involve two state variables \(x_i = [x_{i_1};\, x_{i_2}]\), and are characterized by
\begin{align}
		\begin{split}\label{het}
			& \dot{x}_{i_1}= x_{i_2} - 10^{-3}\,{x}_{(i-1)_1} + \nu_{i_1},\\
			& \dot{x}_{i_2}= x_{i_1}^2 - 10^{-3}\,{x}_{(i-1)_2} + \nu_{i_2}, 
		\end{split}
\end{align}
where the first subsystem does not receive any internal inputs. One can rewrite \eqref{het} in the form of~\eqref{sys2} with the actual \(\mathcal{M}_i(x_i) = [x_{i_1};x_{i_2};\,x^2_{i_2}]\), and
\begin{align*}
		A_i &= \begin{bmatrix} 0 &1 & 0 \\ 0 & 0 & 1 \end{bmatrix}, \:
		B_i = \mathds{I}_2, \:
		D_{ij} = -10^{-3} \mathds{I}_2.
\end{align*}
For \(i=301\), the dynamics are
\begin{align*}
		& \dot{x}_{i_1}= 1.2\,x_{i_2} - 3\times10^{-2}\,{x}_{(i-1)_1} + \nu_{i_1},\\
		& \dot{x}_{i_2}= 1.2\,x_{i_1}^2 - 3\times10^{-2}\,{x}_{(i-1)_2} + \nu_{i_2},
\end{align*}
which can be rewritten in the form of~\eqref{sys2} with \(\mathcal{M}_i(x_i) = [x_{i_1};x_{i_2};\,x^2_{i_2}]\), alongside
\begin{align*}
		A_i &= \begin{bmatrix} 0 & 1.2 & 0 \\ 0& 0 & 1.2 \end{bmatrix}\!\!, \:
		B_i = \mathds{I}_2, \:
		D_{ij} = -3\times10^{-2} \mathds{I}_2.
\end{align*}
For all \(i \in \{302,\ldots,600\}\), the dynamics are
\begin{align*}
		& \dot{x}_{i_1}= 1.2\,x_{i_2} - 2\times10^{-3}\,{x}_{(i-1)_1} + \nu_{i_1},\\\notag
		& \dot{x}_{i_2}= 1.2\,x_{i_1}^2 - 2\times10^{-3}\,{x}_{(i-1)_2} + \nu_{i_2},
\end{align*}
which can be expressed in the form of~\eqref{sys2} using \(\mathcal{M}_i(x_i) = [x_{i_1};x_{i_2};\,x^2_{i_2}]\), along with
\begin{align*}
		A_i &= \begin{bmatrix} 0 & 1.2 & 0 \\ 0 & 0 & 1.2 \end{bmatrix}\!\!, \:
		B_i = \mathds{I}_2, \:
		D_{ij} = -2\times10^{-3}  \mathds{I}_2.
\end{align*}
For \(i=601\), the subsystem’s dynamics are
\begin{align*}
		& \dot{x}_{i_1}= 1.4\,x_{i_2} - 4\times10^{-2}\,{x}_{(i-1)_1} + \nu_{i_1},\\
		& \dot{x}_{i_2}= 1.4\,x_{i_1}^2 - 4\times10^{-2}\,{x}_{(i-1)_2} + \nu_{i_2},
\end{align*}
where the form of~\eqref{sys2} can be obtained by \(\mathcal{M}_i(x_i) = [x_{i_1};x_{i_2};\,x^2_{i_2}]\), and
\begin{align*}
		A_i &= \begin{bmatrix} 0 & 1.4 & 0 \\0 &  0 & 1.4 \end{bmatrix}\!\!, \:
		B_i = \mathds{I}_2, \:
		D_{ij} = -4\times10^{-2}  \mathds{I}_2.
\end{align*}
For all \(i \in \{602,\ldots,900\}\), the dynamics are
\begin{align}\notag
		& \dot{x}_{i_1}= 1.4\,x_{i_2} - 5\times10^{-3}\,{x}_{(i-1)_1} + \nu_{i_1},\\\notag
		& \dot{x}_{i_2}= 1.4\,x_{i_1}^2 - 5\times10^{-3}\,{x}_{(i-1)_2} + \nu_{i_2},
\end{align}
being in the form of~\eqref{sys2} with \(\mathcal{M}_i(x_i) = [x_{i_1};x_{i_2};\,x^2_{i_2}]\), and
\begin{align*}
		A_i &= \begin{bmatrix} 0 & 1.4 & 0 \\ 0 &0 & 1.4 \end{bmatrix}\!\!, \:
		B_i = \mathds{I}_2, \:
		D_{ij} = -5\times10^{-3}  \mathds{I}_2.
\end{align*}
The matrices \(A_i\) and \(B_i\), along with the exact form of \(\mathcal{M}_i(x_i)\), are unknown. However, the dictionary of monomials up to the known degree of $2
$ is defined as \(\mathcal{M}_i(x_i) = [x_{i_1}; x_{i_2}; x_{i_1}x_{i_2}; x^2_{i_1}; x^2_{i_2}],\) for all \(i \in \{1,\ldots,900\}\). The matrices of the interconnected network can be arranged as
\begin{align*}
	A(x)~\text{as in ~\eqref{Het-net}}, ~~B= \mathsf{blkdiag}(B_1,\ldots,B_{900}).
	\end{align*}} 
The regions of interest are given as follows for all \(i \in \{1, \ldots, 900\}\): \(X_i = [-10, 10]^2\), \(X_{0_i} = [-4, 4]^2\), and
	\(
	X_{u_i} = [-10, -6] \times [-10, -5] \;\cup\; [6, 10] \times [5, 10].
	\) The designed parameters \(\bar{\varkappa}_i\), \(\gamma_i\), \(\beta_i\), \(\rho_i\), and \(\varepsilon_i\) are provided in Table~\ref{tab:system-configurations-a}. The simulation results are also presented in Figure~\ref{fig:5}.

\section{Conclusion}\label{sec:Conclusion}
We developed a compositional data-driven scheme with noisy data to design robust safety controllers for large-scale interconnected networks with unknown dynamics. Our approach addressed the dual challenges of high network dimensionality and unknown model complexity by considering the network as a collection of smaller subsystems and utilizing data from each subsystem's trajectory to design control sub-barrier certificates and their corresponding local controllers. Our method requires only a single noise-corrupted input-state trajectory from each subsystem up to a specified time horizon, provided a certain rank condition is met. By leveraging the small-gain compositional condition, we then composed CSBC derived from data into a control barrier certificate for the entire network, ensuring safety over an infinite time horizon with correctness guarantees. Extending our compositional data-driven approach to accommodate more complex dynamical systems beyond polynomial is an active direction for future work.

\bibliographystyle{IEEEtran}
\bibliography{biblio}	

\begin{thebibliography}{10}
\providecommand{\url}[1]{#1}
\csname url@samestyle\endcsname
\providecommand{\newblock}{\relax}
\providecommand{\bibinfo}[2]{#2}
\providecommand{\BIBentrySTDinterwordspacing}{\spaceskip=0pt\relax}
\providecommand{\BIBentryALTinterwordstretchfactor}{4}
\providecommand{\BIBentryALTinterwordspacing}{\spaceskip=\fontdimen2\font plus
\BIBentryALTinterwordstretchfactor\fontdimen3\font minus
  \fontdimen4\font\relax}
\providecommand{\BIBforeignlanguage}[2]{{%
\expandafter\ifx\csname l@#1\endcsname\relax
\typeout{** WARNING: IEEEtran.bst: No hyphenation pattern has been}%
\typeout{** loaded for the language `#1'. Using the pattern for}%
\typeout{** the default language instead.}%
\else
\language=\csname l@#1\endcsname
\fi
#2}}
\providecommand{\BIBdecl}{\relax}
\BIBdecl

\bibitem{baier2008principles}
C.~Baier and J.-P. Katoen, \emph{Principles of model checking}.\hskip 1em plus
  0.5em minus 0.4em\relax MIT press, 2008.

\bibitem{tabuada2009verification}
P.~Tabuada, \emph{Verification and control of hybrid systems: {A} symbolic
  approach}.\hskip 1em plus 0.5em minus 0.4em\relax Springer Science \&
  Business Media, 2009.

\bibitem{APLS08}
A.~Abate, M.~Prandini, J.~Lygeros, and S.~Sastry, ``Probabilistic reachability
  and safety for controlled discrete-time stochastic hybrid systems,''
  \emph{Automatica}, vol.~44, no.~11, pp. 2724--2734, 2008.

\bibitem{Girard-Approximation}
A.~Girard and G.~J. Pappas, ``Approximation {M}etrics for {D}iscrete and
  {C}ontinuous {S}ystems,'' \emph{IEEE Transactions on Automatic Control},
  vol.~52, no.~5, pp. 782--798, 2007.

\bibitem{julius2009approximations}
A.~A. Julius and G.~J. Pappas, ``Approximations of stochastic hybrid systems,''
  \emph{IEEE Transactions on Automatic Control}, vol.~54, no.~6, pp.
  1193--1203, 2009.

\bibitem{zamani2014symbolic}
M.~Zamani, P.~Mohajerin~Esfahani, R.~Majumdar, A.~Abate, and J.~Lygeros,
  ``Symbolic control of stochastic systems via approximately bisimilar finite
  abstractions,'' \emph{IEEE Transactions on Automatic Control}, vol.~59,
  no.~12, pp. 3135--3150, 2014.

\bibitem{Pranja}
S.~Prajna, A.~Jadbabaie, and G.~Pappas, ``Stochastic safety verification using
  barrier certificates,'' in \emph{Proceedings of the 43rd IEEE Conference on
  Decision and Control (CDC)}, 2004, pp. 929--934.

\bibitem{ames2019control}
A.~D. Ames, S.~Coogan, M.~Egerstedt, G.~Notomista, K.~Sreenath, and P.~Tabuada,
  ``Control barrier functions: Theory and applications,'' in \emph{Proceedings
  of the 18th European Control Conference (ECC)}, 2019, pp. 3420--3431.

\bibitem{wieland2007constructive}
P.~Wieland and F.~Allg{\"o}wer, ``Constructive safety using control barrier
  functions,'' \emph{IFAC Proceedings Volumes}, vol.~40, no.~12, pp. 462--467,
  2007.

\bibitem{xiao2023safe}
W.~Xiao, C.~G. Cassandras, and C.~Belta, \emph{Safe Autonomy with Control
  Barrier Functions: Theory and Applications}.\hskip 1em plus 0.5em minus
  0.4em\relax Springer, 2023.

\bibitem{NEURIPS2020_barrier}
W.~Luo, W.~Sun, and A.~Kapoor, ``Multi-{R}obot {C}ollision {A}voidance under
  {U}ncertainty with {P}robabilistic {S}afety {B}arrier {C}ertificates,'' in
  \emph{Advances in {N}eural {I}nformation {P}rocessing {S}ystems}, vol.~33,
  2020, pp. 372--383.

\bibitem{NEURIPS2021_Barrier}
Y.~Luo and T.~Ma, ``Learning barrier certificates: Towards safe reinforcement
  learning with zero training-time violations,'' in \emph{Advances in {N}eural
  {I}nformation {P}rocessing {S}ystems}, vol.~34, 2021, pp. 25\,621--25\,632.

\bibitem{lavaei2022compositional}
A.~Lavaei and E.~Frazzoli, ``Compositional controller synthesis for
  interconnected stochastic systems with markovian switching,'' in \emph{2022
  American Control Conference (ACC)}, 2022, pp. 4838--4843.

\bibitem{wooding2024protect}
B.~Wooding, V.~Horbanov, and A.~Lavaei, ``{PRoTECT}: Parallelized construction
  of safety barrier certificates for nonlinear polynomial systems,''
  \emph{arXiv:2404.14804}, 2024.

\bibitem{lavaei2024scalable}
A.~Lavaei and E.~Frazzoli, ``Scalable synthesis of safety barrier certificates
  for networks of stochastic switched systems,'' \emph{IEEE Transactions on
  Automatic Control}, vol.~69, no.~11, pp. 7294--7309, 2024.

\bibitem{zaker2024compositional}
M.~Zaker, O.~Akbarzadeh, B.~Samari, and A.~Lavaei, ``Compositional design of
  safety controllers for large-scale stochastic hybrid systems,'' \emph{arXiv:
  2409.10018}, 2024.

\bibitem{ref4}
A.~Lavaei, S.~Soudjani, A.~Abate, and M.~Zamani, ``Automated verification and
  synthesis of stochastic hybrid systems: A survey,'' \emph{Automatica}, vol.
  146, 2022.

\bibitem{swikir2019compositional}
A.~Swikir and M.~Zamani, ``Compositional synthesis of finite abstractions for
  networks of systems: A small-gain approach,'' \emph{Automatica}, vol. 107,
  pp. 551--561, 2019.

\bibitem{nejati2020compositional}
A.~Nejati and M.~Zamani, ``Compositional {C}onstruction of {F}inite {MDP}s for
  {C}ontinuous-{T}ime {S}tochastic {S}ystems: {A} {D}issipativity {A}pproach,''
  \emph{IFAC-PapersOnLine}, vol.~53, no.~2, pp. 1962--1967, 2020.

\bibitem{lavaei2019automated}
A.~Lavaei, ``Automated verification and control of large-scale stochastic
  cyber-physical systems: Compositional techniques,'' Ph.D. dissertation, PhD
  Thesis, Technische Universit{\"a}t M{\"u}nchen, 2019.

\bibitem{nejati2022dissipativity}
A.~Nejati and M.~Zamani, ``From {D}issipativity {T}heory to {C}ompositional
  {C}onstruction of {C}ontrol {B}arrier {C}ertificates,'' \emph{Leibniz
  Transactions on Embedded Systems}, vol.~8, no.~2, pp. 06--1, 2022.

\bibitem{Hou2013model}
Z.~Hou and Z.~Wang, ``From model-based control to data-driven control: Survey,
  classification and perspective,'' \emph{Information Sciences}, vol. 235, pp.
  3--35, 2013.

\bibitem{dorfler2022bridging}
J.~D{\"o}rfler, F.and~Coulson and I.~Markovsky, ``Bridging direct and indirect
  data-driven control formulations via regularizations and relaxations,''
  \emph{IEEE Transactions on Automatic Control}, vol.~68, no.~2, pp. 883--897,
  2022.

\bibitem{nejati2023formal}
A.~Nejati, A.~Lavaei, P.~Jagtap, S.~Soudjani, and M.~Zamani, ``Formal
  verification of unknown discrete-and continuous-time systems: A data-driven
  approach,'' \emph{IEEE Transactions on Automatic Control}, vol.~68, no.~5,
  pp. 3011--3024, 2023.

\bibitem{ROTULO2022110519}
M.~Rotulo, C.~{De Persis}, and P.~Tesi, ``Online learning of data-driven
  controllers for unknown switched linear systems,'' \emph{Automatica}, vol.
  145, 2022.

\bibitem{nejati2023data}
A.~Nejati and M.~Zamani, ``Data-driven synthesis of safety controllers via
  multiple control barrier certificates,'' \emph{IEEE Control Systems Letters},
  vol.~7, pp. 2497--2502, 2023.

\bibitem{lavaei2023symbolic}
A.~Lavaei, ``Symbolic {A}bstractions with {G}uarantees: {A} {D}ata-{D}riven
  {D}ivide-and-{C}onquer {S}trategy,'' in \emph{Proceedings of the 62nd IEEE
  Conference on Decision and Control (CDC)}, 2023, pp. 7994--7999.

\bibitem{bisoffi2020controller}
A.~Bisoffi, C.~De~Persis, and P.~Tesi, ``Controller design for robust
  invariance from noisy data,'' \emph{IEEE Transactions on Automatic Control},
  2022.

\bibitem{Bartocci-Data-Driven}
E.~Bartocci, L.~Bortolussi, and G.~Sanguinetti, ``Data-{D}riven {S}tatistical
  {L}earning of {T}emporal {L}ogic {P}roperties,'' in \emph{Formal Modeling and
  Analysis of Timed Systems}, 2014, pp. 23--37.

\bibitem{nejati2022data}
A.~Nejati, B.~Zhong, M.~Caccamo, and M.~Zamani, ``Data-driven controller
  synthesis of unknown nonlinear polynomial systems via control barrier
  certificates,'' in \emph{Learning for Dynamics and Control Conference}, 2022,
  pp. 763--776.

\bibitem{akbarzadeh2024learning}
O.~Akbarzadeh, M.~Ashoori, and A.~Lavaei, ``Learning robust safety controllers
  for uncertain input-affine polynomial systems,'' \emph{64th IEEE Conference
  on Decision and Control (CDC 2025)}, 2025.

\bibitem{samari2024single}
B.~Samari, O.~Akbarzadeh, M.~Zaker, and A.~Lavaei, ``From a single trajectory
  to safety controller synthesis of discrete-time nonlinear polynomial
  systems,'' \emph{IEEE Control Systems Letters}, vol.~8, pp. 3123--3128, 2024.

\bibitem{calafiore2006scenario}
G.~C. Calafiore and M.~C. Campi, ``The scenario approach to robust control
  design,'' \emph{IEEE Transactions on Automatic Control}, vol.~51, no.~5, pp.
  742--753, 2006.

\bibitem{esfahani2014performance}
P.~Mohajerin~Esfahani, T.~Sutter, and J.~Lygeros, ``Performance {B}ounds for
  the {S}cenario {A}pproach and an {E}xtension to a {C}lass of {N}on-{C}onvex
  {P}rograms,'' \emph{IEEE Transactions on Automatic Control}, vol.~60, no.~1,
  pp. 46--58, 2014.

\bibitem{persistency}
J.~C. Willems, P.~Rapisarda, I.~Markovsky, and B.~L. {De Moor}, ``A note on
  persistency of excitation,'' \emph{Systems \& Control Letters}, vol.~54,
  no.~4, pp. 325--329, 2005.

\bibitem{strogatz2018nonlinear}
S.~H. Strogatz, \emph{Nonlinear dynamics and chaos: with applications to
  physics, biology, chemistry, and engineering}.\hskip 1em plus 0.5em minus
  0.4em\relax Chapman and Hall/CRC, 2024.

\bibitem{Willmore_1960}
T.~J. Willmore, ``The {D}efinition of {L}ie {D}erivative,'' \emph{Proceedings
  of the Edinburgh Mathematical Society}, vol.~12, no.~1, p. 27–29, 1960.

\bibitem{van2020noisy}
H.~J. Van~Waarde, M.~K. Camlibel, and M.~Mesbahi, ``From noisy data to feedback
  controllers: {N}onconservative design via a matrix {S}-lemma,'' \emph{IEEE
  Transactions on Automatic Control}, vol.~67, no.~1, pp. 162--175, 2020.

\bibitem{de2019formulas}
C.~De~Persis and P.~Tesi, ``Formulas for data-driven control: Stabilization,
  optimality, and robustness,'' \emph{IEEE Transactions on Automatic Control},
  vol.~65, no.~3, pp. 909--924, 2019.

\bibitem{bhatia1995cauchy}
R.~Bhatia and C.~Davis, ``A {C}auchy-{S}chwarz inequality for operators with
  applications,'' \emph{Linear algebra and its applications}, vol. 223, pp.
  119--129, 1995.

\bibitem{young1912classes}
W.~H. Young, ``On classes of summable functions and their {F}ourier series,''
  \emph{Proceedings of the Royal Society of London. Series A, Containing Papers
  of a Mathematical and Physical Character}, vol.~87, no. 594, pp. 225--229,
  1912.

\bibitem{Ando1995}
T.~Ando, \emph{Matrix {Y}oung {I}nequalities}.\hskip 1em plus 0.5em minus
  0.4em\relax Birkh{\"a}user Basel, 1995, pp. 33--38.

\bibitem{zhang2006schur}
F.~Zhang, \emph{The {S}chur {C}omplement and {I}ts {A}pplications}.\hskip 1em
  plus 0.5em minus 0.4em\relax Springer Science \& Business Media, 2006,
  vol.~4.

\bibitem{SOSTOOLS}
S.~Prajna, A.~Papachristodoulou, P.~Seiler, and P.~Parrilo, ``{SOSTOOLS}:
  control applications and new developments,'' in \emph{Proceedings of IEEE
  International Conference on Robotics and Automation}, 2004, pp. 315--320.

\bibitem{mosek}
\BIBentryALTinterwordspacing
M.~ApS, \emph{The {MOSEK} optimization toolbox for {MATLAB} manual. Version
  10.1.}, 2024. [Online]. Available:
  \url{http://docs.mosek.com/latest/toolbox/index.html}
\BIBentrySTDinterwordspacing

\bibitem{lopez2019synchronization}
D.~L{\'o}pez-Mancilla, G.~L{\'o}pez-Cahuich, C.~Posadas-Castillo,
  C.~Casta{\~n}eda, J.~Garc{\'\i}a-L{\'o}pez, J.~V{\'a}zquez-Guti{\'e}rrez, and
  E.~Tlelo-Cuautle, ``Synchronization of complex networks of identical and
  nonidentical chaotic systems via model-matching control,'' \emph{PLOS ONE},
  vol.~14, no.~5, pp. 1--16, 2019.

\bibitem{khalil2002control}
H.~K. Khalil, \emph{Nonlinear {S}ystems}, 3rd~ed.\hskip 1em plus 0.5em minus
  0.4em\relax Prentice Hall, 2002.

\bibitem{WANG2014162}
Z.~Wang, J.~Cao, Z.~Duan, and X.~Liu, ``Synchronization of coupled duffing-type
  oscillator dynamical networks,'' \emph{Neurocomputing}, vol. 136, pp.
  162--169, 2014.

\bibitem{Wang2009}
H.~Wang, Z.~Han, W.~Zhang, and Q.~Xie, ``Chaotic synchronization and secure
  communication based on descriptor observer,'' \emph{Nonlinear Dynamics},
  vol.~57, no.~1, pp. 69--73, 2009.

\bibitem{DeterministicNonperiodicFlow}
E.~N. Lorenz, ``Deterministic {N}onperiodic {F}low,'' \emph{Journal of
  Atmospheric Sciences}, vol.~20, no.~2, pp. 130 -- 141, 1963.

\bibitem{sprott2010elegant}
J.~C. Sprott, \emph{Elegant Chaos}.\hskip 1em plus 0.5em minus 0.4em\relax
  WORLD SCIENTIFIC, 2010.

\end{thebibliography}

\appendix
Here, we present a detailed presentation of the CSBC \(\mathds{B}_i(x_i)\), local controllers \(\nu_i\), the CBC \(\mathds{B}(x)\), the safety controller \(\nu\), and the network parameters \(\gamma\), \(\beta\), and \(\varepsilon\), all designed from data.

\subsection{Lorenz Network with Fully-Interconnected  Topology}
\textbf{Subsystems: CSBC and local controllers}
\begin{align*}
		\nu_{i_1} &= 
		21.8533\,x_{i_1}^2 - 67.7554\,x_{i_1}x_{i_2} - 67.6251\,x_{i_1}x_{i_3} \\
		&\quad + 47.7192\,x_{i_2}^2 + 96.3149\,x_{i_2}x_{i_3} + 55.7623\,x_{i_3}^2 \\
		&\quad - 1237.1117\,x_{i_1} + 197.717\,x_{i_2} + 260.4928\,x_{i_3} \\[1em]
		\nu_{i_2} &= 
		117.0317\,x_{i_1}^2 - 138.0305\,x_{i_1}x_{i_2} - 164.201\,x_{i_1}x_{i_3} \\
		&\quad + 6.5751\,x_{i_2}^2 + 12.0967\,x_{i_2}x_{i_3} + 12.9936\,x_{i_3}^2 \\
		&\quad + 51.4842\,x_{i_1} - 318.5331\,x_{i_2} - 109.3699\,x_{i_3} \\[1em]
		\nu_{i_3} &= 
		19.0949\,x_{i_1}^2 - 59.5318\,x_{i_1}x_{i_2} - 105.0946\,x_{i_1}x_{i_3} \\
		&\quad + 33.7661\,x_{i_2}^2 + 53.6538\,x_{i_2}x_{i_3} + 27.7845\,x_{i_3}^2 \\
		&\quad + 314.9155\,x_{i_1} - 270.904\,x_{i_2} - 458.6184\,x_{i_3} \\[1em]
		\mathds{B}_i(x_i) &= 
		18.7668\,x_{i_1}^2 - 5.8139\,x_{i_1}x_{i_2} - 7.728\,x_{i_1}x_{i_3} \\
		&\quad + 5.5492\,x_{i_2}^2 + 6.0589\,x_{i_2}x_{i_3} + 6.3392\,x_{i_3}^2
\end{align*}
\textbf{Network: CBC and safety controller}
\begin{align*}
		\mathds{B}(x)&=\sum_{i=1}^{1000} \mathds{B}_i\left( x_i\right),~\nu = [\nu_1;\dots; \nu_{1000}]\\ \gamma &=\sum_{i=1}^{1000} \gamma_i = 4.79 \times 10^5,~ \beta=\sum_{i=1}^{1000} \beta_i= 4.9 \times 10^5\\\varepsilon &=-\varpi = 0.98
\end{align*}
\subsection{ Lorenz Network with Ring Interconnection Topology}
\textbf{Subsystems: CSBC and local controllers}
\begin{align*}
		\nu_{i_1} &= 
		-10.0712\,x_{i_1}^2 + 23.9253\,x_{i_1}x_{i_2} + 8.0498\,x_{i_1}x_{i_3} \\
		&\quad - 25.8062\,x_{i_2}^2 - 6.5798\,x_{i_2}x_{i_3} + 6\,x_{i_3}^2\\
			&\quad - 2734.5265\,x_{i_1} + 244.7904\,x_{i_2} + 249.8385\,x_{i_3} \\[1em]
			\nu_{i_2} &= 
		-32.134\,x_{i_1}^2 + 135.5226\,x_{i_1}x_{i_2} + 35.9192\,x_{i_1}x_{i_3} \\
		&\quad - 13.1452\,x_{i_2}^2 - 8.983\,x_{i_2}x_{i_3} - 42.2782\,x_{i_3}^2 \\
		&\quad + 512.586\,x_{i_1} - 536.8531\,x_{i_2} - 132.6197\,x_{i_3} \\[1em]
			\nu_{i_3} &= 
		-24.4745\,x_{i_1}^2 + 1.1856\,x_{i_1}x_{i_2} - 46.9177\,x_{i_1}x_{i_3} \\
		&\quad - 7.6293\,x_{i_2}^2 + 36.0765\,x_{i_2}x_{i_3} + 8.4757\,x_{i_3}^2 \\
		&\quad - 42.8253\,x_{i_1} - 53.8814\,x_{i_2} - 583.1774\,x_{i_3}
\end{align*}

\begin{align*}
	\mathds{B}_i(x_i) &= 
	26.7769\,x_{i_1}^2 - 5.821\,x_{i_1}x_{i_2} - 3.9914\,x_{i_1}x_{i_3} \\
	&\quad + 5.1485\,x_{i_2}^2 + 1.923\,x_{i_2}x_{i_3} + 6.1772\,x_{i_3}^2
\end{align*}

\textbf{Network: CBC and safety controller}
\begin{align*}
		\mathds{B}(x)&=\sum_{i=1}^{2000} \mathds{B}_i\left( x_i\right),~\nu = [\nu_1;\dots; \nu_{2000}]\\ \gamma &=\sum_{i=1}^{2000} \gamma_i = 10^6,~ \beta=\sum_{i=1}^{2000} \beta_i= 1.03 \times 10^6\\\varepsilon &=-\varpi = 0.98
\end{align*}
\subsection{Spacecraft Network with Line Interconnection Topology}
\textbf{Subsystems: CSBC and local controllers}
\begin{align*}
		\nu_{i_1} &= 
		-50.6439\,x_{i_1}^2 - 114.2867\,x_{i_1}x_{i_2} + 100.2839\,x_{i_1}x_{i_3} \\
		&\quad + 121.479\,x_{i_2}^2 - 244.9116\,x_{i_2}x_{i_3} + 140.0651\,x_{i_3}^2 \\
		&\quad - 18245.2415\,x_{i_1} - 1172.2093\,x_{i_2} - 6485.0942\,x_{i_3} \\[1em]
		\nu_{i_2} &= 
		-73.0867\,x_{i_1}^2 - 149.1322\,x_{i_1}x_{i_2} + 181.8506\,x_{i_1}x_{i_3} \\
		&\quad + 72.7091\,x_{i_2}^2 + 232.6915\,x_{i_2}x_{i_3} - 29.0203\,x_{i_3}^2 \\
		&\quad - 770.8096\,x_{i_1} - 8453.0782\,x_{i_2} - 3753.6522\,x_{i_3} \\[1em]
		\nu_{i_3} &= 
		116.7209\,x_{i_1}^2 + 546.4828\,x_{i_1}x_{i_2} - 430.1346\,x_{i_1}x_{i_3} \\
		&\quad - 352.0659\,x_{i_2}^2 + 137.7694\,x_{i_2}x_{i_3} - 144.2429\,x_{i_3}^2 \\
		&\quad - 10470.1736\,x_{i_1} - 5060.5\,x_{i_2} - 19797.8908\,x_{i_3} \\[1em]
		\mathds{B}_i(x_i) &= 
		4.886\,x_{i_1}^2 + 0.54151\,x_{i_1}x_{i_2} + 3.6833\,x_{i_1}x_{i_3} \\
		&\quad + 2.2014\,x_{i_2}^2 + 1.7819\,x_{i_2}x_{i_3} + 3.4979\,x_{i_3}^2
\end{align*}
\textbf{Network: CBC and safety controller}
\begin{align*}
		\mathds{B}(x)&=\sum_{i=1}^{2000} \mathds{B}_i\left( x_i\right),~\nu = [\nu_1;\dots; \nu_{2000}]\\ \gamma &=\sum_{i=1}^{2000} \gamma_i = 1.46 \times 10^5~ \beta=\sum_{i=1}^{2000} \beta_i= 1.5 \times 10^5\\\varepsilon &=-\varpi =0.05
\end{align*}
\subsection{Lu Network with Star Interconnection Topology}
\textbf{Subsystems: CSBC and local controllers}
\begin{align*}
		\nu_{i_1} &= 
		59.8425\,x_{i_1}^2 + 37.8192\,x_{i_1}x_{i_2} + 251.9851\,x_{i_1}x_{i_3} \\
		&\quad + 5.5285\,x_{i_2}^2 + 74.9351\,x_{i_2}x_{i_3} + 252.0663\,x_{i_3}^2 \\
		&\quad + 245.721\,x_{i_1} - 315.0832\,x_{i_2} + 853.58\,x_{i_3} \\[1em]
		\nu_{i_2} &= 
		-91.8658\,x_{i_1}^2 - 185.3774\,x_{i_1}x_{i_2} - 245.4829\,x_{i_1}x_{i_3} \\
		&\quad - 59.3153\,x_{i_2}^2 - 377.5074\,x_{i_2}x_{i_3} - 34.1371\,x_{i_3}^2 \\
		&\quad - 1117.3166\,x_{i_1} - 1543.8441\,x_{i_2} - 370.2911\,x_{i_3} \\[1em]
		\mathds{B}_i(x_i) &= 
		4.1535\,x_{i_1}^2 + 6.9638\,x_{i_1}x_{i_2} + 4.5311\,x_{i_1}x_{i_3} \\
		&\quad + 5.5811\,x_{i_2}^2 + 1.0341\,x_{i_2}x_{i_3} + 3.7818\,x_{i_3}^2
\end{align*}
\textbf{Network: CBC and safety controller}
\begin{align*}
		\mathds{B}(x)&=\sum_{i=1}^{2000} \mathds{B}_i\left( x_i\right),~\nu = [\nu_1;\dots; \nu_{2000}]\\ \gamma &=\sum_{i=1}^{2000} \gamma_i = 1.46 \times 10^6,~ \beta=\sum_{i=1}^{2000} \beta_i= 1.5 \times 10^6\\\varepsilon &=-\varpi =0.88
\end{align*}
\subsection{Duffing Oscillator Network with Ring Interconnection Topology}
\textbf{Subsystems: CSBC and local controllers}
	\begin{align*}
		\nu_{i_1} &= 
		-2.0346\,x_{i_1}^3 - 13.4461\,x_{i_1}^2x_{i_2} - 5.1204\,x_{i_1}x_{i_2}^2\\&\quad - 10.3586\,x_{i_2}^3  + 1.2659\,x_{i_1}^2 - 21.134\,x_{i_1}x_{i_2}\\ &\quad + 10.1984\,x_{i_2}^2  - 1168.2521\,x_{i_1} + 144.0447\,x_{i_2} \\[1em]
		\nu_{i_2} &= 
		24.9364\,x_{i_1}^3 - 6.0576\,x_{i_1}^2x_{i_2} + 19.0191\,x_{i_1}x_{i_2}^2\\ &\quad - 5.2088\,x_{i_2}^3  + 36.8267\,x_{i_1}^2 - 24.4002\,x_{i_1}x_{i_2}\\ &\quad + 2.5832\,x_{i_2}^2  + 133.4182\,x_{i_1} - 778.8311\,x_{i_2} \\[1em]
		\mathds{B}_i(x_i) &= 
		8.4503\,x_{i_1}^2 - 2.0223\,x_{i_1}x_{i_2} + 5.6554\,x_{i_2}^2
\end{align*}
\textbf{Network: CBC and safety controller}
\begin{align*}
		\mathds{B}(x)&=\sum_{i=1}^{2000} \mathds{B}_i\left( x_i\right),~\nu = [\nu_1;\dots; \nu_{2000}]\\ \gamma &=\sum_{i=1}^{2000} \gamma_i = 5.63 \times 10^5,~ \beta=\sum_{i=1}^{2000} \beta_i=5.83 \times 10^5\\\varepsilon &=-\varpi =0.87
\end{align*}
\subsection{Duffing Oscillator Network with Binary Interconnection Topology}
\textbf{Subsystems: CSBC and local controllers}
\begin{align*}
		\nu_{i_1} &= 
		-1.7313\,x_{i_1}^3 - 14.8234\,x_{i_1}^2x_{i_2} - 4.6827\,x_{i_1}x_{i_2}^2\\
		&\quad  - 12.1856\,x_{i_2}^3 + 1.0959\,x_{i_1}^2 - 18.9204\,x_{i_1}x_{i_2}\\
		&\quad  + 2.9079\,x_{i_2}^2 - 640.2218\,x_{i_1} + 91.5416\,x_{i_2}\\[1em]
			\nu_{i_2} &= 
		24.1086\,x_{i_1}^3 - 6.3893\,x_{i_1}^2x_{i_2} + 19.748\,x_{i_1}x_{i_2}^2\\
		&\quad - 5.6392\,x_{i_2}^3  + 31.9524\,x_{i_1}^2 - 13.2464\,x_{i_1}x_{i_2}\\
		&\quad + 1.8457\,x_{i_2}^2  + 66.1871\,x_{i_1} - 474.8328\,x_{i_2} 
\end{align*}

\begin{align*}
\mathds{B}_i(x_i) = 
	8.0675\,x_{i_1}^2 - 2.0414\,x_{i_1}x_{i_2} + 6.0436\,x_{i_2}^2
\end{align*}
\textbf{Network: CBC and safety controller}
\begin{align*}
		\mathds{B}(x)&=\sum_{i=1}^{1023} \mathds{B}_i\left( x_i\right),~\nu = [\nu_1;\dots; \nu_{1023}]\\ \gamma &=\sum_{i=1}^{1023} \gamma_i = 2.87 \times 10^5,~ \beta=\sum_{i=1}^{1023} \beta_i= 2.97 \times 10^5\\\varepsilon &=-\varpi =0.93
\end{align*}
\subsection{Chen Network with Fully-Interconnected Topology}
\textbf{Subsystems: CSBC and local controllers}
\begin{align*}
		\nu_{i_1} &= 
		131.852\,x_{i_1}^2 + 174.5084\,x_{i_1}x_{i_2} + 788.7689\,x_{i_1}x_{i_3} \\
		&\quad + 57.7206\,x_{i_2}^2 + 495.3951\,x_{i_2}x_{i_3} + 874.5197\,x_{i_3}^2 \\
		&\quad - 2112.1811\,x_{i_1} - 56.1417\,x_{i_2} + 1193.6544\,x_{i_3} \\[1em]
		\nu_{i_2} &= 
		-86.624\,x_{i_1}^2 - 104.6167\,x_{i_1}x_{i_2} - 467.8595\,x_{i_1}x_{i_3} \\
		&\quad - 29.3959\,x_{i_2}^2 - 236.3971\,x_{i_2}x_{i_3} - 350.9533\,x_{i_3}^2 \\
		&\quad - 612.0299\,x_{i_1} - 519.3022\,x_{i_2} - 1292.8814\,x_{i_3} \\[1em]
		\mathds{B}_i(x_i) &= 
		18.9041\,x_{i_1}^2 + 11.2502\,x_{i_1}x_{i_2} + 17.1341\,x_{i_1}x_{i_3} \\
		&\quad + 2.5725\,x_{i_2}^2 + 10.1225\,x_{i_2}x_{i_3} + 12.6188\,x_{i_3}^2
\end{align*}
\textbf{Network: CBC and safety controller}
\begin{align*}
		\mathds{B}(x)&=\sum_{i=1}^{1000} \mathds{B}_i\left( x_i\right),~\nu = [\nu_1;\dots; \nu_{1000}]\\ \gamma &=\sum_{i=1}^{1000} \gamma_i = 5.14 \times 10^5,~ \beta=\sum_{i=1}^{1000} \beta_i= 5.27 \times 10^5\\\varepsilon &=-\varpi =0.38
\end{align*}
\subsection{Heterogeneous Network with Line Interconnection Topology}
Due to space constraints, details of the subsystems’ CSBC and local controllers are omitted and can be accessed by running the benchmark implementation code provided earlier.\\
\textbf{Network: CBC and safety controller.}
\begin{align*}
		\mathds{B}(x)&=\sum_{i=1}^{300} \left(3.0597\,x_{i_1}^2 - 1.4599\,x_{i_1}x_{i_2}+ 2.7439\,x_{i_2}^2\right)\\ & ~~~+ 3.1924\,x_{301_1}^2 - 1.5561\,x_{301_1}x_{301_2}+ 2.786\,x_{301_2}^2\\ & ~~~+\sum_{i=302}^{600} \left(3.1479\,x_{i_1}^2 - 1.5244\,x_{i_1}x_{i_2} + 2.7371\,x_{i_2}^2\right)\\
        & ~~~+ 3.3016\,x_{601_1}^2 -1.6452\,x_{601_1}x_{601_2}+ 2.7846\,x_{601_2}^2\\ & ~~~+\sum_{i=602}^{900} \left(3.2505\,x_{i_1}^2 - 1.6046\,x_{i_1}x_{i_2} + 2.7344\,x_{i_2}^2\right)\\
		\nu& = [\nu_1;\dots; \nu_{900}], \gamma =\sum_{i=1}^{900} \gamma_i = 1.11 \times 10^{5},\\ \beta&=\sum_{i=1}^{900} \beta_i= 1.13 \times 10^{5}, \varepsilon =-\varpi = 0.98
\end{align*}

\begin{IEEEbiography}[{\includegraphics[width=1in,height=1.25in,clip,keepaspectratio]{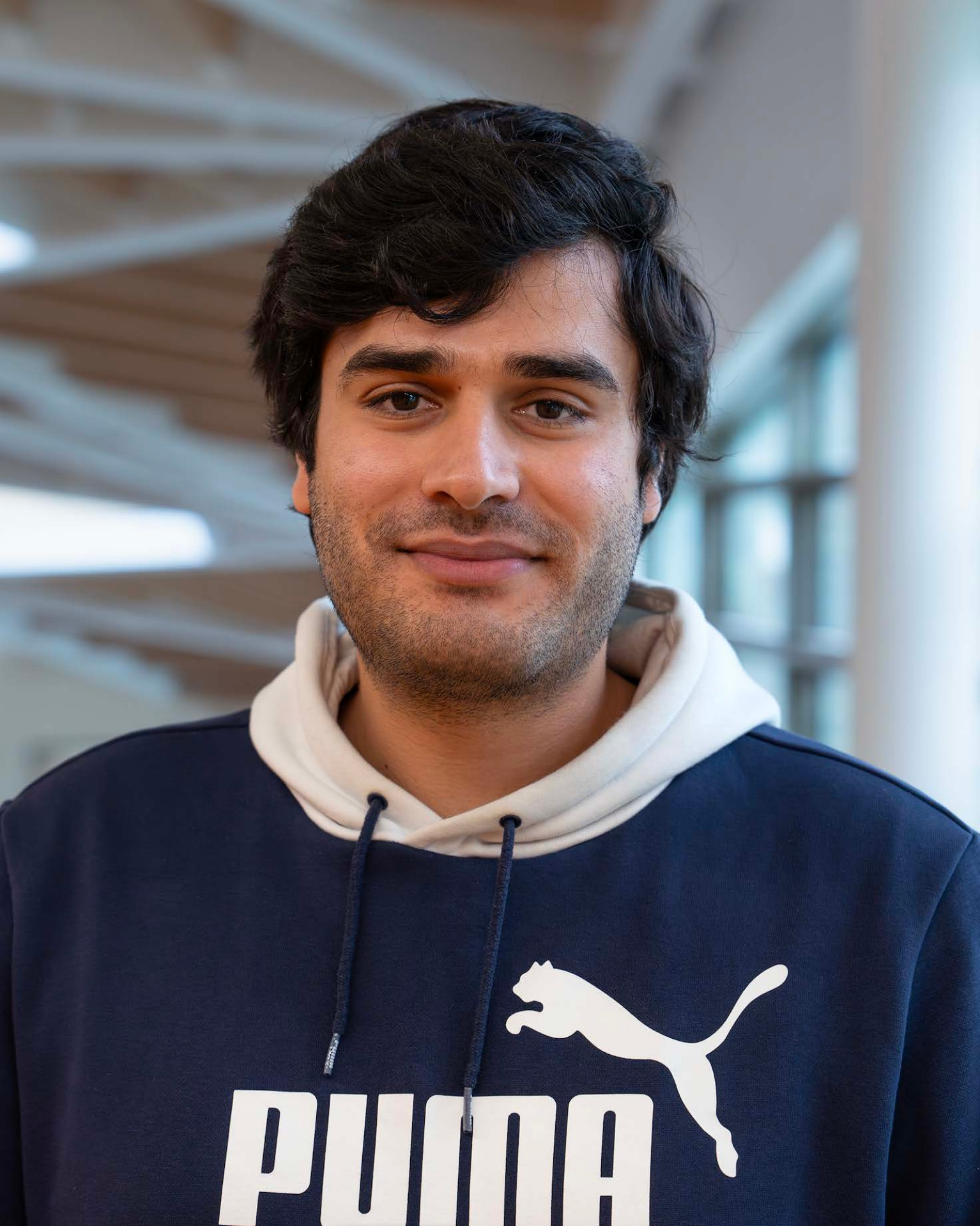}}]{Omid Akbarzadeh}~(Student Member, IEEE) is currently a PhD student in the School of Computing at Newcastle University, U.K. His academic journey commenced at Shiraz University, where he obtained a Bachelor of Science in Electrical and Electronic Engineering. Following this, he pursued a master's degree in Communications and Computer Network Engineering (CCNE) at the Polytechnic University of Turin, Italy (Politecnico di Torino). His research interests include safe cyber-physical systems, communication networks, data-driven approaches, and formal control.
\end{IEEEbiography}

\begin{IEEEbiography}[{\includegraphics[width=1in,height=1.3in,clip,keepaspectratio]{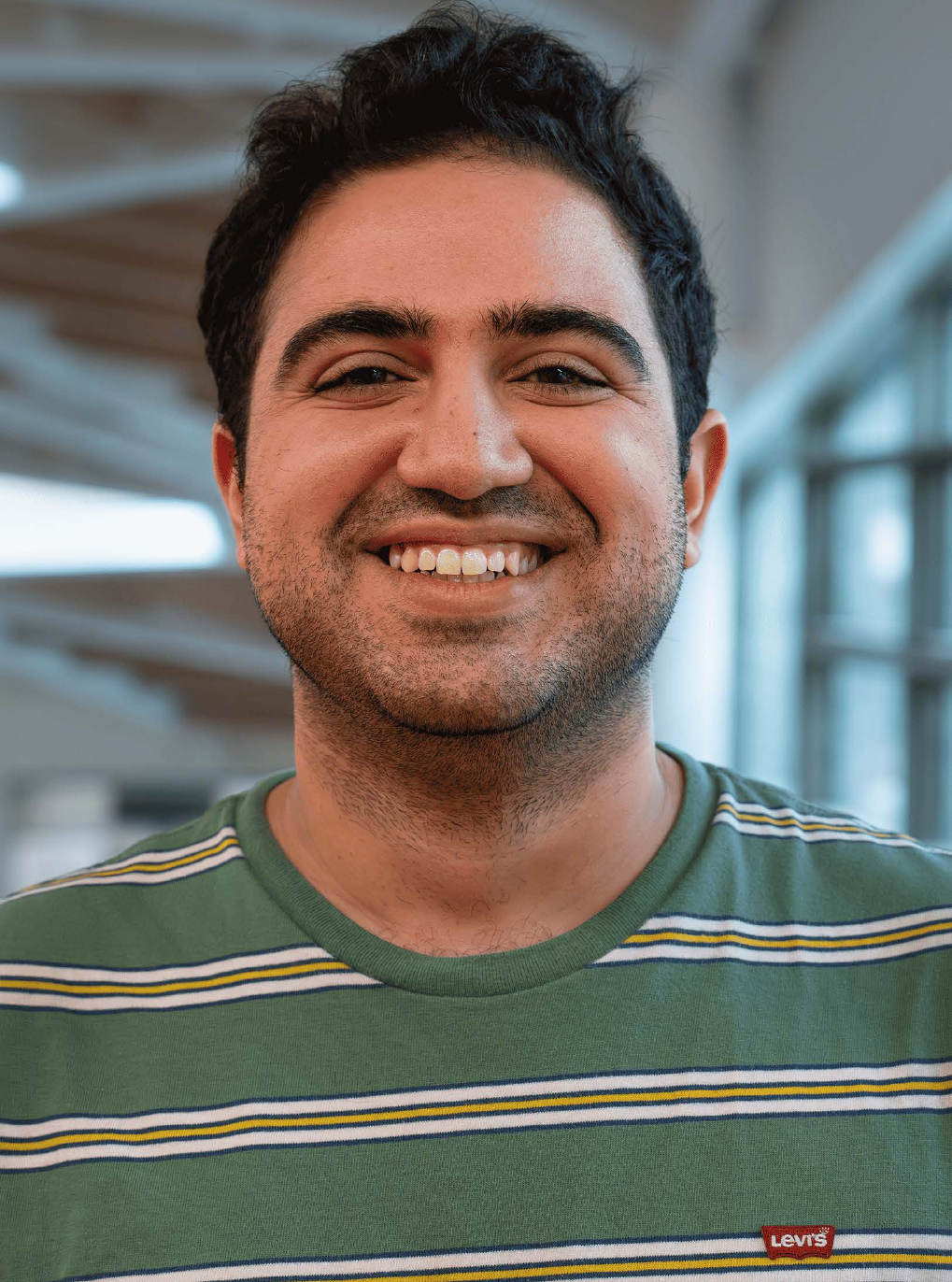}}]{Behrad Samari}~(Student Member, IEEE) received his B.Sc. and M.Sc. degrees in electrical engineering, control major, from K. N. Toosi University of Technology, Tehran, Iran, and University of Tehran (UT), Tehran, Iran, in 2019 and 2022, respectively. He is currently pursuing his PhD in the School of Computing at Newcastle University, U.K. He is the Best Repeatability Prize Finalist at the 8$^{\text{th}}$ IFAC Conference on Analysis and Design of Hybrid Systems (ADHS), 2024. His research interests include (nonlinear) control and system theory, data-driven approaches, and formal methods.
\end{IEEEbiography}

\begin{IEEEbiography}[{\includegraphics[width=1in,height=1.25in,clip,keepaspectratio]{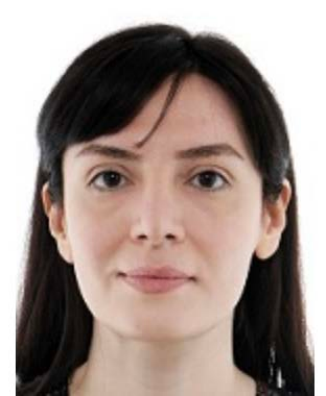}}]{Amy Nejati}~(M'18--SM'25) is an Assistant Professor in the School of Computing at Newcastle University in the United Kingdom. Prior to this, she was a Postdoctoral Associate at the Max Planck Institute for Software Systems in Germany from July 2023 to May 2024. She also served as a Senior Researcher in the Computer Science Department at the Ludwig Maximilian University of Munich (LMU) from November 2022 to June 2023. She received the PhD in Electrical Engineering from the Technical University of Munich (TUM) in 2023. She has received the B.Sc. and M.Sc. degrees both in Electrical Engineering. Her line of research mainly focuses on developing efficient (data-driven) techniques to design and control highly-reliable autonomous systems while providing mathematical guarantees. She was selected as one of the CPS Rising Stars 2024.\end{IEEEbiography}

\begin{IEEEbiography}[{\includegraphics[width=1in,height=1.25in,clip,keepaspectratio]{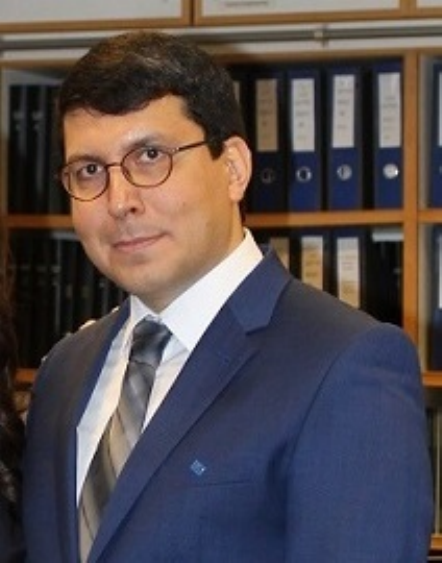}}]{Abolfazl Lavaei}~(M'17--SM'22) is an Assistant Professor in the School of Computing at Newcastle University, United Kingdom. Between January 2021 and July 2022, he was a Postdoctoral Associate in the Institute for Dynamic Systems and Control at ETH Zurich, Switzerland. He was also a Postdoctoral Researcher in the Department of Computer Science at LMU Munich, Germany, between November 2019 and January 2021. He received the Ph.D. degree in Electrical Engineering from the Technical University of Munich (TUM), Germany, in 2019. He obtained the M.Sc. degree in Aerospace Engineering with specialization in Flight Dynamics and Control from the University of Tehran (UT), Iran, in 2014. He is the recipient of several international awards in the acknowledgment of his work including  Best Repeatability Prize (Finalist) at the ACM HSCC 2025, IFAC ADHS 2024, and IFAC ADHS 2021, HSCC Best Demo/Poster Awards 2022 and 2020, IFAC Young Author Award Finalist 2019, and Best Graduate Student Award 2014 at University of Tehran with the full GPA (20/20). His line of research primarily focuses on the intersection of Formal Methods in Computer Science, Control Theory, and Data Science.
\end{IEEEbiography}

\end{document}